
\documentclass{article}

\usepackage{microtype}
\usepackage{graphicx}
\usepackage{subfigure}
\usepackage{booktabs} 

\usepackage{hyperref}



\usepackage[accepted]{icml2025/icml2025}

\usepackage{amsmath}
\usepackage{amssymb}
\usepackage{mathtools}
\usepackage{amsthm}
\usepackage{thmtools}
\usepackage{thm-restate}
\usepackage[capitalize,noabbrev]{cleveref}

\theoremstyle{plain}
\newtheorem{theorem}{Theorem}[section]

\newtheorem{lemma}[theorem]{Lemma}
\newtheorem{corollary}[theorem]{Corollary}
\theoremstyle{definition}
\newtheorem{definition}[theorem]{Definition}

\theoremstyle{remark}
\newtheorem{remark}[theorem]{Remark}
\newtheorem{claim}[theorem]{Claim}


\newcommand{\E}{\mathbf{E}}

\newcommand{\eps}{\varepsilon}

\newcommand{\EE}[2][]{%
  \if\relax\detokenize{#1}\relax
    \E\left[#2\right]%
  \else
    \E_{#1}\left[#2\right]%
  \fi
}



\newcommand{\calN}{\mathcal{N}}

\newlength{\ourAlgIndent}
\setlength{\ourAlgIndent}{9pt}

\DeclareMathOperator{\cost}{cost}

\newcommand{\sandeep}[1]{\textcolor{purple}{[sandeep\@ifnotempty{#1}{: #1}]}}

\newcommand{\R}{\mathbb{R}}
\newcommand{\Z}{\mathbb{Z}}
\newcommand{\N}{\mathbb{N}}

\newcommand{\ddim}{\operatorname{ddim}}
\newcommand{\radius}{\operatorname{radius}}
\newcommand{\opt}{\operatorname{opt}}

\newcommand{\dist}{\operatorname{dist}}
\newcommand{\divers}{\operatorname{div}}
\newcommand{\matching}{\opt_{\operatorname{max-match}}}
\newcommand{\tsp}{\opt_{\operatorname{max-tsp}}}

\usepackage[shortlabels]{enumitem}
\usepackage{xurl}
\usepackage{multirow}
\usepackage[normalem]{ulem}
\useunder{\uline}{\ul}{}

\usepackage[textsize=tiny]{todonotes}

\icmltitlerunning{Randomized Dimensionality Reduction for Euclidean Maximization and Diversity Measures}

\begin{document}

\twocolumn[
\icmltitle{Randomized Dimensionality Reduction for Euclidean Maximization and Diversity Measures}



\icmlsetsymbol{equal}{*}


\begin{icmlauthorlist}
\icmlauthor{Jie Gao}{Rutgers}
\icmlauthor{Rajesh Jayaram}{Google}
\icmlauthor{Benedikt Kolbe}{Bonn}
\icmlauthor{Shay Sapir}{Weizmann}
\icmlauthor{Chris Schwiegelshohn}{Aarhus}
\icmlauthor{Sandeep Silwal}{madison}
\icmlauthor{Erik Waingarten}{penn}
\end{icmlauthorlist}

\icmlaffiliation{Rutgers}{Department of Computer Science, Rutgers University, Piscataway, NJ, USA}
\icmlaffiliation{Weizmann}{Department of Computer Science, Weizmann Institute of Science, Israel}
\icmlaffiliation{Bonn}{Hausdorff Center for Mathematics, Lamarr Institute for Machine Learning and Artificial Intelligence, University of Bonn, Germany}
\icmlaffiliation{Aarhus}{Department of Computer Science, Aarhus University, Denmark}
\icmlaffiliation{madison}{Department of Computer Sciences, University of Wisconsin-Madison, USA}
\icmlaffiliation{Google}{Google Research, NYC, USA}
\icmlaffiliation{penn}{University of Pennsylvania, Philadelphia, USA}

\icmlcorrespondingauthor{Jie Gao}{jg1555@rutgers.edu}
\icmlcorrespondingauthor{Rajesh Jayaram}{rkjayaram@google.com}
\icmlcorrespondingauthor{Benedikt Kolbe}{bkolbe@uni-bonn.de}
\icmlcorrespondingauthor{Shay Sapir}{shay.sapir@weizmann.ac.il}
\icmlcorrespondingauthor{Chris Schwiegelshohn}{cschwiegelshohn@gmail.com}
\icmlcorrespondingauthor{Sandeep Silwal}{silwal@cs.wisc.edu}
\icmlcorrespondingauthor{Erik Waingarten}{ewaingar@seas.upenn.edu}

\icmlkeywords{Machine Learning, ICML}

\vskip 0.3in
]



\printAffiliationsAndNotice{$^*$Authors listed alphabetically.}  

\begin{abstract}
  Randomized dimensionality reduction is a widely-used algorithmic technique for speeding up large-scale Euclidean optimization problems. In this paper, we study dimension reduction for a variety of maximization problems, including max-matching, max-spanning tree, max TSP, as well as various measures for dataset diversity. For these problems, we show that the effect of dimension reduction is intimately tied to the \emph{doubling dimension} $\lambda_X$ of the underlying dataset $X$---a quantity measuring intrinsic dimensionality of point sets. Specifically, we prove that a target dimension of $O(\lambda_X)$ suffices to approximately preserve the value of any near-optimal solution,
  which we also show is necessary for some of these problems. This is in contrast to classical dimension reduction results, whose dependence increases with the dataset size $|X|$. We also provide empirical results validating the quality of solutions found in the projected space, as well as speedups due to dimensionality reduction.
\end{abstract}

\section{Introduction}

Dimensionality reduction is a key technique in data science and machine learning that allows one to transform a dataset of $n$ points in a $d$-dimensional space into a dataset where each point now lies in a much smaller $t$-dimensional space. This reduction in dimensionality occurs while (approximately) preserving the dataset's essential properties. The benefits of such reduction are apparent---the total storage of a $d$-dimensional dataset is $O(nd)$. After dimension reduction, the total storage is $O(nt)$.

Another aspect is running time. After a dimension reduction, we may replace dependencies on $d$ with a dependency on $t$. 
This seemingly addresses the curse of dimensionality which is used to describe the phenomenon that algorithms scale poorly in high dimensions.
Initially this seems very promising, as many problems admit a polynomial time approximation scheme (PTAS) in a low-dimensional setting \cite{BartalG21,CevallosEZ19,KolliopoulosR07}.
However, these algorithms typically have severe dependency on the dimension $d$ (e.g., could be doubly exponential). 
We are thus interested in reducing the dimension as much as possible, ideally even to constants.


Problems for which efficient algorithms are known to exist in low-dimensions, but are not known to exist in high-dimensions, include network design problems, such as the traveling salesperson \cite{Arora97,KolliopoulosR07,Shenmaier22}, and diversity maximization problems \cite{CevallosEZ19,CevallosEM18}, both of which are staples of data analysis and operations research. However, this discrepancy between low- and high-dimensional data exists primarily in the worst case. One can consider a beyond worst-case analysis by attempting to identify tractable inputs to the problem which are not necessarily low dimensional \cite{AwasthiBS10,OstrovskyRSS12}. 
One way of utilizing such properties is to design a dimension reduction to uncover such structures that might not be easily used in the original high dimension, see \cite{AwasthiS12,Cohen-AddadS17}.


Dimensionality reduction techniques can be broadly categorized into two distinct types: data-dependent and data-oblivious. Data-dependent techniques, like Principal Components Analysis (PCA) and t-SNE~\cite{JMLR:v9:vandermaaten08a}, aim to produce ``tailor-made'' low-dimensional representations via (frequently) computationally-intensive procedures (possibly with quadratic running times). The benefit of being data-dependent is that one can capture the non-worst-case nature of ``real world'' datasets; the downside is that data-dependent methods are computationally expensive, see \cite{BoutsidisMD09,CohenEMMP15,FeldmanSS20} for applications of PCA to clustering and subspace approximation.


In data-oblivious dimension reduction, the goal is to produce---without even looking at the dataset---a dimensionality-reducing map from $\mathbb{R}^d$ to $\mathbb{R}^t$. These are considerably more ``lightweight'' and simple to apply and indeed vital for streaming and distributed models, where the applied dimensionality reduction method is restricted to a small portion of the entire input. 
%
Indeed, the Johnson-Lindenstrauss (JL) transform~\cite{Johnson1984ExtensionsOL}, which can be realized by a $t \times d$ matrix of independent $\calN(0, 1/t)$ entries~\cite{IndykM98,DasguptaG03}, is among the most popular examples. 
Despite being data-oblivious, JL transforms are surprisingly powerful.
They preserve pairwise distances, and also center-based clustering \cite{MakarychevMR19, izzo2021dimensionality}, linkage clustering \cite{NarayananSIZ21}, subspaces \cite{CharikarWaingarten23}, and cuts \cite{chen2023streaming}, to name a few applications.

Among these works,~\cite{IndykN07,NarayananSIZ21} stand out in relating the performance of the JL transform 
to the \emph{doubling dimension} of the input dataset, incorporating elements of data-dependent and data-oblivious dimension reduction. On the one hand, the dimensionality-reducing map is still 
a JL transform;
it is data-oblivious, simple to apply, store and communicate. On the other hand, the analysis is data-dependent---it ties the performance of the map to a data-dependent quantity, the \emph{doubling dimension}, which captures the intrinsic dimensionality and non-worst-case nature of high-dimensional datasets. Formally, the doubling dimension is the smallest number $\lambda$ such that 
for every ball of radius $r$, the input points in that ball can
be covered by $2^\lambda$ balls of radius $r/2$ \cite{GKL03} (see also \cite{Clarkson99}).
For arbitrary worst-case instances, $\lambda\leq \log n$, as we can always cover $n$ points with $2^{\log n} = n$ balls for any choice of radius $r$. For any given instance, however, the quantity may be substantially smaller.  
Perhaps surprisingly, it is possible to prove data-dependent bounds on the performance of data-oblivious dimension reduction, deriving the benefits of both. 
%
%

So far, this type of analysis (incorporating the doubling dimension to data-oblivious dimension reduction) has only been done for facility location~\cite{NarayananSIZ21,HJKY_optimal_FL}, single-linkage clustering~\cite{NarayananSIZ21}, approximate nearest neighbors~\cite{IndykN07} and $k$-center \cite{JKS23_arxiv}.
%

We ask: \textit{For which problems can we design a data-oblivious dimension reduction with data-dependent target dimension? Can we identify a large collection of related geometric problems which admit such dimensionality reduction?}


\subsection{Our Contributions}\label{sec:contrib}

In this paper, we identify a large class of problems, with a special focus on \emph{network design} (maximum weight matching, max TSP, max spanning tree) and \emph{diversity maximization problems} (subgraph diversity maximization), for which we are able to obtain a data-oblivious dimension reduction with data-dependent target dimension. 
%
Maximum weight matching, max TSP and max spanning tree are natural network design problems where the weights describe profit or rewards on the edges. Maximum diversity problems arise in many application settings, from facility location to network analysis, and have been a topic of study in both computer science~\cite{indyk2014composable} and operational research~\cite{Marti2022-cg}. In general, such problems focus on identifying a subset of elements from a given set such that a diversity/distance measure is maximized. Different problems in this family choose ways to measure the overall diversity measure. Possibly the most notable problem is the dispersion maximization, which maximizes the smallest pairwise distance of the selected elements. But other measures of using the sum of edge lengths in a subgraph (clique, star, cycle, tree, matching, and pseudoforest) have also been studied~\cite{indyk2014composable}.
Taking diversity into account in data analysis is also gaining traction in machine learning~\cite{Gong2018-wc} especially in generative models~\cite{Eigenschink2023-ya,Naeem2020-mb} and active learning~\cite{Melville2004-tl, Yang2015-kd}.

Given input dataset with doubling dimension $\lambda$, we show that reducing the dimension to roughly $t=O(\lambda)$, using a matrix of i.i.d. Gaussians drawn from $N(0,\tfrac{1}{t})$, preserves solutions up to a factor of $1+\eps$ on the problems in \Cref{table:summary} found \emph{by any algorithm} with high probability. 
Furthermore, we also show that the dependence on $\lambda$ in the target dimension is tight for all of these problems --- there are datasets for which reducing the dimension below $\Omega(\lambda)$ introduces large errors.

\begin{table}[t]
\centering
\begin{small}
{\renewcommand{\arraystretch}{1.2}
\begin{tabular}{c|c|c}
\textbf{Problem}                      & \textbf{Target Dimension}        & 
\textbf{Reference}               \\ \hline
Max weight matching
      & $O(\eps^{-2}\lambda\log\tfrac{1}{\eps})$ 
      & 
Thm \ref{thm:ddim_reduction_max_matching}
                                      \\ 
 & $\Omega(\lambda)$ & Thm \ref{thm:matching_sqrt(2)_lower_bound} \\ \hline                                      
Max TSP       & $O(\eps^{-2}\lambda\log\tfrac{1}{\eps})$ & 
Thm \ref{thm:ddim_reduction_max_matching}                                           \\   
 & $\Omega(\lambda)$ & Thm \ref{thm:lower_max_TSP}
\\ \hline
Max 
$k$-hypermatching & $O(\eps^{-2}\lambda k^2\log\tfrac{k}{\eps})$ & 
Thm \ref{thm:ddim_max_hypermatching} \\ \hline
Max spanning tree      & $O(\eps^{-2}\lambda\log\tfrac{1}{\eps})$ & 
Thm \ref{thm:max_spanning_tree}                    \\ 
 & $\Omega(\lambda)$ & Thm \ref{thm:lower_bound_large_value}\\
\hline
Max $k$-coverage            & $O(\eps^{-2}\lambda\log\tfrac{1}{\eps})$    & 
Cor \ref{thm:max_k_coverage}          \\
 & $\Omega(\lambda)$ & Thm \ref{thm:lower_bound_large_value}
\\ \hline
Subgraph diversity 
                 & $O(\frac{1}{\eps^{2}}(\lambda\log\tfrac{1}{\eps} + \log k))$     
                 & Cor \ref{corollary:list_sepideh_remote_subgraphs}\\
  & $\Omega(\lambda)$ & Thm \ref{thm:lower_bound_remote_subgraph}
\end{tabular}
}
\caption{\label{table:summary} Summary of results for dimension reduction with $1+\eps$ distortion using a matrix of i.i.d. Gaussians. The doubling dimension of the input dataset is denoted by $\lambda$. }
\end{small}
\end{table}

A particularly nice feature of our bounds is that the guarantee applies to \emph{all} candidate solutions. We may thus use \emph{any algorithm} in the low dimensional space in post processing. We are only aware of a comparable result for nearest neighbor search \cite{IndykN07}. Previous work in this direction only preserved the value of the optimum (for MST~\cite{NarayananSIZ21}) or restricted the candidate solutions (for facility location \cite{NarayananSIZ21,HJKY_optimal_FL} and $k$-center~\cite{JKS23_arxiv}). 

\paragraph{Threshold Phenomena}
As another contribution, we identify an intriguing sharp threshold phenomenon for maximum matching, and we believe a similar phenomena also occurs for some of the other problems (e.g., maximum TSP).
While reducing the dimension to $O(\lambda)$ (or $O(\log n)$ in general) provides a $(1+\eps)$-approximation of maximum matching for any constant $\eps>0$; there are datasets such that reducing the dimension slightly below this bar causes a distortion of at least $\sqrt{2}$ (by \Cref{thm:matching_sqrt(2)_lower_bound}).
Curiously, the approximation factor remains at most $\sqrt{2}$ all the way down to target dimension $O(1)$ (see~\Cref{thm:sqrt2_for_matching_Euclidean_1dim}).
We are not aware of any previous work establishing similar phenomena.

\subsection{Preliminaries}
For simplicity, we only consider random linear maps defined by a matrix of Gaussians as in many prior works.
\begin{definition}
A Gaussian JL map is a $t\times d$ matrix
with i.i.d. entries drawn from $N(0,\tfrac{1}{t})$.    
\end{definition}

The following is a crucial lemma in our analysis, which bounds the expansion of balls under a Gaussian JL map.

\begin{lemma}[Lemma 4.2 in~\cite{IndykN07}]
\label{lem:IN07_doubling_radius}
Let $X\subset B(\vec{0},1)$ be a subset of the Euclidean unit ball.
Then there are universal constants $c,C>0$ such that for $t>C \cdot \ddim(X) +1, D>1$, and a Gaussian JL map $G\in\R^{t\times d}$, 
$\Pr (\exists x\in X, \|Gx\| >D) \leq e^{-ct D^2}$.
\end{lemma}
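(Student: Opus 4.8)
The plan is a net-and-chaining argument. Since $X$ may be infinite we cannot union bound over it directly, so we discretize $X$ at a geometric sequence of scales, express every point as a telescoping sum of ``net links'' across scales, and union bound over links. The only role of the doubling dimension is to bound the number of links at scale $2^{-i}$ by $2^{O(\lambda i)}$, and this exponential growth in $i$ is tamed by the sub-Gaussian tail of the Gaussian map, which is strong because the image has $t$ coordinates, provided $t \gtrsim \lambda$.

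The atomic ingredient is the scalar tail bound: for a fixed unit vector $u$, $\|Gu\|^2$ is distributed as $\tfrac1t\chi^2_t$, so standard chi-squared concentration gives an absolute constant $c_0>0$ with $\Pr(\|Gu\|>s)\le e^{-c_0 t s^2}$ for every $s$ above a fixed absolute threshold $s_0$. Writing $\lambda=\ddim(X)$ and using $X\subset B(\vec 0,1)$, fix for each $i\ge 0$ a $2^{-i}$-net $N_i\subseteq X$; iterating the doubling property gives $|N_i|\le 2^{O(\lambda i)}$, and for every $u\in N_i$ at most $2^{O(\lambda)}$ points of $N_{i+1}$ lie within distance $2^{-i+1}$ of $u$. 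For $x\in X$ let $x_i$ be a nearest point of $N_i$, so $\|x-x_i\|\le 2^{-i}$ and $\|x_{i+1}-x_i\|\le 2^{-i+1}$; telescoping $x=x_0+\sum_{i\ge 0}(x_{i+1}-x_i)$ and using the triangle inequality,
\[
  \|Gx\|\ \le\ \|Gx_0\|\ +\ \sum_{i\ge 0} 2^{-i+1}\,\bigl\|\,G\tfrac{x_{i+1}-x_i}{\|x_{i+1}-x_i\|}\,\bigr\| .
\]

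Now assign to scale $i$ the threshold $b_i=D+K\sqrt{\lambda i/t}$ (and $b_{-1}=D+K\sqrt{\lambda/t}$ for the base net $N_0$), where $K$ is a suitable absolute constant. Union bounding the ``bad link'' event over the $2^{O(\lambda i)}$ links at scale $i$ costs $2^{O(\lambda i)}e^{-c_0 t b_i^2}$, and since $b_i^2\ge D^2+K^2\lambda i/t$ this is at most $2^{-i-2}e^{-c_0 t D^2}$ once $K$ is large enough that $c_0 K^2\lambda i$ absorbs the $O(\lambda i)+O(i)$ coming from the net size and the geometric weight (here $\lambda\ge1$); the base net similarly costs at most $\tfrac12 e^{-c_0 t D^2}$. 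Off the union of these bad events — of total probability at most $e^{-c_0 t D^2}$ — every link at scale $i$ satisfies $\|G(x_{i+1}-x_i)\|\le 2^{-i+1}b_i$ and $\|Gx_0\|\le b_{-1}$, so the display gives
\[
  \sup_{x\in X}\|Gx\|\ \le\ b_{-1}+\sum_{i\ge 0} 2^{-i+1}b_i\ \le\ O(D)+O\!\bigl(\sqrt{\lambda/t}\bigr)\ \le\ C' D,
\]
using that $\sum_i 2^{-i+1}\sqrt i$ converges, that $t>C\lambda$ controls the correction term, and that $D>1$ lets it be folded into one absolute constant $C'$. Applying this with $D/C'$ in place of $D$ yields $\Pr(\exists x\in X:\|Gx\|>D)\le e^{-(c_0/(C')^2)\,tD^2}$, i.e.\ the claim with $c=c_0/(C')^2$.

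The step I expect to be the main obstacle is exactly this quantitative tradeoff: because there are $2^{\Theta(\lambda i)}$ links at scale $i$, the per-scale threshold must grow like $\sqrt{\lambda i/t}$, and one must verify both that the resulting series still sums to $O(D)$ (so the chain inflates the target by only a constant factor) and that the hypothesis $t>C\lambda$ — rather than a $\log$-factor more — is precisely what makes the $\sqrt{\lambda/t}$ correction negligible; this is where using the doubling dimension in place of a crude $\log n$ packing bound is essential. A secondary, minor point is the regime where $D$ is barely above $1$, below the threshold $s_0$ at which the scalar tail becomes effective: the bound is used (and is cleanest) for $D$ at least a fixed constant, and otherwise one repeats the argument with adjusted constants. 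Finally, as a slicker alternative that avoids tracking the chaining constants, one can note that $\sup_{x\in X}\|Gx\|$ is a $t^{-1/2}$-Lipschitz function of the i.i.d.\ Gaussian entries of $G$, so Gaussian concentration together with a Dudley entropy-integral bound $\int_0^1\sqrt{\log N(X,\delta)}\,d\delta = O(\sqrt\lambda)$ shows its expectation is $O(1)$ whenever $t\gtrsim\lambda$, again yielding a sub-Gaussian tail of the required form.
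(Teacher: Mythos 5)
The paper does not prove this lemma at all: it is imported verbatim as Lemma~4.2 of \cite{IndykN07}, so there is no in-paper argument to compare against. Your chaining proof is, in substance, the standard (and essentially the original Indyk--Naor) argument: geometric nets whose cardinalities are controlled by the doubling dimension, a telescoping decomposition into links, per-scale thresholds $D+K\sqrt{\lambda i/t}$ whose extra $\sqrt{\lambda i/t}$ term absorbs the $2^{O(\lambda i)}$ union bound, and a final rescaling of $D$ by the chaining constant. The quantitative bookkeeping is right: the series $\sum_i 2^{-i}\sqrt{i}$ converges, the correction term is $O(\sqrt{\lambda/t})=O(1/\sqrt{C})$, and the hypothesis $t>C\lambda$ is exactly what makes this work without any $\log$ overhead. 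The one point worth being explicit about is the regime $D$ close to $1$, which you correctly flag as delicate: taken literally for all $D>1$, the bound $e^{-ctD^2}$ with a universal $c$ cannot hold even for a single point of unit norm (as $D\downarrow 1$ the failure probability tends to a constant while the claimed bound is $e^{-ct}$), so the statement must be read with $D$ at least a fixed constant (the original source effectively uses $D\ge 2$), after which your ``adjust the constants'' remark is exactly the right fix. With that caveat understood, your proof is correct and complete; your closing alternative via Lipschitz concentration of $\sup_{x}\|Gx\|$ plus a Dudley entropy integral is also a valid, slicker route to the same tail bound.
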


\section{Maximum Matching and Max-TSP for Doubling Sets}\label{sec:max_matching_doubling}
In this section, we prove dimensionality reduction results for maximum matching and maximum traveling salesman problem.
Denote by $\matching$ and $\tsp$ the optimum of maximum matching and maximum TSP, respectively.

\begin{theorem}\label{thm:ddim_reduction_max_matching}
    Let $0<\epsilon<1$ and $d,\lambda\in \N$, 
    and Gaussian JL map $G\in \R^{t\times d}$
    with suitable $t=O(\epsilon^{-2}\lambda\log\tfrac{1}{\epsilon})$. Then for every set $P\subset\R^d$ with $\ddim(P)=\lambda$, with probability at least $2/3$, 
    every $(1+\eps)$-approximate solution for $\matching(G(P))$ or $\tsp(G(P))$ is a $(1+O(\eps))$-approximate solution to $\matching(P)$ or $\tsp(P)$, respectively.
    Moreover, $O(1)$-approximate solutions are preserved as well.
\end{theorem}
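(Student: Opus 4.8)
The plan is to prove two one-sided guarantees and combine them. First I would normalize: since all relevant quantities scale linearly with $P$ and approximation ratios are scale-invariant, assume $\operatorname{diam}(P)=1$, so $\matching(P)\ge1$ and $\tsp(P)\ge1$ (a single diametral edge, resp.\ either arc of any Hamiltonian cycle between the diametral pair, already witnesses this). The two guarantees are: \textbf{(A)} with probability $\ge9/10$, \emph{every} $x,y\in P$ satisfy $\|G(x)-G(y)\|\le(1+\eps)\|x-y\|+\tfrac{\eps}{10n}$ --- a ``no-expansion'' bound for all pairs; and \textbf{(B)} with probability $\ge9/10$, a \emph{fixed} optimum $S^\star$ of $\matching(P)$ (resp.\ $\tsp(P)$) satisfies $\cost_{G(P)}(S^\star)\ge(1-\eps)\cost_P(S^\star)$, hence $\matching(G(P))\ge(1-\eps)\matching(P)$ (resp.\ $\tsp(G(P))\ge(1-\eps)\tsp(P)$). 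A union bound over these $O(1)$ events keeps the success probability $\ge2/3$, and then the theorem is bookkeeping: for a matching $M$ with $\cost_{G(P)}(M)\ge\tfrac{1}{1+\eps}\matching(G(P))$, summing \textbf{(A)} over the $\le n/2$ edges of $M$ gives $\cost_{G(P)}(M)\le(1+\eps)\cost_P(M)+\tfrac{\eps}{20}$, so combining with \textbf{(B)} (and $\matching(P)\ge1$, so $\tfrac{\eps}{20}\le\tfrac{\eps}{20}\matching(P)$) yields $\cost_P(M)\ge\tfrac{1}{1+\eps}\bigl(\tfrac{1-\eps}{1+\eps}-\tfrac{\eps}{20}\bigr)\matching(P)=(1-O(\eps))\matching(P)$; the identical computation handles $\tsp$ (tours have $\le n$ edges), and with $\tfrac1c$ in place of $\tfrac1{1+\eps}$ it shows any $c=O(1)$-approximate solution of $G(P)$ is $O(c)$-approximate for $P$.

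For \textbf{(A)} I would threshold at $\tau:=\tfrac{\eps}{20n}$. Pairs with $\|x-y\|\le\tau$ are dispatched by a single application of \Cref{lem:IN07_doubling_radius}: with $D=2$ on $X:=\tfrac1\tau\bigl((P-P)\cap B(\vec0,\tau)\bigr)\subset B(\vec0,1)$, whose doubling dimension is $O(\lambda)$ because $\ddim(P-P)=O(\ddim(P))$ (a standard fact about difference sets), this gives, except with probability $e^{-\Omega(t)}$, that $\|G(x)-G(y)\|\le2\tau=\tfrac{\eps}{10n}$ for all such pairs. For pairs with $\|x-y\|\in(r/2,r]$ at a dyadic $r\in(\tau,1]$ --- only $O(\log\tfrac n\eps)$ scales --- I use a net at that scale: take a maximal $(c\eps r)$-packing $N_r$ of $(P-P)\cap B(\vec0,r)$, which has size $(O(1/\eps))^{O(\lambda)}$ (again by $\ddim(P-P)=O(\lambda)$) and is a $(c\eps r)$-net; a standard Gaussian-JL tail bounds $\|Gp\|\le(1+\tfrac\eps4)\|p\|$ for each $p\in N_r$ except with probability $e^{-\Omega(\eps^2 t)}$, and \Cref{lem:IN07_doubling_radius} at scale $c\eps r$ bounds $\|Gw\|\le2c\eps r$ for every $w\in P-P$ with $\|w\|\le c\eps r$; decomposing $x-y=p+w$ with $p$ the nearest net point (so $\|w\|\le c\eps r$ and $\|p\|\le(1+\tfrac\eps4)\|x-y\|$, using $r\le2\|x-y\|$) yields $\|G(x)-G(y)\|\le(1+\eps)\|x-y\|$ for $c$ a small constant. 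A union bound over the $O(\log\tfrac n\eps)$ scales and the $(O(1/\eps))^{O(\lambda)}$ events per scale is below $1/10$ once $t=\Omega\bigl(\eps^{-2}(\lambda\log\tfrac1\eps+\log\log\tfrac n\eps)\bigr)$; the $\log\log$ term is dominated by $\lambda\log\tfrac1\eps$ outside pathological aspect ratios and is absorbed into the ``suitable'' $t$ (it disappears under, e.g., polynomially bounded aspect ratio), recovering $t=O(\eps^{-2}\lambda\log\tfrac1\eps)$.

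For \textbf{(B)}, write $G=H/\sqrt t$ with $H$ an i.i.d.\ $N(0,1)$ matrix and let $v_1,\dots,v_m$ be the edge vectors of $S^\star$, $S:=\sum_i\|v_i\|=\cost_P(S^\star)$. The crucial observation is that $H\mapsto\sum_i\|Hv_i\|$ is $S$-Lipschitz in Frobenius norm (since $\bigl|\|Hv_i\|-\|H'v_i\|\bigr|\le\|(H-H')v_i\|\le\|H-H'\|_F\|v_i\|$), so by Gaussian concentration $\cost_{G(P)}(S^\star)=\tfrac1{\sqrt t}\sum_i\|Hv_i\|$ falls below its mean by more than $\tfrac\eps2 S$ with probability at most $e^{-\Omega(\eps^2 t)}$, and its mean is $\tfrac{\mu_t}{\sqrt t}S$ with $\mu_t=\E\|N(\vec0,I_t)\|\ge(1-\tfrac\eps2)\sqrt t$ for $t\gtrsim1/\eps$; hence $\cost_{G(P)}(S^\star)\ge(1-\eps)S$ with probability $\ge9/10$, and likewise for the fixed optimal tour. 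This step uses \emph{no} union bound --- exploiting that the objective is a sum of norms --- which is exactly why the target dimension avoids any $\log n$ factor, and it is the main obstacle together with the fact that \textbf{(A)} must hold against an adversarially chosen projected-space solution and therefore for all pairs simultaneously: reconciling these two requirements inside $t=O(\eps^{-2}\lambda\log\tfrac1\eps)$ is where the doubling dimension --- via $\ddim(P-P)=O(\lambda)$ and \Cref{lem:IN07_doubling_radius} --- is indispensable rather than merely convenient.
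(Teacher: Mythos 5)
Your overall architecture---a one-sided no-expansion bound valid for all candidate solutions, plus preservation of one fixed optimum, combined by a union bound---matches the paper's, and your step \textbf{(B)} is fine (the paper uses an even simpler expectation-plus-Markov argument there). The proof breaks at step \textbf{(A)}. First, the auxiliary fact you invoke, $\ddim(P-P)=O(\ddim(P))$, is false: take $P=\bigcup_{i=1}^{n/2}\{c_i,\,c_i+\tfrac1n e_i\}$ with $c_i=\tfrac{2i}{n}e_0$ and $e_0,e_1,\dots,e_{n/2}$ orthonormal. Then $\ddim(P)=O(1)$, but $(P-P)\cap B(\vec{0},\tfrac1n)$ contains $n/2$ pairwise-equidistant points, so its doubling dimension is $\Theta(\log n)$, and your nets $N_r$ have size $n^{\Omega(1)}$ rather than $(O(1/\eps))^{O(\lambda)}$. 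Second, and more fundamentally, claim \textbf{(A)} itself is false for $t$ independent of $n$: in the same example the $n/2$ difference vectors $\tfrac1n e_i$ are orthogonal, so $\max_i\|G(\tfrac1n e_i)\|\ge\tfrac1n\bigl(1+\Omega(\sqrt{\log n/t})\bigr)$ with constant probability, and your required bound $\|Gx-Gy\|\le(1+\eps)\|x-y\|+\tfrac{\eps}{10n}$ therefore forces $t=\Omega(\eps^{-2}\log n)$. So the obstruction is not the $\log\log$ term you flag as a technicality; a uniform per-pair guarantee with total additive budget $\eps\cdot\operatorname{diam}(P)$ is simply too strong a statement.

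The paper's \Cref{lem:matching_additive_eps_opt} asks for much less: the no-expansion bound is aggregated over each matching, with additive error $\eps\cdot\matching(P)$, which can be as large as $\Theta(n)\cdot\operatorname{diam}(P)$ (and is $\Theta(n)$ in the example above, comfortably absorbing the distortion of the many short orthogonal edges). The two ideas your proposal is missing are (i) the structural \Cref{lem:switching_max_matching}: in a \emph{maximum} matching, all edges of length $\le r/4$ lie in a single ball of radius $r/2$, which produces a nested sequence of balls $B_0\supset B_1\supset\cdots$ with $\sum_{p}r_p\le 8\opt(P)$, so the additive error incurred at scale $r_i$ (proportional to $|P_i|r_i$) is charged against $\opt(P)$ rather than against $\operatorname{diam}(P)$; and (ii) an expected-distortion argument in place of a union bound over scales: the per-ball distortion level $\alpha_i$ satisfies $\E[\alpha_i]=O(1)$, one bounds $\E\bigl[\sum_i\alpha_i|P_i|r_i\bigr]=O(\opt(P))$, and a single application of Markov's inequality finishes---this is what removes every trace of $n$ from the target dimension. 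Without both ingredients the claimed $t=O(\eps^{-2}\lambda\log\tfrac{1}{\eps})$ is out of reach.
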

The proof of the theorem is by the following lemma.
\begin{lemma}\label{lem:matching_additive_eps_opt}
    Let $0<\epsilon<1$ and $d,\lambda\in \N$, and a Gaussian JL map $G\in \R^{d\times t}$ with
    suitable $t=O(\epsilon^{-2}\lambda\log\tfrac{1}{\epsilon})$.
    Then, for every set $P\subset\R^d$ with $\ddim(P)=\lambda$,
    with probability $\geq 9/10$, we have, for every matching $M$ of $P$,
    \[
    \sum_{\{p,q\}\in M} \|Gp-Gq\|\leq \sum_{\{p,q\}\in M} \|p-q\| + \eps\cdot \matching(P).
    \]
\end{lemma}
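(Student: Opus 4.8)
\emph{The plan} is to decompose a matching according to the dyadic length scale of its edges. By homogeneity of both sides we may rescale $P$ so that $\matching(P)=1$; since any single pair $\{p,q\}$ is a matching, this forces $\diam(P)\le 1$, so after a translation $P\subseteq B(\vec 0,1)$. Fix $t=C\eps^{-2}\lambda\log\tfrac1\eps$ for a large universal constant $C$, and assume $\eps$ is below a fixed constant. Call $\{p,q\}\subseteq P$ a \emph{scale-$j$ pair} if $\|p-q\|\in(2^{-j},2^{-j+1}]$. Two elementary facts about a matching $M$ of $P$ are used repeatedly: $\sum_{\{p,q\}\in M}\|p-q\|\le\matching(P)=1$, and for every $j$ at most $2^{j}$ edges of $M$ are scale-$j$ pairs (they are vertex-disjoint and each has length exceeding $2^{-j}$). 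The three steps are: (i) show that, with high probability, all pairs at any single fixed scale are expanded by at most $1+\eps$; (ii) argue that only $O(\log\tfrac1\eps)$ scales need this treatment, and bound the contribution of the finer edges by $O(\eps)\matching(P)$ separately; (iii) sum over $M$, obtaining $\sum_{\{p,q\}\in M}\|Gp-Gq\|\le(1+\eps)\sum_{\{p,q\}\in M}\|p-q\|+O(\eps)\le\sum_{\{p,q\}\in M}\|p-q\|+O(\eps)\cdot\matching(P)$.

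\emph{Preserving a single scale.} Fix $j$ and let $D_j:=\{p-q:\{p,q\}\text{ a scale-}j\text{ pair}\}\subseteq(P-P)\cap\bigl(B(\vec 0,2^{-j+1})\setminus B(\vec 0,2^{-j})\bigr)$. The difference set $P-P$ has doubling dimension $O(\lambda)$, hence so does $D_j$, and $D_j$ has diameter $O(2^{-j})$; therefore a net $Y_j$ of $D_j$ at resolution $\eps 2^{-j}$ has aspect ratio $O(1/\eps)$ and so $|Y_j|\le(1/\eps)^{O(\lambda)}$, a bound independent of $j$ and $|P|$ — this is the crucial gain from the doubling hypothesis. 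A union bound over $Y_j$ of the standard one-vector Gaussian-JL estimate gives that, except with probability $(1/\eps)^{O(\lambda)}e^{-ct\eps^{2}}$, every $y\in Y_j$ has $\|Gy\|\le(1+\tfrac\eps 2)\|y\|$. For any $v\in D_j$, taking a nearest net point $y$ we have $\|v-y\|\le\eps 2^{-j}\le\eps\|v\|$, and $v-y$ lies in a ball of radius $\eps 2^{-j}$ inside a set of doubling dimension $O(\lambda)$; \Cref{lem:IN07_doubling_radius} (rescaled by $1/(\eps 2^{-j})$) with $D=2$ shows that, except with probability $e^{-\Omega(t)}$, $\|G(v-y)\|\le 2\eps 2^{-j}\le 2\eps\|v\|$ for all such $v$ simultaneously. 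Combining, $\|Gv\|\le\|Gy\|+\|G(v-y)\|\le(1+O(\eps))\|v\|$ for every scale-$j$ pair. Thus a single scale fails with probability at most $(1/\eps)^{O(\lambda)}e^{-ct\eps^{2}}+e^{-\Omega(t)}$, which is below any prescribed constant once $t=\Omega(\eps^{-2}\lambda\log\tfrac1\eps)$.

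\emph{Combining scales is the main obstacle.} The single-scale bound is cheap, but a matching can draw on edges at many more than $O(\log\tfrac1\eps)$ scales — as many as $\Theta(\log(\text{aspect ratio of }P))$ — and a naive union bound over all of them would reintroduce a dependence on $|P|$. The crux is therefore to apply the scale-by-scale treatment only to the scales $j\le j^{\star}$ with $j^{\star}=O(\log\tfrac1\eps)$ and to bound the total expansion coming from edges of $M$ of scale $>j^{\star}$ by $O(\eps)\matching(P)$. For these sub-threshold edges one can apply \Cref{lem:IN07_doubling_radius} a single time to the doubling, $O(\lambda)$-dimensional set $2^{j^{\star}}\bigl((P-P)\cap B(\vec 0,2^{-j^{\star}})\bigr)$ to control the projected length of every sub-threshold difference simultaneously, and then telescope against $\matching(P)$ using that $M$ has at most $2^{j}$ edges of each scale $j$ and total weight at most $\matching(P)$. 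Making this last estimate output $O(\eps)\matching(P)$ with $j^{\star}$ independent of $|P|$ is the delicate step, and the one I expect to be hardest; this is exactly where the matching value bound is exploited, and it rests on the single-scale preservation above together with the ball-expansion estimate of \Cref{lem:IN07_doubling_radius}.
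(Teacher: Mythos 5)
Your single-scale step is sound, and your net-of-the-difference-set device is a clean alternative to the paper's localized nets: since $\ddim(P-P)=O(\lambda)$ and the scale-$j$ difference vectors lie in a ball of radius $2^{-j+1}$, an $\eps 2^{-j}$-net of them indeed has size $(1/\eps)^{O(\lambda)}$ independent of $j$ and $|P|$. The gap is exactly where you flag it: the sub-threshold scales, and the resolution you sketch does not close. Applying \Cref{lem:IN07_doubling_radius} once to $2^{j^\star}\bigl((P-P)\cap B(\vec 0,2^{-j^\star})\bigr)$ only yields $\|Gv\|\le D\,2^{-j^\star}$ for every short difference $v$, a bound blind to the true length of $v$; with up to $n/2$ sub-threshold edges the guaranteed total projected length is only $O(n\,2^{-j^\star})$, so you would need $2^{-j^\star}=O(\eps/n)$, i.e.\ $j^\star=\Omega(\log(n/\eps))$, reintroducing the dependence on $|P|$ you set out to avoid. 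Applying it scale by scale instead gives only a multiplicative $O(1)$ expansion on the fine scales, and these can carry essentially all of the weight of $M$ (e.g.\ $n/2$ pairs each at distance $2/n$), producing an excess of $\Theta(1)\cdot\matching(P)$ rather than $O(\eps)\cdot\matching(P)$; refining every fine scale with its own net runs into a union bound over $\Theta(\log(\textup{aspect ratio}))$ scales whose total failure probability is not a constant. The count ``at most $2^j$ scale-$j$ edges'' cannot rescue this: it only says the original lengths sum to at most $\matching(P)$, with no decay across scales to telescope against.

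The paper escapes this by not decomposing the arbitrary matching $M'$ by edge length at all. It builds a nested hierarchy of balls $B_0\supset B_1\supset\cdots$ adapted to the \emph{maximum} matching via the switching argument of \Cref{lem:switching_max_matching} (the short edges of an optimal matching must all sit inside one ball of half the radius, else swapping partners would improve it), assigns each point $p$ the radius $r_p$ of the last ball containing it, and proves $\sum_{p}r_p\le 8\opt(P)$ (\Cref{claim:sum_r_p_max_match}). The per-level distortion is a random variable $\alpha_i$ with $\E[\alpha_i]=O(1)$, and the aggregate error budget $X=\sum_i\alpha_i|P_i|r_i$ satisfies $\E[X]=O(\opt(P))$, so a single application of Markov's inequality replaces any union bound over the unboundedly many levels; each edge of an arbitrary $M'$ is then charged $O(\eps\alpha_{i_p}r_p)$ against $X$. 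Some device of this kind --- a charging scheme tied to the optimal matching rather than to $M'$, combined with an expectation-plus-Markov argument over the levels --- is the ingredient missing from your write-up.
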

\begin{proof}[Proof of \Cref{thm:ddim_reduction_max_matching}.]
    Notice that for both problems, $\opt(G(P))\geq (1-\eps)\opt(P)$ w.h.p. by the following. 
    Consider an optimal solution $S$ in $P$ (where $S$ is either a perfect matching or a tour of $P$). We have 
    \[\opt(G(P))\geq \cost(G(S))=\sum_{\{p,q\}\in S}\|Gp-Gq\|,\]
    which is w.h.p. $\geq (1-\eps)\sum_{\{p,q\}\in S}\|p-q\|$ (e.g., by analyzing the expectation and applying Markov's inequality).
    Therefore, the result for maximum matching is immediate from \Cref{lem:matching_additive_eps_opt}.

    For maximum TSP, consider a tour $S$ of $G(P)$.
    It decomposes to three matchings $M_1,M_2,M_3$ (in fact, if $n$ is even then $M_3$ is the empty set, and otherwise it contains a single edge).
    Thus,
    \begin{align*}
        &\cost(G(S)) = \sum_{i=1}^3\sum_{\{p,q\}\in M_i}\|Gp-Gq\| \\
        \leq & \sum_{i=1}^3\sum_{\{p,q\}\in M_i}\|p-q\| +3\eps\cdot\matching(P)\\
        = &\cost(S)+3\eps\cdot\tsp(P).
    \end{align*}
    Rescaling $\eps$ concludes the proof.
\end{proof}
Our result for max-matching generalizes to hypermatching, where edges in the matching are replaced with $k$-hyperedges (e.g., triangles).
Formally, in the max $k$-hypermatching problem, $n$ is a multiple of $k$, and the goal is to partition $P$ into $n/k$ sets $S_1,\ldots,S_{n/k}$, each of size $k$, to maximize $\sum_{i=1}^{n/k}\sum_{p,q\in S_i}\|p-q\|$.

\begin{restatable}{theorem}{TheoremDdimMaxHypermatching}\label{thm:ddim_max_hypermatching}
    Let $0<\epsilon<1$ and $d,\lambda,k\in \N$, 
    and Gaussian JL map $G\in\R^{t\times d}$
    with suitable $t=O(\epsilon^{-2}k^2\lambda\log\tfrac{k}{\epsilon})$.
    Then for every set $P\subset\R^d$ with $\ddim(P)=\lambda$, with probability at least $2/3$,
a $(1+\eps)$-approximation to the maximum $k$-hypermatching of $G(P)$ is a $(1+O(\eps))$-approximation to the maximum $k$-hypermatching of $P$.
\end{restatable}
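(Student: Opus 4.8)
The plan is to mirror the matching argument: reduce the theorem to an ``additive error'' lemma analogous to \Cref{lem:matching_additive_eps_opt}, stating that with probability $\geq 9/10$, for \emph{every} $k$-hypermatching $\{S_1,\dots,S_{n/k}\}$ of $P$ one has
\[
\sum_{i=1}^{n/k}\sum_{p,q\in S_i}\|Gp-Gq\| \;\leq\; \sum_{i=1}^{n/k}\sum_{p,q\in S_i}\|p-q\| \;+\; \eps\cdot \opt_{k\text{-hyper}}(P),
\]
where $\opt_{k\text{-hyper}}(P)$ denotes the optimum $k$-hypermatching value. Given this lemma, the theorem follows exactly as before: the lower bound $\opt_{k\text{-hyper}}(G(P))\geq (1-\eps)\opt_{k\text{-hyper}}(P)$ holds w.h.p. by applying the expectation/Markov argument to an optimal hypermatching of $P$ (each of the $\leq \binom{k}{2}n/k$ edges contracts by at most a $(1-\eps)$ factor in expectation, and the whole sum concentrates), and then a $(1+\eps)$-approximate solution in the projected space pulls back to a $(1+O(\eps))$-approximate solution upstairs after rescaling $\eps$.

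To prove the additive lemma I would follow the structure used for matchings. The key point is that a single $k$-hyperedge $S_i$ is just a clique on $k$ vertices, so $\sum_{p,q\in S_i}\|p-q\|$ is a sum of $\binom{k}{2}$ pairwise distances; summing over all $n/k$ hyperedges, a $k$-hypermatching contributes each point to exactly $k-1$ pairs, so the total is a union of $\binom{k}{2}\cdot n/k \leq kn/2$ edges forming a bounded-degree graph (degree $k-1$). The idea is a net/scale-decomposition argument: bucket pairs by their length scale $\|p-q\| \approx 2^j$, and within each bucket use \Cref{lem:IN07_doubling_radius} to control the expansion. For pairs whose length is within a $\poly(\eps/k)$ factor of the "diameter scale" relevant to $\opt_{k\text{-hyper}}(P)$, the Gaussian JL map preserves each distance up to $1+\eps$ via a standard union bound over an $\eps$-net of the $O(\lambda)$-dimensional span, costing the $\eps^{-2}\lambda$ in the target dimension; for pairs that are much shorter, even a crude bound on $\|Gp-Gq\|$ (say up to an $O(1)$ or $\poly(k/\eps)$ factor) suffices because their total original length is negligible against $\opt_{k\text{-hyper}}(P)$, since each point has only $k-1$ incident edges and $\opt_{k\text{-hyper}}(P)$ is at least the cost of the "spread-out" part; the $k^2$ and $\log(k/\eps)$ factors in $t$ come from this bookkeeping (roughly $k$ edges per vertex, each needing error $\eps/k$ so that the per-edge errors sum to $\eps\cdot\opt$).

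The main obstacle, as with the matching case, is the \emph{uniformity over all hypermatchings} — we cannot union bound over the exponentially many $k$-hypermatchings directly. The resolution is that the event we need (every point set in a ball of the appropriate radius expands by at most $D$ under $G$) is a property of the \emph{point set} $P$, not of any particular hypermatching: once \Cref{lem:IN07_doubling_radius} gives us that no point in a given radius-$r$ ball blows up by more than $D$, \emph{every} hyperedge whose points sit inside such a ball is automatically controlled. So the real work is the charging argument: partitioning the (at most $kn/2$) edges of an arbitrary hypermatching by scale, covering $P$ at each scale by $\ddim$-balls, paying $e^{-ct D^2}$ per ball via \Cref{lem:IN07_doubling_radius} with $t = \Omega(\eps^{-2}k^2\lambda\log\tfrac{k}{\eps})$ so the failure probability beats the number of balls across all scales, and verifying that the short-edge contribution, summed over the bounded-degree structure, is at most $\eps\cdot\opt_{k\text{-hyper}}(P)$. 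I expect this last accounting — relating the total length of "short" hyperedges to the optimum, using that the optimum can itself pair every point with something at the dominant scale — to be the most delicate step, exactly paralleling the proof of \Cref{lem:matching_additive_eps_opt} but with the extra $k$-dependence tracked throughout.
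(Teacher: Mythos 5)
Your reduction of the theorem to an additive-error lemma is the right shape, but you have set yourself up to re-prove the entire net/scale-decomposition machinery of \Cref{lem:matching_additive_eps_opt} from scratch for hyperedges, and the part you defer --- charging the short hyperedges against $\opt_{k\text{-hyper}}(P)$ --- is precisely the part that is not automatic. In the matching proof that step rests on \Cref{lem:switching_max_matching}, a swapping argument specific to maximum \emph{matchings} (two short matched pairs that are far apart can be re-paired to increase the cost). You would need a hypermatching analogue of that switching lemma to build the nested sequence of balls $B_0\supset B_1\supset\cdots$ and to get the analogue of \Cref{claim:sum_r_p_max_match}; you never state or prove one, and it is not the same statement with the same constants (the natural swap now exchanges elements between two $k$-cliques, and the gain/loss bookkeeping involves $\Theta(k)$ pairwise terms). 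As written, the proposal is a plan with its central step missing.

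The paper avoids all of this with a two-line reduction that you came within one observation of finding: you already note that a $k$-hypermatching is a disjoint union of $k$-cliques, i.e.\ a $(k-1)$-regular graph. Such a graph decomposes into $m=O(k)$ ordinary matchings $M_1,\ldots,M_m$ of $P$. Applying \Cref{lem:matching_additive_eps_opt} with $\eps'=O(\eps/k)$ to each $M_i$ gives $\sum_i\cost(G(M_i))\leq\sum_i\cost(M_i)+\eps' m\cdot\matching(P)$, and since $\matching(P)=O(\opt_{k\text{-hyper}}(P))$ (group the edges of a maximum matching into $k$-hyperedges), the total additive error is $O(\eps'k)\opt(P)=O(\eps)\opt(P)$. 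The target dimension $t=O(\eps'^{-2}\lambda\log\tfrac{1}{\eps'})=O(\eps^{-2}k^2\lambda\log\tfrac{k}{\eps})$ then falls out of the matching lemma directly, with no new probabilistic or structural argument needed. I recommend you restructure your proof around this black-box reduction; if you insist on the direct route, you must supply the hypermatching switching lemma explicitly.
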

The proof is essentially by picking $\eps'=\tfrac{\eps}{k}$ and applying \Cref{thm:ddim_reduction_max_matching}. The proof is provided in \Cref{appendix:hypermatching}.

\subsection{Proving \Cref{lem:matching_additive_eps_opt}.}
A key observation in our proof is the following.
\begin{lemma}\label{lem:switching_max_matching}
    Let $M$ be a maximum matching of a dataset $P$ with radius $r$. 
    There exists a ball of radius $\tfrac{r}{2}$ that contains every $(p,q)\in M$ for which $\|p-q\|\leq \tfrac{r}{4}$.
\end{lemma}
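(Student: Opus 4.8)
The plan is to use optimality of $M$ through a two-edge exchange argument. First, if $M$ contains no edge of length at most $\tfrac{r}{4}$, the statement is vacuous and any ball of radius $\tfrac{r}{2}$ works, so assume there is at least one such \emph{short} edge and fix one, say $\{p_0,q_0\}\in M$ with $\|p_0-q_0\|\le\tfrac{r}{4}$. I claim that the ball $B(p_0,\tfrac{r}{2})$ already contains both endpoints of every short edge of $M$; since it obviously contains $p_0$ and $q_0$, it suffices to handle an arbitrary short edge $\{p,q\}\in M$ distinct from $\{p_0,q_0\}$.

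For such $\{p,q\}$, the four points $p_0,q_0,p,q$ are distinct and each lies in exactly one edge of $M$, so I can re-pair them to form two new matchings $M'=\bigl(M\setminus\{\{p_0,q_0\},\{p,q\}\}\bigr)\cup\{\{p_0,p\},\{q_0,q\}\}$ and $M''=\bigl(M\setminus\{\{p_0,q_0\},\{p,q\}\}\bigr)\cup\{\{p_0,q\},\{q_0,p\}\}$. Both have the same cardinality as $M$ (and are perfect matchings if $M$ is), so maximality of $\sum_{\{a,b\}\in M}\|a-b\|$ yields the two inequalities $\|p_0-q_0\|+\|p-q\|\ge\|p_0-p\|+\|q_0-q\|$ and $\|p_0-q_0\|+\|p-q\|\ge\|p_0-q\|+\|q_0-p\|$.

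To finish, I would just read off the bounds: the left-hand side of each inequality is at most $\tfrac{r}{4}+\tfrac{r}{4}=\tfrac{r}{2}$, so in particular $\|p_0-p\|\le\tfrac{r}{2}$ and $\|p_0-q\|\le\tfrac{r}{2}$; combined with $\|p_0-q_0\|\le\tfrac{r}{4}\le\tfrac{r}{2}$ this places $p_0,q_0,p,q$ all in $B(p_0,\tfrac{r}{2})$, which proves the claim. I do not expect a genuine obstacle here; the only points needing care are that the re-pairing really does preserve the matching structure (so the exchange inequalities are valid whether ``maximum matching'' means maximum-weight matching or maximum-weight perfect matching), that both re-pairings are needed — one alone only controls $\|p_0-p\|$, and the triangle inequality would give only $\|p_0-q\|\le\tfrac{3r}{4}$ — and the degenerate case where two of $p_0,q_0,p,q$ coincide, which is impossible when $P$ is a set and otherwise makes the relevant distance $0$ so the inequalities still hold.
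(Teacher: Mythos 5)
Your proof is correct and is essentially the same two-edge exchange argument as the paper's: the paper also fixes one short edge, centers the ball $B(p,\tfrac{r}{2})$ at one of its endpoints, and uses the swap $(p,x),(q,y)$ (and symmetrically for $y$) together with optimality of $M$ to force $\|p-x\|,\|p-y\|\le\tfrac{r}{2}$, merely phrasing it as a contradiction rather than reading the bound off the exchange inequalities directly. Your handling of the vacuous and degenerate cases is a minor extra care the paper omits but changes nothing of substance.
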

\begin{proof}
    Let $(p,q),(x,y)\in M$ such that $\|p-q\|\leq \tfrac{r}{4}$ and $\|x-y\|\leq \tfrac{r}{4}$.
    We claim that $x,y\in B(p,\tfrac{r}{2})$, from which the lemma follows.
    
    The proof is by contradiction. 
    Suppose $\|p-x\|>\tfrac{r}{2}$ (wlog, the same arguments hold for $y$), then there is a matching $M'$ with cost larger than $M$ --- it is the same as $M$, but with a minor change, match the pairs $(p,x),(q,y)$ (i.e., the pairs are switched).
    Indeed, the difference between the cost of $M'$ and $M$ is
    \begin{align*}
        &\sum_{(a,b)\in M'} \|a-b\| - \sum_{(a,b)\in M} \|a-b\| \\
        &= \|p-x\|+\|q-y\|-(\|p-q\|+\|x-y\|) \\
        &> \tfrac{r}{2} - 2 \cdot \tfrac{r}{4}=0,
    \end{align*}
    and $M'$ has a larger cost, a contradiction.
\end{proof}

\begin{proof}[Proof of \Cref{lem:matching_additive_eps_opt}]
    Let $M$ be a maximum matching of $P$. Denote by $r$ the radius of $P$.
    We construct a sequence of balls, such that intuitively, for every $(p,q)\in M$, there exists a ball $B$ in this sequence that contains $p,q$ and $\|p-q\|=\Omega(\radius(B))$. 
    
    The construction is as follows.
    Let $B_0$ be the minimum enclosing ball of $P$.
    Let $P_1$ be the set of points whose distance from their match in $M$ is $\leq \tfrac{r}{4}$ and $B_1$ be a ball of radius $\tfrac{r}{2}$ given by \Cref{lem:switching_max_matching}. In particular, $P_1\subset B_1$.
    Let $M_1$ be the matching $M$ induced on $P_1$. 
    Clearly, $M_1$ is a maximum matching of $P_1$ (whose radius $\leq \tfrac{r}{2}$), so we can apply \Cref{lem:switching_max_matching} again, getting a ball $B_2$ and $P_2$ with radius smaller by a factor of $2$.
    Proceed in this manner, 
    so for every $i\in \mathbb{Z}_+$, $B_i$ has radius $r_i= \tfrac{r}{2^i}$.
    By construction, 
    \begin{equation}\label{eq:dist_match_pq}
        \forall i\in\mathbb{Z}_+, \ \forall (p,q)\in M_i\setminus M_{i+1}, \qquad \|p-q\|\geq \tfrac{r_i}{4}.
    \end{equation}
    For every $p\in P$, denote the last level that contains $p$ by $i_p$, i.e., the maximum $i$ such that $p\in P_i\setminus P_{i+1}$, and denote the radius of that level by $r_p$.
    \begin{claim}\label{claim:sum_r_p_max_match}
        $\sum_{p\in P} r_p\leq 8\opt(P)$.
    \end{claim}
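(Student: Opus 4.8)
The plan is to charge each term $r_p$ to the length of the matching edge incident to $p$, using the structural inequality~\eqref{eq:dist_match_pq} together with the monotonicity of $r_i = r/2^i$. The starting observation is that every set in the chain $P = P_0 \supseteq P_1 \supseteq P_2 \supseteq \cdots$ is a union of $M$-pairs: if $p \in P_i$ and $q$ is the partner of $p$ in $M$, then $\|q - p\| = \|p-q\|$ is exactly the distance from $q$ to its partner, so the same inequality that put $p$ into $P_i$ puts $q$ into $P_i$. Consequently $M_i := M|_{P_i}$ is a perfect matching of $P_i$, it is in fact a \emph{maximum} matching of $P_i$ (any heavier matching of $P_i$ could be substituted into $M$, contradicting maximality of $M$) --- this is precisely what licenses the recursive use of \Cref{lem:switching_max_matching} in the construction --- and, crucially, for an edge $e=\{p,q\}\in M$ we have $p\in P_i \iff q\in P_i \iff e\in M_i$ for every $i$, so $i_p = i_q = \ell(e)$ where $\ell(e) := \max\{i : e\in M_i\}$.

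Next I would establish the per-edge estimate. The level $\ell(e)$ is finite: as soon as $r_i/4 < \|p-q\|$, the edge $e$ is excluded from $M_{i+1}$ and never returns (we may discard the zero-length edges of $M$, which contribute nothing). Since $e \in M_{\ell(e)}\setminus M_{\ell(e)+1}$, inequality~\eqref{eq:dist_match_pq} gives $\|p-q\| \ge r_{\ell(e)}/4$, i.e., $r_{\ell(e)} \le 4\|p-q\|$ (for the boundary case $\ell(e)=0$ this is immediate from the definition of $P_1$). Combining with $r_p = r_q = r_{\ell(e)}$ yields
\[
r_p + r_q = 2\,r_{\ell(e)} \le 8\,\|p-q\|.
\]

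Finally, summing this over the edges of $M$ (each point of $P$ lies in exactly one edge when $M$ is perfect) gives
\[
\sum_{p\in P} r_p \;=\; \sum_{\{p,q\}\in M}\big(r_p + r_q\big) \;\le\; 8\sum_{\{p,q\}\in M}\|p-q\| \;=\; 8\,\opt(P),
\]
which is the claim. I do not expect a genuine obstacle here; the only care needed is bookkeeping --- verifying finiteness of $\ell(e)$ (handled by discarding zero-length edges) and, when $n$ is odd, absorbing the single unmatched point $u$, which satisfies $i_u = 0$ and hence contributes $r_0 = r \le \opt(P)$ (a diametral pair is itself a feasible matching), affecting only the constant. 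The conceptual crux is simply the identity $i_p = \ell(e)$: the terminal radius of a point equals the level at which its matching edge first becomes "long", and~\eqref{eq:dist_match_pq} converts that into the length bound above.
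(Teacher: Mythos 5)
Your proof is correct and follows essentially the same route as the paper's: group the points by the level at which their matching edge drops out, use \eqref{eq:dist_match_pq} to bound $r_p + r_q = 2r_{\ell(e)} \le 8\|p-q\|$, and sum over the edges of $M$. You spell out the bookkeeping (that $i_p = i_q$ for matched pairs, finiteness of the level, and the odd-$n$ case) that the paper leaves implicit, but the argument is the same.
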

    \begin{proof}[Proof of \Cref{claim:sum_r_p_max_match}]
    \begin{align*}
        \sum_{p\in P} r_p 
        &= \sum_{i=0}^{\infty} \sum_{(p,q)\in M_i\setminus M_{i+1}} (r_p+r_q)  \\
        &\leq \sum_{i=0}^{\infty} \sum_{(p,q)\in M_i\setminus M_{i+1}} 8 \|p-q\|=8\opt(P),
    \end{align*}
    where the inequality is by \Cref{eq:dist_match_pq}.
    \end{proof}
    We will use a helpful probability statement.
    \begin{claim}[Claim C.2 of~\cite{MakarychevMR19}]\label{lem:gaussian_tail}
    There exists a universal constant $c>0$ such that for all $t<d$, and for a
    Gaussian JL map $G\in\R^{t\times d}$, 
    \[
    \forall x\in\R^d, r>0, \qquad \Pr(\|Gx\|>(1+r)\|x\|) \leq e^{-cr^2 t}.
    \]
    \end{claim}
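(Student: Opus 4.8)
The plan is to reduce the statement to a textbook upper-tail bound for a chi-square random variable. By homogeneity, both sides of the inequality are invariant under scaling $x$, so it suffices to treat the case $\|x\|=1$; fix such an $x$. Each coordinate of $Gx$ equals $\sum_{j=1}^{d} G_{ij}x_j$, a linear combination of independent $N(0,\tfrac1t)$ variables whose coefficients have squares summing to $\|x\|^2=1$, and is therefore itself $N(0,\tfrac1t)$; moreover the $t$ coordinates of $Gx$ are mutually independent since they involve disjoint rows of $G$. Hence $Y \eqdef t\|Gx\|^2$ is a sum of $t$ independent squared standard Gaussians, i.e. $Y\sim\chi^2_t$, and the event $\{\|Gx\|>(1+r)\|x\|\}$ is precisely $\{Y>t(1+r)^2\}$.

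Next I would invoke the standard Chernoff argument for the upper tail of $\chi^2_t$. Using $\E[e^{\theta Y}]=(1-2\theta)^{-t/2}$ for $0<\theta<\tfrac12$, Markov's inequality, and the optimal choice $\theta=\tfrac{s}{2(1+s)}$, one obtains the classical estimate
\[
\Pr\big(Y\ge t(1+s)\big)\le \exp\!\Big(-\tfrac t2\big(s-\ln(1+s)\big)\Big),\qquad s>0.
\]
Applying this with $1+s=(1+r)^2$ gives
\[
\Pr\big(\|Gx\|>(1+r)\|x\|\big)\le \exp\!\Big(-\tfrac t2\big((1+r)^2-1-2\ln(1+r)\big)\Big).
\]

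Finally I would simplify the exponent: since $(1+r)^2-1-2\ln(1+r)=r^2+2\big(r-\ln(1+r)\big)$ and $\ln(1+r)\le r$ for all $r>0$, the parenthesized quantity is at least $r^2$, so the probability is bounded by $e^{-tr^2/2}$, which proves the claim with $c=\tfrac12$ (any smaller universal constant works too). There is no genuine obstacle here: the only steps needing care are the distributional identity $t\|Gx\|^2\sim\chi^2_t$ — this is where independence of the rows of $G$ is used — and the elementary inequality $\ln(1+r)\le r$ that converts the chi-square exponent into the advertised form $cr^2t$. The hypothesis $t<d$ plays no role in this particular bound (the identity $t\|Gx\|^2\sim\chi^2_t$ holds for every $t$); it is stated only because it is satisfied in the surrounding application.
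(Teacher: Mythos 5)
Your proof is correct and complete: the reduction to a $\chi^2_t$ upper-tail bound via rotational invariance and independence of the rows of $G$, the Chernoff computation, and the elementary estimate $\ln(1+r)\le r$ together yield the claim with $c=\tfrac12$. The paper does not prove this statement itself --- it imports it verbatim as Claim C.2 of \cite{MakarychevMR19} --- so there is no in-paper argument to compare against; your derivation is the standard one and your side remark that the hypothesis $t<d$ is not actually needed for this bound is also accurate.
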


    We now analyze the error contributed by each ball.
    For every $i\in \Z_+$, let $N_i$ be an $\epsilon r_i$-net of $P\cap B_i$.
    By standard arguments, it has size $\leq N\coloneq (2/\epsilon)^{\ddim(P)}$ (see e.g., \cite{GKL03}).
    For suitable target dimension $t=O(\eps^{-2}\log N) = O(\epsilon^{-2}\ddim(P)\log\tfrac{1}{\epsilon})$, the following holds for a matrix $G\in\R^{t\times d}$ of i.i.d. $N(0,\tfrac{1}{t})$:
    For every $\alpha>1, x\in\R^d$, by \Cref{lem:gaussian_tail},
    \begin{equation*}
        \Pr(\|Gx\|> (1+\epsilon\alpha)\|x\|)
        \leq \exp(-\Omega(\alpha^2 \eps^2 t))
        \leq N^{-2} 2^{-\alpha^2}.
    \end{equation*}
    By a union bound over all pairs $x,y\in N_i$, with probability $\geq 1- 2^{-\alpha^2}$,
    \begin{equation}\label{eq:JL_distortion}
        \forall x,y\in N_i, \qquad \|Gx-Gy\|\leq (1+\epsilon\alpha)\|x-y\|.
    \end{equation}
    Additionally, for every $y\in N_i$, we have by \Cref{lem:IN07_doubling_radius} that with probability $\geq 1- e^{-ct\alpha^2}$,    \begin{equation}\label{eq:IN_distortion}
        \forall p\in P\cap B(y,\epsilon r_i), \qquad \|G(p-y)\|\leq \alpha\eps r_i.
    \end{equation}
    Thus, by a union bound, with probability
    $\geq 1-e^{-ct\alpha^2}|N_i|\geq 1-2^{-\alpha^2}$
    the above hold for all $y\in N_i$.
    For every $i$, denote by $\alpha_i$ the smallest $\alpha\geq 1$ for which  \Cref{eq:JL_distortion,eq:IN_distortion} hold.
    Therefore, for every $j\in \N$,
    \begin{align*}
        &\Pr(\alpha_i> 2^{j-1}) \\
        \leq
        &\Pr(\exists x,y\in N_i, \, \|Gx-Gy\|> (1+\eps 2^{j-1})\|x-y\|) \\
        +&\Pr(\exists y\in N_i, p\in P\cap B(y,\eps r_i), \, \|Gp-Gy\|>2^{j-1}\eps r_i)\\
        \leq &2\cdot 2^{-4^{j-1}}.
    \end{align*}
    Hence,
        $\E \alpha_i \leq \sum_{j=0}^\infty 2^j \Pr(\alpha_i\geq 2^{j-1})
        \leq \sum_{j=0}^\infty 2^j 2\cdot 2^{-4^{j-1}} = O(1)$.

    Next, we consider a random variable that captures the error accumulated by all the balls together.
    Denote $X\coloneq \sum_{i=0}^\infty \alpha_i |P_i|r_i$.
    We have,
    \begin{align}
        \E X &= \sum_{i=0}^\infty \E\alpha_i |P_i|r_i 
        =O(\sum_{i=0}^\infty |P_i|r_i) \nonumber \\
        &=O(\sum_{p\in P} r_p)=O(\opt(P)). \label{eq:matching_statistic}
    \end{align}
    By Markov's inequality, w.p. $>9/10$, we have $X= O(\opt(P))$.
    Assume this event holds.

    Finally, we are ready to prove the lemma.
    Let $M'$ be a matching of $P$.
    We bound the cost of this matching on $G(P)$. 
    For every $p\in P$, consider level $i_p$ (recall, it is the last level containing $p$).
    For every $(p,q)\in M'$, assume wlog that $i_p\leq i_q$ (so $B_{i_q}\subseteq B_{i_p}$).
    Denote by $y_p\in N_{i_p}$ the nearest net-point to $p$ in that level, and by $y_{qp}\in N_{i_p}$ the nearest net-point to $q$ \emph{in the same level $i_p$}.
    In particular, $p\in B(y_p,\eps r_p)$ and $q\in B(y_{qp},\epsilon r_p)$.
    By triangle inequality,
    \begin{align*}
        &\sum_{(p,q)\in M'} \|Gp-Gq\| \\
        \leq &\sum_{(p,q)\in M'} \|Gp-Gy_p\| +\|Gq-Gy_{qp}\| +\|Gy_p-Gy_{qp}\| \\
        \intertext{by  \Cref{eq:JL_distortion,eq:IN_distortion},}
        \leq &\sum_{(p,q)\in M'} 2\eps \alpha_{i_p} r_p + (1+\eps \alpha_{i_p})\|y_p-y_{qp}\| \\
        \intertext{since $y_p,y_{qp}$ are in a ball of radius $r_p$, and by triangle inequality,}
        \leq &\sum_{(p,q)\in M'} 4\eps \alpha_{i_p} r_p + \|y_p-p\|+\|y_{pq}-q\|+\|p-q\| \\
        \leq &\sum_{(p,q)\in M'} 4\eps \alpha_{i_p} r_p +2\epsilon r_p + \|p-q\|\\
        \intertext{and by \Cref{eq:matching_statistic},}
        = & O(\eps \opt(P)) +\sum_{(p,q)\in M'}\|p-q\|.
    \end{align*}
    This concludes the proof of \Cref{lem:matching_additive_eps_opt}.
\end{proof}

\section{$\sqrt{2}$ Approximation Lower Bound for Max Matching and Max-TSP}\label{sec:lower_max_matching}

The aim of this section is to derive a lower bound on the dimension necessary for the maximum matching (and max TSP) of a projected set of points to yield a better than $\sqrt{2}$ approximation for the optimal cost of the original point set. Omitted proofs can be found in \Cref{sec:appendix_lower_max_matching}.

\begin{theorem}\label{thm:matching_sqrt(2)_lower_bound}
    Let $\eps \in (0,1)$, $n = \Omega(1/\eps^2)$ be even and $P = \{e_1, \cdots, e_n\}$ be the standard basis vectors in $\R^n$. Let $G$ be a Gaussian JL map onto $t$ dimensions. 
    If $  \frac{C\log(1/\eps)}{\eps^2} \le t \le \frac{\log(n)}{C \log(1/\eps)}$ for a sufficiently large constant $C$, then with probability $\ge 1-\exp(-\Omega(n \eps^2)) - 1/n^{50}$, we have
    \[\matching(G(P)) \ge \left(\sqrt{2} - \eps \right) \cdot \matching(P). \]
\end{theorem}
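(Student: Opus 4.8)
The plan is to exploit a crowding phenomenon: when $t$ is small and $n=|P|$ is large, the images $Ge_1,\dots,Ge_n$ — which are i.i.d.\ scaled standard Gaussians in $\R^t$ — are so dense in the low-dimensional ball that most of them can be paired into \emph{nearly antipodal} pairs, each yielding an edge of length close to $2$ rather than the $\sqrt2$ realized in the original space (where $\|e_i-e_j\|=\sqrt2$ for all $i\ne j$). Since $\matching(P)=\tfrac n2\cdot\sqrt2=n/\sqrt2$, it suffices to exhibit a matching of $G(P)$ of cost $(1-O(\eps))\,n$; rescaling $\eps$ then gives the stated ratio $\sqrt2-\eps$.

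First I would control the norms. Writing $Ge_i=\tfrac1{\sqrt t}g_i$ with $g_i\sim N(0,I_t)$ independent, the norm $\|Ge_i\|^2=\tfrac1t\|g_i\|^2$ concentrates around $1$ by $\chi^2$ tail bounds, $\Pr[\,|\|Ge_i\|^2-1|>\eps\,]\le 2e^{-\Omega(\eps^2 t)}$, while the direction $\hat u_i\eqdef Ge_i/\|Ge_i\|$ is uniform on $S^{t-1}$ and independent of $\|Ge_i\|$. Using $t\ge C\eps^{-2}\log(1/\eps)$ the tail is at most $2\eps^{\Omega(C)}$, and since $n\ge\eps^{-2}$ a union bound shows that with probability $\ge 1-n^{-50}$ \emph{every} $\|Ge_i\|$ lies in $[1-\eps,1+\eps]$ (for $C$ large enough). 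Call this the \textbf{good event}; on it the $\hat u_i$ are still i.i.d.\ uniform on $S^{t-1}$, since norms and directions are independent.

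Next I would set up the antipodal pairing. Fix a symmetric ($\mathcal N=-\mathcal N$) $\eps$-net $\mathcal N$ of $S^{t-1}$, of size $|\mathcal N|\le(O(1)/\eps)^t$, together with an assignment $\Phi:S^{t-1}\to\mathcal N$ satisfying $\|\Phi(v)-v\|\le\eps$ and $\Phi(-v)=-\Phi(v)$ (built on one hemisphere and extended antipodally; the measure-zero boundary is irrelevant). Put $S_u\eqdef\{i:\Phi(\hat u_i)=u\}$, a partition of $[n]$. A short law-of-cosines computation shows that for $i\in S_u$ and $j\in S_{-u}$ one has $\langle\hat u_i,\hat u_j\rangle\le-1+O(\eps)$ and hence, on the good event, $\|Ge_i-Ge_j\|^2\ge(\|Ge_i\|+\|Ge_j\|)^2-O(\eps)\ge 4-O(\eps)$, i.e.\ $\|Ge_i-Ge_j\|\ge 2-O(\eps)$. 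I match each cell pair $\{u,-u\}$ internally, getting $\min(|S_u|,|S_{-u}|)$ long edges, so the number of leftover points is $Z\eqdef\tfrac12\sum_{u\in\mathcal N}\big|\,|S_u|-|S_{-u}|\,\big|$ (an even number, since $n$ and the matched count $\sum 2\min(\cdot,\cdot)$ are); the matched part then has cost $\ge\tfrac{n-Z}{2}(2-O(\eps))$ and the $Z$ leftover points are matched arbitrarily for a nonnegative contribution.

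The crux is bounding $Z$, and this is where the upper bound on $t$ enters. Since $v\mapsto-v$ preserves the uniform measure on $S^{t-1}$ and $\Phi(-v)=-\Phi(v)$, the probabilities $p_u\eqdef\Pr[\Phi(\hat u_i)=u]$ satisfy $p_u=p_{-u}$; so $(|S_u|)_{u\in\mathcal N}$ is multinomial with $\E|S_u|=np_u=\E|S_{-u}|$ and $\Var(|S_u|-|S_{-u}|)\le 4np_u$, whence $\E Z\le\sum_u\sqrt{np_u}\le\sqrt{|\mathcal N|\,n}$ by Cauchy--Schwarz (using $\sum_u p_u=1$). The hypothesis $t\le\log n/(C\log(1/\eps))$ forces $|\mathcal N|\le(O(1)/\eps)^t\le\tfrac14\eps^2 n$ for $C$ large, so $\E Z\le\tfrac12\eps n$; as $Z$ changes by at most $2$ under a change of a single $\hat u_i$, a bounded-differences (McDiarmid) inequality gives $\Pr[Z>\eps n]\le\exp(-\Omega(\eps^2 n))$. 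On the intersection of the good event with $\{Z\le\eps n\}$ — of probability $\ge 1-n^{-50}-\exp(-\Omega(\eps^2 n))$ — the constructed matching has cost $\ge\tfrac{(1-\eps)n}{2}(2-O(\eps))\ge(1-O(\eps))\,n=(1-O(\eps))\sqrt2\cdot\matching(P)$, and rescaling $\eps$ (enlarging $C$) completes the proof. I expect the main obstacle to be making the net and assignment $\Phi$ genuinely symmetric so that antipodal cells are equinumerous in expectation, and then checking the net is small enough that its fluctuations are $o(\eps n)$ — the two size constraints on $t$ in the statement are exactly the thresholds demanded by the norm step and this balancing step, respectively.
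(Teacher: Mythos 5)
Your overall strategy matches the paper's at the top level (exhibit a near-perfect matching of near-antipodal pairs, each of length $2-O(\eps)$, against $\matching(P)=n/\sqrt2$), but your route to constructing that matching is genuinely different. The paper splits the points into two halves, forms a bipartite ``closeness'' graph with an edge when $\|x_i+y_j\|_2\le\eps\sqrt t$, proves concentration of vertex degrees via a fairly delicate moment bound for correlated indicators (their Lemmas \ref{lem:degree_concentration} and \ref{lem:annoying_concentration}), and rounds the fractional matching $z_{ij}=((1+\eps)\Delta)^{-1}$ of the bipartite LP. Your symmetric-net partition of $S^{t-1}$ with the antipodally equivariant assignment $\Phi$, the variance/Cauchy--Schwarz bound $\E Z\le\sqrt{|\mathcal N|\,n}$, and McDiarmid replace all of that machinery with an elementary balancing argument; the role of the upper bound $t\le\log n/(C\log(1/\eps))$ is the same in both proofs (it makes the relevant cardinality --- the net size for you, $1/\Pr(\|x+y\|\le\eps\sqrt t)$ for the paper --- polynomially smaller than $n$), and your version arguably isolates that mechanism more cleanly.

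There is, however, one concrete error: the norm step. You claim that a union bound over the $n$ points puts \emph{every} $\|Ge_i\|$ in $[1-\eps,1+\eps]$ with probability $\ge 1-n^{-50}$. This is false in the only regime where the theorem is non-vacuous: the hypothesis forces $t\le\log n/(C\log(1/\eps))$, so the per-point deviation probability $e^{-\Omega(\eps^2 t)}$ is at least $n^{-O(\eps^2/\log(1/\eps))}=n^{-o(1)}$, and the expected number of points with $\big|\|Ge_i\|-1\big|>\eps$ is $n^{1-o(1)}\gg 1$; no choice of the constant $C$ rescues the union bound, because $\eps^{\Omega(C)}$ is a fixed constant while $n$ must be at least $\exp\bigl(\Omega(C^2\eps^{-2}\log^2(1/\eps))\bigr)$ for the interval of admissible $t$ to be nonempty. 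The repair is exactly the paper's Lemma \ref{lem:concentrated_norms}: require only that a $1-\eps$ fraction of the norms concentrate, which holds with failure probability $\exp(-\Omega(n\eps^2))$ by Chernoff over the independent points. Your downstream argument survives with minor surgery --- since directions are independent of norms, you may condition on the set of good-norm points without disturbing the uniformity of the $\hat u_i$, discard the at most $\eps n$ bad-norm points (matching them arbitrarily for a nonnegative contribution), and absorb the extra $O(\eps n)$ loss into the same $(1-O(\eps))\tfrac n2$ count of long edges. With that correction the proof goes through.
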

The proof is by demonstrating a matching with large cost in the target dimension. Note that $\matching(P)=\sqrt{2}\cdot n/2$, since all pairwise distances equal $\sqrt{2}$. After dimension reduction, the points are i.i.d. distributed $N(0,\tfrac{1}{t}I_t)$, hence are roughly mapped to be uniform on the unit sphere. Thus, we expect points to have an antipodal match at distance $2$, so the matching would have size roughly $2\cdot n/2$. However, this holds on average, and it is  unclear how to construct a matching. We resolve this as follows. 

\begin{restatable}{lemma}{ConcentrationNorms}\label{lem:concentrated_norms}
 Let $\eps \in (0, 1), t \ge C \frac{\log(1/\eps)}{\eps^2}$ for a sufficiently large constant $C > 0$. Let $x_1, \cdots, x_{n}$ be i.i.d. draws from $\mathcal{N}(0, I_t)$. With probability $\ge 1-\exp(-\Omega(n \eps^2))$, at least $1-\eps$ fraction of the $x_i$ satisfy $\|x_i\|_2/\sqrt{t} \in 1 \pm \eps$.
\end{restatable}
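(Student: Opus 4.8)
The plan is a two-stage concentration argument: first control the norm of a single $x_i$, then aggregate over the $n$ points with a Chernoff bound, crucially exploiting the extra $\log(1/\eps)$ factor in the hypothesis on $t$.

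\emph{Step 1 (one point).} For fixed $i$, $\|x_i\|_2^2$ is a $\chi^2$ random variable with $t$ degrees of freedom, so $\E\|x_i\|_2^2 = t$. By the standard Laurent--Massart / sub-exponential tail bound for $\chi^2$ variables there is a universal constant $c>0$ with $\prob{\bigl|\,\|x_i\|_2^2 - t\,\bigr| > \eps t} \le 2e^{-c\eps^2 t}$ for all $\eps\in(0,1)$. Since $\sqrt{1-\eps}\ge 1-\eps$ and $\sqrt{1+\eps}\le 1+\eps$ on $[0,1]$, the event $\|x_i\|_2^2/t \in [1-\eps,1+\eps]$ implies $\|x_i\|_2/\sqrt t \in 1\pm\eps$. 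Hence $x_i$ is \emph{bad} (meaning $\|x_i\|_2/\sqrt t \notin 1\pm\eps$) with probability at most $p := 2e^{-c\eps^2 t}$.

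\emph{Step 2 (using the bound on $t$).} Plugging in $t \ge C\log(1/\eps)/\eps^2$ gives $p \le 2e^{-cC\log(1/\eps)} = 2\eps^{cC}$, which is at most $\eps/10$ once $C$ is a sufficiently large absolute constant, at least in the regime where $\eps$ is below a fixed constant; for $\eps$ bounded away from $0$ the statement is easier (almost every point is good and a crude Chernoff bound already suffices), so I would dispatch that case with a one-line remark. This step is the only place where the $\log(1/\eps)$ factor in the target dimension is essential: it is exactly what pushes the per-point failure probability down from $o(1)$ to $O(\eps)$, which is what the aggregation step needs.

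\emph{Step 3 (aggregation).} Let $B$ be the number of bad indices among $1,\dots,n$; it is a sum of $n$ i.i.d.\ Bernoulli$(p)$ variables with mean $\E B = pn \le \eps n/10$, so $\eps n \ge 10\,\E B \ge 6\,\E B$. The multiplicative Chernoff bound in the form $\prob{B \ge a} \le 2^{-a}$ valid for $a \ge 6\,\E B$ then gives $\prob{B \ge \eps n} \le 2^{-\eps n} = e^{-\Omega(\eps n)} \le e^{-\Omega(\eps^2 n)}$, using $\eps\le 1$ in the last step. Taking complements: with probability $\ge 1 - e^{-\Omega(\eps^2 n)}$, at most $\eps n$ of the points are bad, i.e.\ at least a $(1-\eps)$ fraction satisfy $\|x_i\|_2/\sqrt t \in 1\pm\eps$, as claimed. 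I do not expect any serious obstacle here; the whole content is Step 2 (getting the single-point bad probability to $O(\eps)$ rather than just a small constant), and the remaining pieces are routine $\chi^2$ and binomial tail bounds.
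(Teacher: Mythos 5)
Your proposal is correct and follows essentially the same route as the paper's proof: a single-point $\chi^2$ concentration bound (driven by $t \ge C\log(1/\eps)/\eps^2$ to push the per-point failure probability below $O(\eps)$), followed by a Chernoff bound over the $n$ independent points. Your write-up is in fact more explicit than the paper's two-line argument, including the careful passage from $\|x_i\|_2^2/t$ to $\|x_i\|_2/\sqrt{t}$ and the flagged edge case of $\eps$ bounded away from $0$.
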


\begin{restatable}{lemma}{LemmaDegreeConcentration}
\label{lem:degree_concentration}
Let $\eps \in (0, 1), n=\Omega(\eps^{-2})$ and  $\frac{C\log(1/\eps)}{\eps^2} \le t \le \frac{\log(n)}{C \log(1/\eps)}$ for a sufficiently large constant $C > 0$. Let $x_1, \cdots, x_{n/2}$ and $ y_1, \cdots, y_{n/2}$ be i.i.d. draws from $\mathcal{N}(0, I_t)$. Consider a bipartite graph $H = (A,B)$ with bipartition $A = \{x_1, \cdots, x_{n/2}\}, B = \{y_1, \cdots, y_{n/2}\}$. Put an edge between $x_i \in A$ and $y_j \in B$ if $\|x_i + y_j\|_2 \le \eps \sqrt{t}$. With failure probability at most $1/n^{99}$, we have
    \[ |\deg(x_1) - \E[\deg(x_1)]| \le \eps \E[\deg(x_1)]. \]
\end{restatable}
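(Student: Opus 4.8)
The strategy is a standard second-moment / martingale concentration argument, but the main subtlety is understanding the scaling of $\E[\deg(x_1)]$ in the given regime of $t$ and $n$. First I would fix the randomness of $x_1$ and observe that, conditioned on $x_1$, the indicator events $\1[\|x_1 + y_j\| \le \eps\sqrt{t}]$ for $j = 1,\dots,n/2$ are mutually independent (the $y_j$ are i.i.d.), so $\deg(x_1)$ is a sum of $n/2$ i.i.d. Bernoulli random variables with common parameter $p(x_1) \coloneq \Pr_{y}(\|x_1 + y\| \le \eps\sqrt{t})$. The plan is therefore: (i) show $p(x_1)$ is, up to constants, the same for all ``typical'' $x_1$ (those with $\|x_1\| \in (1\pm\eps)\sqrt{t}$, which by \Cref{lem:concentrated_norms}-type reasoning — or rather the underlying one-point bound — happens with probability $1 - \exp(-\Omega(\eps^2 t))$); (ii) lower-bound $p \coloneq \E_{x_1}[p(x_1)] = \Pr_{x_1,y}(\|x_1+y\|\le \eps\sqrt t)$ well enough that $\E[\deg(x_1)] = \tfrac{n}{2}p$ is large enough for a Chernoff bound to give the claimed $1/n^{99}$ failure probability with relative error $\eps$; (iii) apply a Chernoff/Bernstein bound to the conditional sum, then remove the conditioning on $x_1$.

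For step (ii), the key computation is that $x_1 + y \sim \mathcal{N}(0, 2I_t)$, so $\|x_1+y\|^2 / 2$ is a $\chi^2_t$ variable and $\Pr(\|x_1+y\|\le \eps\sqrt t) = \Pr(\chi^2_t \le \eps^2 t/2)$. Using the standard lower tail bound for chi-squared — e.g.\ $\Pr(\chi^2_t \le \delta t) \ge (\delta e^{1-\delta})^{t/2} \cdot \mathrm{poly}(t)^{-1}$ for $\delta < 1$, which one gets from the density or from the Laplace-transform bound — with $\delta = \eps^2/2$ gives $p \ge \exp(-\Theta(t\log(1/\eps)))$. Since $t \le \tfrac{\log n}{C\log(1/\eps)}$, this yields $p \ge n^{-1/C}$, hence $\E[\deg(x_1)] = \tfrac n2 p \ge \tfrac12 n^{1 - 1/C}$, which for $C$ a sufficiently large constant is at least, say, $n^{0.99}$ — comfortably large. (One also wants an upper bound $p \le \exp(-\Omega(t))$, or just $p \le 1$, which is trivial; the lower bound is what matters.)

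For step (iii), conditioned on a typical $x_1$, $\deg(x_1)$ is a sum of $n/2$ independent Bernoullis with mean $\mu(x_1) = \tfrac n2 p(x_1) = \Theta(\tfrac n2 p)$, so the multiplicative Chernoff bound gives $\Pr(|\deg(x_1) - \mu(x_1)| > \eps\mu(x_1)) \le 2\exp(-\Omega(\eps^2 \mu(x_1))) \le 2\exp(-\Omega(\eps^2 n^{0.99}))$, which is $\ll n^{-99}$ once $n = \Omega(\eps^{-2})$ is large enough (absorbing the $\eps^{-2}$ into the exponent via $\eps^2 n^{0.99} \ge n^{0.9}$, say). Two loose ends to tie off: first, $\E[\deg(x_1)]$ (the unconditional expectation) equals $\E_{x_1}[\mu(x_1)] = \tfrac n2 p$, and one must check $\mu(x_1) = (1\pm O(\eps))\,\tfrac n2 p$ for typical $x_1$ so that the concentration around $\mu(x_1)$ transfers to concentration around $\E[\deg(x_1)]$; this follows if $p(x_1)/p = 1 \pm O(\eps)$ for typical $x_1$, which in turn comes from a smooth dependence of $\Pr(\|x_1 + y\|\le\eps\sqrt t)$ on $\|x_1\|$ (the worst case over $\|x_1\| = (1\pm\eps)\sqrt t$ changes the chi-squared-type tail by at most a $(1\pm O(\eps))$ factor — this needs a short explicit calculation with the non-central chi-squared or by conditioning on the component of $y$ along $x_1$). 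Second, the atypical event for $x_1$ has probability $\exp(-\Omega(\eps^2 t)) = \exp(-\Omega(\log n)) = n^{-\Omega(1)}$, which we can make smaller than $n^{-99}$ by taking $C$ large, and on that event we can just bound crudely.

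The main obstacle I anticipate is step (ii) together with the ``transfer'' loose end in step (iii): one must pin down the chi-squared lower-tail constant precisely enough to certify $\E[\deg(x_1)] \ge n^{\Omega(1)}$ (so that $C$ in the hypothesis can be chosen to make everything work), and one must show $p(x_1)$ is stable to $(1\pm\eps)$ perturbations of $\|x_1\|$ — the latter is where a naive bound would only give a constant-factor, not $(1\pm O(\eps))$, stability, and getting the $(1\pm O(\eps))$ requires being careful about how the tail probability of a chi-squared shifts under a small mean perturbation. Everything else (independence of the $y_j$, Chernoff, union over the single vertex $x_1$) is routine.
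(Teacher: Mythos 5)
You have put your finger on exactly the right pressure point, but the step you hope to finish with ``a short explicit calculation'' --- showing $p(x_1)=(1\pm O(\eps))p$ for typical $x_1$ --- is not merely delicate; it fails. Conditioned on $\|x_1\|=s$, the event $\|x_1+y\|_2\le\eps\sqrt t$ forces the component of $y$ along $x_1$ to lie in $[-s-\eps\sqrt t,\,-s+\eps\sqrt t]$, a Gaussian lower-tail event, so $p(s)$ carries a factor of order $e^{-(s-\eps\sqrt t)^2/2}$ and $\tfrac{d}{ds}\log p(s)\approx -(1-O(\eps))s\approx -(1-O(\eps))\sqrt t$. Since $\|x_1\|$ fluctuates additively by $\Theta(1)$ around $\sqrt t$, the conditional probability $p(x_1)$ fluctuates \emph{multiplicatively} by $e^{\Theta(\sqrt t)}$ over the bulk of the distribution of $x_1$, which in the regime $t\ge C\log(1/\eps)/\eps^2$ is vastly more than $1\pm O(\eps)$ (and more than any constant factor). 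Moreover $p=\E_{x_1}[p(x_1)]$ is dominated by atypically small $\|x_1\|$ (compare $\E[e^{-\chi^2_t/2}]=2^{-t/2}$ with the typical value $e^{-t/2}$), so for typical $x_1$ the conditional mean $\tfrac n2 p(x_1)$ sits a factor $e^{-\Theta(t)}$ \emph{below} $\E[\deg(x_1)]$. Your Chernoff step then proves tight concentration of $\deg(x_1)$ around a quantity that is itself far from $\E[\deg(x_1)]$, so the ``conditional concentration plus transfer'' decomposition cannot deliver the stated conclusion; handling atypical $x_1$ crudely does not help because the obstruction already occurs for typical $x_1$.

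The paper's proof avoids conditioning on $x_1$ altogether: it bounds the joint moments $\E[S_{i_1}\cdots S_{i_k}]$ by rotating so that $x_1$ is parallel to $e_1$, noting that each event $\|x_1+y_i\|_2\le\eps\sqrt t$ implies $\|\tilde z_i\|_2\le\eps\sqrt t$ for the last $t-1$ coordinates of the rotated $y_i$, which are mutually independent and independent of $x_1$; \Cref{lem:gaussian_comparision} relates that probability back to $p$ up to a factor $C'^t\le n^{0.001}$, and \Cref{lem:annoying_concentration} converts these moment bounds into concentration. That said, the obstacle you identified is not an artifact of your approach: the $e^{\Theta(\sqrt t)}$ fluctuation of $\E[\deg(x_1)\mid x_1]$ is a genuine feature of the problem and is in direct tension with the statement as written (relative $\eps$-concentration around the \emph{unconditional} mean). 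If you continue down this path, I would (i) re-examine the moment count in \Cref{lem:annoying_concentration} --- in the expansion of $\E[S^m]$ the number of ways to distribute the $m$ factors over $k$ chosen indices is the number of surjections from $[m]$ onto $[k]$, which is of order $k^m$ rather than the stars-and-bars count $\binom{m+k}{k}$ --- and (ii) ask whether the downstream argument in \Cref{lem:large_matching} actually needs concentration around $\E[\deg(x_1)]$, or only around a typical (e.g.\ median) degree, which is what your conditional analysis does establish.
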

\begin{restatable}{lemma}{LemmaLargeMatching}\label{lem:large_matching}
   Consider the setting of Lemma \ref{lem:degree_concentration}. If $C$ is a sufficiently large constant then $H$ has a matching of size $\ge n/2(1-\eps)$ with probability at least $1-1/n^{50}$.
\end{restatable}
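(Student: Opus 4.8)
The plan is to show that $H$ is close to biregular and then invoke the defect (Hall/König) form of the matching theorem, for which near-regularity immediately yields near-perfect matchings. Let $\mu \eqdef \E[\deg(x_1)]$. By symmetry of the construction, $\E[\deg(y_1)] = \mu$ as well: the edge predicate $\|x_i+y_j\|_2\le \eps\sqrt{t}$ is symmetric in its two arguments and all points are i.i.d.\ $\calN(0,I_t)$, so $\deg(y_1)$ has the same distribution as $\deg(x_1)$. Hence \Cref{lem:degree_concentration} applies verbatim to every vertex of $H$ on either side: for each fixed $x_i\in A$ and each fixed $y_j\in B$, with failure probability at most $1/n^{99}$ the degree lies in $[(1-\eps)\mu,(1+\eps)\mu]$. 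A union bound over all $n$ vertices shows that, with probability at least $1 - 1/n^{98} \ge 1 - 1/n^{50}$, \emph{every} vertex of $H$ (on both sides) has degree in $[(1-\eps)\mu,(1+\eps)\mu]$. We condition on this event; note it is only meaningful because $\mu$ is polynomially large in $n$, which is exactly where the upper bound $t\le \log n/(C\log(1/\eps))$ is used, and which is already built into \Cref{lem:degree_concentration}.

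Next I would run the standard edge-counting/expansion argument. Fix any nonempty $S\subseteq A$. Every edge incident to $S$ has its other endpoint in $N(S)$, so counting from the $S$-side gives $e(S,N(S)) \ge (1-\eps)\mu|S|$, while counting from the $N(S)$-side gives $e(S,N(S)) \le (1+\eps)\mu|N(S)|$; therefore $|N(S)| \ge \tfrac{1-\eps}{1+\eps}|S| \ge (1-2\eps)|S|$, so the deficiency satisfies $|S| - |N(S)| \le 2\eps|S| \le 2\eps\cdot\tfrac n2 = \eps n$. By the defect version of Hall's theorem (equivalently, the alternating-path argument on the set of unmatched vertices), $H$ has a matching of size at least $|A| - \max_{S\subseteq A}\big(|S| - |N(S)|\big) \ge \tfrac n2 - \eps n = \tfrac n2(1-2\eps)$. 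To obtain the stated constant, I would simply run the whole argument with $\eps$ replaced by $\eps/4$: this only shrinks $H$ (so any matching found persists in the original $H$), the hypotheses of \Cref{lem:degree_concentration} are preserved once $C$ is enlarged by a constant factor, and the resulting matching has size $\ge \tfrac n2(1 - \eps/2) \ge \tfrac n2(1-\eps)$, with total failure probability still at most $1/n^{50}$.

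The only real obstacle here is bookkeeping rather than a conceptual difficulty: confirming that the union bound over all $n$ vertices still comfortably beats the target $1/n^{50}$, that $\E[\deg(x_1)] = \E[\deg(y_1)]$ so the $B$-side degrees are controlled by the same statement, and that the leading constant in the matching bound can be pushed down to $1$ by the rescaling above. The substantive input---that $H$ is near-biregular with polynomially large degree---is supplied entirely by \Cref{lem:degree_concentration,lem:concentrated_norms}, so beyond these checks the proof reduces to the one-line Hall computation.
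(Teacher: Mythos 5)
Your proposal is correct, and it reaches the conclusion by a genuinely different final step than the paper. Both proofs share the same substantive input: apply \Cref{lem:degree_concentration} to every vertex (using symmetry so that the $B$-side degrees are governed by the same statement) and union-bound over all $n$ vertices to conclude that, with probability at least $1-1/n^{98}$, every degree lies in $[(1-\eps)\Delta,(1+\eps)\Delta]$ where $\Delta=\E[\deg(x_1)]$. Where you diverge is in extracting the matching from near-biregularity. The paper writes down the fractional solution $z_{ij}=((1+\eps)\Delta)^{-1}$ for every edge, checks feasibility of the bipartite matching LP using the degree upper bounds, lower-bounds the objective by $|A|(1-O(\eps))$ using the degree lower bounds, and then invokes integrality of the bipartite matching polytope. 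You instead run the edge-counting expansion bound $|N(S)|\ge\frac{1-\eps}{1+\eps}|S|$ for every $S\subseteq A$ and apply the defect form of Hall's theorem, getting a matching of size at least $\frac n2(1-2\eps)$. These are two standard routes from near-regularity to near-perfect matchings; yours is more elementary and self-contained (no appeal to LP integrality), while the paper's is arguably more mechanical to verify. Both arguments lose a constant factor in front of $\eps$ and the paper is in fact looser about this than you are (it proves $|A|(1-O(\eps))$ and asserts ``the lemma follows''), so your explicit rescaling of $\eps$ to $\eps/4$ --- which only deletes edges and is absorbed by enlarging $C$ --- is a legitimate and if anything more careful way to land exactly on the stated bound.
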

For the proof of \Cref{lem:large_matching}, recall the linear programming relaxation of the maximum matching problem on a bipartite graph $H = (A\cup B, E)$: $\max \sum_{(i,j) \in E} z_{ij}$ subject to
\begin{align*}
    &0\le z_{ij}\le 1, \, \forall (i,j)\in E \\
    &\sum_{j \in B} z_{ij} \le 1,  \, \forall i \in A, \quad
    \sum_{i \in A} z_{ij} \le 1, \, \forall j \in B.
\end{align*}
It is a classic fact that any feasible (possibly fractional) solution to the above LP can be rounded to an integral solution with cost at least as large as the fractional solution (even in polynomial time but we only need existence here)~\cite{matching_LP_notes}. Using this, we can prove Lemma \Cref{lem:large_matching}, whose full proof is deferred to Appendix \ref{sec:appendix_lower_max_matching}. The main idea is that we set each $z_{ij}$ to be $((1+ \eps)\Delta)^{-1}$ where $\Delta = \E[\deg(x_1)]$. The capacity constraints are satisfied since the capacity constraint for vertex $x_i$ sums to $\deg(x_i)/((1+\eps)\Delta) \le 1$ by our degree concentration of \Cref{lem:degree_concentration}. One can check that this assignment also leads to a large matching.


\begin{proof}[Proof of \Cref{thm:matching_sqrt(2)_lower_bound}]
    We will prove the theorem by demonstrating a matching with large objective value in the projected dimension. As noted earlier,  $\matching(P) = \sqrt{2} \cdot n/2$. 
Furthermore, $Ge_i$ for $1 \le i \le n$ are all independent standard Gaussian vectors, scaled down by $1/\sqrt{d}$. Partition the $Ge_i$'s into two disjoint sets of size $n/2$. Lemma \ref{lem:large_matching} implies that (with high probability) we can find a matching of size $\ge n/2(1-\eps/100)$ among points across these two sets, satisfying that the vertices on every edge of the matching sum to a vector with $\ell_2$ norm at most $\eps/100$. Furthermore, Lemma \ref{lem:concentrated_norms} implies that at least a $(1-\eps/1000)$ fraction of vectors in both sets satisfy that their $\ell_2$ norms are $1\pm \eps/100$ (again with high probability). By a union bound, at least a $(1-\eps/10)$ fraction of the edges in the matching have \emph{both} endpoints with $\ell_2$ norms in $1\pm \eps$. That is, we can find $ n/2(1-\eps/10)$ pairs of vectors (in the projected space) such that every pair $(x,y)$ satisfies (1) $\|x + y\|_2 \le \eps/100$, (2) $\|x\|_2, \|y|_2 \in 1\pm \eps$. By the reverse triangle inequality, this implies 
    \begin{align*}
        \|x - y\|_2 &\ge \left| \, 2\cdot \|x\|_2 - \|x + y\|_2 \right |  \\
        &= \left | \, 2(1 \pm \eps) - \|x + y\|_2  \right | \ge 2 - \Omega(\eps).
    \end{align*}
    Thus, with high probability (more precisely with probability at least $1-\exp\left( - \Omega(n \eps^2) - 1/n^{50}\right)$ which follows from combining the failure probabilities of Lemmas \ref{lem:large_matching} and \ref{lem:concentrated_norms}), we have demonstrated that a matching of weight at least $\frac{n}2 (1-\eps) \cdot (2 - \Omega(\eps)) \ge \text{OPT}(P) \cdot (\sqrt{2} - \Omega(\eps))$ exists in the projected space. 
\end{proof}


We prove a similar lower bound as in Theorem \ref{thm:matching_sqrt(2)_lower_bound} for the maximum weight TSP problem. The theorem below follows from carefully combining Theorem \ref{thm:matching_sqrt(2)_lower_bound} on small subsets of the input dataset to form a large cost tour.

\begin{restatable}{theorem}{LowerMaxTSP}\label{thm:lower_max_TSP}
      Let $\eps \in (0,1)$, $n = \Omega(1/\eps^3)$ and $P = \{e_1, \cdots, e_n\}$ be the standard basis vectors in $\R^n$. Let $G$ be a Gaussian JL map onto $t$ dimensions. 
      If $  \frac{C\log(1/\eps)}{\eps^2} \le t \le \frac{\log(n)}{C \log(1/\eps)}$ for a sufficiently large constant $C$, then with probability $\ge 1-\exp(-\Omega(n \eps^3)) - 1/n^{25}$, we have
    \[\tsp(GP) \ge \left(\sqrt{2} - \eps \right) \cdot \tsp(P). \]
\end{restatable}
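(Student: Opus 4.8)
The plan is to exhibit, with the stated probability, an explicit Hamiltonian cycle of $G(P)$ of total length at least $(2-O(\eps))\,n$. Since every pairwise distance in $P=\{e_1,\dots,e_n\}$ equals $\sqrt2$ and any Hamiltonian cycle on $n$ vertices has exactly $n$ edges, $\tsp(P)=\sqrt2\,n$, so such a cycle certifies $\tsp(G(P))\ge(\sqrt2-O(\eps))\tsp(P)$, and rescaling $\eps$ by a constant finishes. I would partition $P=P_1\cup\dots\cup P_K$ into $K=\lceil1/\eps\rceil$ groups of size $M=\Theta(\eps n)$. Because $G$ acts on disjoint blocks of coordinates, the point clouds $G(P_1),\dots,G(P_K)$ are mutually independent, each an i.i.d.\ sample from $N(0,\tfrac{1}{t}I_t)$; and since $n=\Omega(\eps^{-3})$ we have $M=\Omega(\eps^{-2})$ and $\log M=\Theta(\log n)$, so after enlarging the constant $C$ the hypotheses of \Cref{thm:matching_sqrt(2)_lower_bound} hold for each $G(P_j)$ at scale $M$.

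Fix a group $P_j$. By the proof of \Cref{thm:matching_sqrt(2)_lower_bound} (that is, \Cref{lem:large_matching} and \Cref{lem:concentrated_norms} at scale $M$), with probability $\ge1-\exp(-\Omega(M\eps^2))-M^{-50}$ there is a set $\mu_j$ of at least $(1-\eps)\tfrac{M}{2}$ disjoint pairs $\{u,v\}$ of points of $G(P_j)$ with $\|u+v\|\le\eps$ and $\|u\|,\|v\|\in1\pm\eps$. Such a matching by itself only yields a tour of length $\approx n$: it supplies $\approx n/2$ edges of length $\approx2$, but the $\approx n/2$ \emph{connector} edges of any tour built from it are generically of length $\approx\sqrt2$. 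To force the connectors to also be near-diametral, I would \emph{bucket} the pairs of $\mu_j$ by direction: sweep through them, and whenever a pair $\{u_1,v_1\}$ is still unbucketed, open a new bucket containing every remaining pair $\{u,v\}$ one of whose unit directions lies within distance $O(\sqrt\eps)$ of $u_1/\|u_1\|$ or of $-u_1/\|u_1\|$, relabelling each such pair so that the endpoint with direction near $u_1/\|u_1\|$ is its ``$u$''. Distinct bucket axes are then $\Omega(\sqrt\eps)$-separated on the sphere, so there are at most $(C/\sqrt\eps)^{t-1}$ buckets; and within one bucket every $u_i/\|u_i\|$ has inner product $\ge1-O(\eps)$ with $u_1/\|u_1\|$ while every $v_i/\|v_i\|$ (which lies within $O(\sqrt\eps)$ of $-u_i/\|u_i\|$ since $\|u_i+v_i\|\le\eps$ with near-unit norms) has inner product $\le-(1-O(\eps))$ with $u_1/\|u_1\|$. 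Hence every edge of the alternating path $u_1,v_1,u_2,v_2,\dots$ through that bucket --- the original matching edges $\{u_i,v_i\}$ and the connectors $\{v_i,u_{i+1}\}$ alike --- joins a point with large positive $u_1$-component to one with large negative $u_1$-component, so it has length $\ge2-O(\eps)$. Concatenating the bucket-paths and appending the $\le\eps M$ unmatched points of $G(P_j)$ at the end (which never decreases cost) yields a Hamiltonian path $\Pi_j$ of $G(P_j)$ with at least $(1-\eps)M-(C/\sqrt\eps)^{t-1}$ edges of length $\ge2-O(\eps)$, so $\cost(\Pi_j)\ge(2-O(\eps))M$; here we use $(C/\sqrt\eps)^{t-1}=O(\eps M)$, which holds because the upper bound $t\le\frac{\log n}{C\log(1/\eps)}$ forces $(C/\sqrt\eps)^{t-1}\le n^{O(1/C)}\le\eps^2 n$ once $C$ is large enough and $n=\Omega(\eps^{-3})$.

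Finally, chain $\Pi_1,\dots,\Pi_K$ into one Hamiltonian path of $G(P)$ and close it into a Hamiltonian cycle; the extra edges only help, so $\tsp(G(P))\ge\sum_{j=1}^K\cost(\Pi_j)\ge K\cdot(2-O(\eps))M=(2-O(\eps))\,n$, which after rescaling $\eps$ is $(\sqrt2-\eps)\tsp(P)$. For the probability, a union bound over the $K=\lceil1/\eps\rceil$ groups bounds the failure probability by $\lceil1/\eps\rceil\big(\exp(-\Omega(M\eps^2))+M^{-50}\big)=\lceil1/\eps\rceil\big(\exp(-\Omega(n\eps^3))+(\eps n)^{-50}\big)\le\exp(-\Omega(n\eps^3))+n^{-25}$, using $n=\Omega(\eps^{-3})$ to absorb the powers of $1/\eps$.

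The step I expect to be the crux is the bucketing together with the bound on the number of buckets: one has to verify (i) that the $O(\sqrt\eps)$-closeness of $v_i/\|v_i\|$ to $-u_1/\|u_1\|$ (rather than $O(\eps)$) still keeps \emph{every} edge of the alternating path --- matching edges and connectors alike --- of length $2-O(\eps)$, which is exactly what upgrades a near-antipodal matching from a tour of value $\approx n$ to one of value $\approx 2n$; and (ii) that the bucket count $(C/\sqrt\eps)^{t-1}$ stays negligible next to $M=\Theta(\eps n)$ throughout the entire admissible window $\frac{C\log(1/\eps)}{\eps^2}\le t\le\frac{\log n}{C\log(1/\eps)}$ --- this is the only place the upper bound on $t$ is used --- on top of the routine union-bound accounting for \Cref{thm:matching_sqrt(2)_lower_bound} over the $\lceil1/\eps\rceil$ groups.
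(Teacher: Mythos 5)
Your proof is correct, but it reaches the long tour by a genuinely different mechanism than the paper. The paper also splits $P$ into $1/\eps$ groups of size $\eps n$, but it then applies \Cref{thm:matching_sqrt(2)_lower_bound} to each \emph{consecutive pair} of groups, obtaining a near-antipodal bipartite matching between $X_i$ and $X_{i+1}$ for every $i$; the union of these matchings is a collection of $\eps n$ edge-disjoint paths threading through all the groups, so \emph{every} path edge is already a matching edge of length $\approx 2$, and the only uncontrolled edges are the $\eps n$ connections needed to splice the paths into a single tour. In other words, the paper sidesteps the connector problem entirely by chaining matchings \emph{across} groups, whereas you solve it head-on \emph{inside} each group via the directional bucketing: ordering the near-antipodal pairs so that consecutive points alternate between a $+a$ cap and a $-a$ cap forces the connectors to be near-diametral too, at the cost of at most $(C/\sqrt{\eps})^{t}$ short bucket-transition edges, which you correctly control using the upper bound $t\le \log n/(C\log(1/\eps))$ (the same hypothesis the paper uses, but for a different purpose). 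Your route is more technical but proves a stronger local statement --- a single i.i.d.\ Gaussian cloud of $m=\Omega(\eps^{-2})$ points admits a Hamiltonian \emph{path} of weight $(2-O(\eps))m$ --- which in particular makes the partition into $1/\eps$ groups unnecessary in your argument: you could run the bucketing once on all of $G(P)$ and would even get the slightly better failure probability $\exp(-\Omega(n\eps^2))+n^{-50}$. The paper's chaining is shorter and needs no geometry beyond \Cref{thm:matching_sqrt(2)_lower_bound} itself. Both arguments share the same harmless looseness in absorbing the $1/\eps$ factor from the union bound into $\exp(-\Omega(n\eps^3))$.
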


\section{$\sqrt{2}$-Approximation for Maximum Matching}
We consider dimension reduction for maximum matching without the doubling assumption.
We provide a similar result of $2$-approximation for maximum TSP in \Cref{sec:euclid_max_tsp}.
\begin{theorem}\label{thm:sqrt2_for_matching_Euclidean_1dim}
Let $\epsilon>0$ and $d\in \N$.
There is a random linear map $g:\R^d\to\R$, such that for every set $P\subset\R^d$ of even size, the ratio $\E [\matching(g(P))]/\matching(P)$ is in $[1,\sqrt{2}]$. Moreover, for a Gaussian JL map $G\in\R^{t\times d}$ with $t=O(\eps^{-2})$, we have with probability at least $2/3$ that $\matching(G(P))$ is a $(\sqrt{2}+\eps)$-approximation of $\matching(P)$.
\end{theorem}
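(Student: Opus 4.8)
The plan is to split the statement into two parts: the expectation bound for a one-dimensional random projection, and the high-probability statement for a $t=O(\eps^{-2})$-dimensional Gaussian JL map. For the first part, I would take $g(x) = \langle v, x\rangle$ where $v$ is a standard Gaussian in $\R^d$ (or uniform on the sphere; the direction is all that matters). The lower bound $\E[\matching(g(P))] \ge \matching(P)$ is false as stated for a single coordinate, so the intended map is surely $g(x) = \|x\|$ composed with a signed projection, or more likely $g$ is the map $x \mapsto \langle v,x\rangle$ and the claimed ratio lower bound of $1$ should be read as: there is a \emph{fixed feasible matching} whose $g$-cost in expectation is at least $\matching(P)$ — namely take the optimal matching $M^\star$ of $P$ and observe $\E_v\big[\sum_{\{p,q\}\in M^\star} |\langle v, p-q\rangle|\big] = \sqrt{2/\pi}\,\sum \|p-q\|$; after rescaling $g$ by $\sqrt{\pi/2}$ we get $\E[\matching(g(P))] \ge \E[\text{cost}_g(M^\star)] = \matching(P)$. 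So the first step is: rescale the one-dimensional Gaussian projection so that $\E|g(p)-g(q)| = \|p-q\|$ for every pair, which immediately gives the lower bound via the optimal matching of $P$ as a witness.

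For the upper bound $\E[\matching(g(P))] \le \sqrt{2}\,\matching(P)$, I would argue pointwise: for \emph{any} matching $M'$ of $P$ and any realization of $g$, we have $\sum_{\{p,q\}\in M'} |g(p)-g(q)| \le \sum_{\{p,q\}\in M'} \|p-q\| \le \matching(P)$ if $g$ is $1$-Lipschitz — but after the rescaling above $g$ is only $\sqrt{2/\pi}\cdot(\text{Lipschitz constant of }\langle v,\cdot\rangle)$, which is unbounded. The right approach is instead a charging argument on the \emph{structure} of the optimal $1$-dimensional matching. On the line, the maximum matching of a point multiset of even size $n$ has a clean combinatorial description: sort the points $z_1 \le \cdots \le z_n$, and the optimal matching pairs the bottom half with the top half, so $\matching(g(P)) = \sum_{i=1}^{n/2} (z_{n/2+i} - z_i)$. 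I would bound this by relating it to $\sum_i |z_i - \text{median}|$ and then to the original optimal matching via the fact that projection does not increase distances in expectation too much; the constant $\sqrt{2}$ should emerge from comparing the "median deviation" functional on the line to the matching value in $\R^d$. The cleanest route: show $\matching(g(P)) \le \sum_{p\in P}|g(p) - \mu|$ for the median $\mu$ of $g(P)$, then take expectations and use that for any fixed point $\mu_0$, $\E_g \sum_p |g(p)-g(\mu_0)| \le \sum_p \E|g(p)-g(\mu_0)| = \sum_p \|p - \mu_0\|$ where we now pick $\mu_0$ to be (a preimage of) a $1$-median of $P$; finally bound $\sum_p \|p-\mu_0\|$ against $\matching(P)$ by the standard fact that the $1$-median cost is at most the max-matching value times a constant — here one needs the factor-$2$ style bound that gives $\sqrt{2}$ overall after being careful about the median-vs-mean and the $\sqrt{2/\pi}$ normalization interacting.

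For the high-probability statement with $t=O(\eps^{-2})$, I would use a Gaussian JL map $G\in\R^{t\times d}$ and compose with the one-dimensional argument. The idea: $\matching(G(P)) \ge \matching(P) - O(\eps)\matching(P)$ follows by applying a JL-type distortion bound to the optimal matching $M^\star$ of $P$ (each of its $n/2$ edges is preserved up to $1\pm\eps$ in expectation, and one controls the sum by linearity plus a Markov/Chernoff argument, exactly as in the proof of \Cref{thm:ddim_reduction_max_matching}). The matching upper bound $\matching(G(P)) \le (\sqrt 2 + \eps)\matching(P)$ is the substantive direction: I would further project $G(P)\subset\R^t$ down to one dimension via a random unit vector $u\in\R^t$, note that $\matching(G(P)) \le \text{(something)}\cdot \matching(u^\top G(P))$ is the \emph{wrong} direction, so instead I would bound $\matching(G(P))$ directly by a net argument over directions in $\R^t$: for a fine net $N$ on the sphere $S^{t-1}$ of size $2^{O(t)}$, every matching edge of $G(P)$ has its length approximated by its projection onto some net direction, and for each fixed direction the one-dimensional argument of the first part (in expectation, boosted to high probability by a median/Chernoff argument over the $n/2$ edges) gives the $\sqrt 2$ factor; a union bound over the $2^{O(t)} = 2^{O(\eps^{-2})}$ directions costs only a constant in the exponent relative to the $\exp(-\Omega(n))$-type concentration, since $n = \Omega(\eps^{-2})$ can be taken large. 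The main obstacle, and the place I expect to spend the most care, is the upper bound on the one-dimensional matching value — establishing that the line-matching functional is bounded by $\sqrt 2$ times the original max-matching rather than $2$ times it requires exploiting the \emph{signed} cancellation $\E|g(p)-g(q)| = \sqrt{2/\pi}\|p-q\|$ rather than the naive triangle-inequality bound, and correctly combining the $\sqrt{2/\pi}$ normalization with the factor-$2$ median-versus-matching inequality so that the product is exactly $\sqrt 2$ and not larger; getting this constant sharp (and matching the $\sqrt 2$ lower bound of \Cref{thm:matching_sqrt(2)_lower_bound}) is the delicate step.
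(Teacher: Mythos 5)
Your skeleton is right where it overlaps with the paper: rescale the one-dimensional Gaussian projection so that $\E|g^\top x|=\|x\|$ (the paper takes i.i.d.\ $N(0,\pi/2)$ entries), get the lower bound by evaluating the optimal matching $M^\star$ of $P$ as a witness, and for the upper bound pass through the chain ``$1$-dim max matching $\le$ $1$-dim $1$-median cost, whose expectation is at most the $d$-dim $1$-median cost.'' But there is a genuine gap at the final and decisive step: you need the inequality $\min_{c\in\R^d}\sum_{p\in P}\|c-p\|\le\sqrt{2}\cdot\matching(P)$, and you do not have an argument for it. You attribute the constant $\sqrt 2$ to ``correctly combining the $\sqrt{2/\pi}$ normalization with the factor-$2$ median-versus-matching inequality,'' but this cannot work: the $\sqrt{2/\pi}$ normalization is applied identically in the lower- and upper-bound computations (it is exactly what makes $\E|g^\top x|=\|x\|$), so it cancels from the ratio and contributes nothing; and numerically $2\sqrt{2/\pi}\approx 1.60\neq\sqrt 2$. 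A triangle-inequality (``factor-$2$ style'') comparison of the $1$-median cost to the matching value would leave you with a $2$-approximation, not $\sqrt2$.

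The paper's missing ingredient is a genuinely geometric lemma (\Cref{lem:median_vs_max_matching}): there exists a perfect matching $M^*$ of $P$ that is a \emph{Tverberg graph}, i.e.\ the diametral balls $B\bigl(\tfrac{u+v}{2},\tfrac{\|u-v\|}{2}\bigr)$ of its edges share a common point $p^*$ (a result of Pirahmad--Polyanskii--Vasilevskii, \Cref{lem:exists_Tverberg_matching}). For each such edge the angle $u p^* v$ is non-acute, so by Cauchy--Schwarz $\|u-p^*\|+\|v-p^*\|\le\sqrt{2}\sqrt{\|u-p^*\|^2+\|v-p^*\|^2}\le\sqrt{2}\|u-v\|$; summing over $M^*$ bounds the $1$-median cost by $\sqrt{2}\sum_{\{u,v\}\in M^*}\|u-v\|\le\sqrt{2}\,\matching(P)$. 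This is where $\sqrt2$ comes from, and without it (or an equivalent) your proof only yields a constant strictly worse than $\sqrt2$. Separately, your net-over-directions argument for the ``moreover'' part is more elaborate than needed: once the $1$-median lemma is in hand, one bounds $\matching(G(P))\le\sum_{p}\|Gp-Gc^*\|$ directly (the first inequality of \Cref{lem:median_vs_max_matching} holds in $\R^t$), and concentration of $\|Gx\|$ (chi-squared moments) with $t=O(\eps^{-2})$ finishes the job without any union bound over $2^{O(\eps^{-2})}$ directions or any largeness assumption on $n$.
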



In order to prove this theorem, we use the following lemma,
which may be of independent interest.
For a point set $P\subset\R^d$, the $1$-median of $P$ finds a point $c \in \R^d$ that minimizes $\sum_{p\in P} \|c-p\|$.

\begin{lemma}\label{lem:median_vs_max_matching}
For every set $P\subset\R^d$, the cost of the $1$-median of $P$ is a $\sqrt{2}$-approximation to $\matching(P)$. That is, 
\begin{align*}
    &\max_{\text{matching $M$ on $P$}} \sum_{\{u,v\}\in M} \|u-v\|\\
    \leq &\min_{c\in \R^d} \sum_{p\in P} \|c-p\| 
    \leq \sqrt{2} \max_{\text{matching $M$ on $P$}} \sum_{\{u,v\}\in M} \|u-v\|.
\end{align*}
\end{lemma}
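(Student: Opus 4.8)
The plan is to prove the two inequalities in \Cref{lem:median_vs_max_matching} separately.

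\paragraph{Lower bound (matching $\leq$ median cost).} Fix any matching $M$ on $P$ and let $c^\star$ be an optimal $1$-median of $P$. For each edge $\{u,v\}\in M$, the triangle inequality gives $\|u-v\| \leq \|u-c^\star\| + \|c^\star - v\|$. Summing over all edges of $M$, and using that $M$ touches each point of $P$ at most once, yields $\sum_{\{u,v\}\in M}\|u-v\| \leq \sum_{p\in P}\|p - c^\star\| = \min_{c}\sum_{p\in P}\|c-p\|$. This direction is routine.

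\paragraph{Upper bound (median cost $\leq \sqrt{2}\cdot$ matching).} This is the substantive direction. The idea is to exhibit a single center $c$ whose cost is at most $\sqrt 2$ times $\matching(P)$; the natural choice is to build $c$ from a near-optimal matching. Concretely, I would take a maximum matching $M^\star$ on $P$ (assume $|P|$ even; otherwise add a dummy copy of some point, which changes nothing up to the usual care), and for each edge $\{u,v\}\in M^\star$ consider its midpoint $m_{uv} = \tfrac{u+v}{2}$. One candidate is to let $c$ be, say, the midpoint of the longest edge, or more robustly to average over the structure; but a cleaner route is a direct geometric bound. For a single pair $\{u,v\}$ and \emph{any} point $c$, we have $\|u-c\|^2 + \|v-c\|^2 = \tfrac12\|u-v\|^2 + 2\|m_{uv}-c\|^2$ (the parallelogram/median identity). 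In particular, taking $c = m_{uv}$ gives $\|u-c\| + \|v-c\| \leq \sqrt{2(\|u-c\|^2+\|v-c\|^2)} = \sqrt{2}\cdot\tfrac{\|u-v\|}{\sqrt2}\cdot\sqrt2$... so for \emph{that} edge the contribution is exactly $\|u-v\|$, which is too weak globally since $c$ must be shared. The fix is to bound, for the shared center $c$, the sum $\sum_{p}\|p-c\|$ in terms of $\sum_{\text{edges}}(\|u-c\|+\|v-c\|)$ and then relate each pair's contribution to $\|u-v\|$ using a worst-case $c$. The correct clean statement is: for any $c$ and any pair $\{u,v\}$, $\|u-c\|+\|v-c\| \leq \sqrt{2}\,\sqrt{\|u-v\|^2 + \big(\,?\,\big)}$, which is not bounded without further info — so instead I would argue via the optimal center and a \emph{random} pairing or an averaging/convexity argument: pick $c$ to be the $1$-mean (centroid) $\bar P$, bound $\sum_p\|p-\bar P\| \leq \sqrt{|P|\sum_p\|p-\bar P\|^2}$ by Cauchy–Schwarz, relate $\sum_p\|p-\bar P\|^2 = \tfrac1{|P|}\sum_{i<j}\|p_i-p_j\|^2$ (the standard variance identity), and then compare $\sum_{i<j}\|p_i-p_j\|^2$ to a matching: for each pair $\{u,v\}$, $\|u-v\|^2 \leq \|u-v\|\cdot \mathrm{diam}$, and a maximum matching has cost at least a constant fraction of $\sum_p\|p-\bar P\|$ — but chasing the exact constant $\sqrt2$ this way looks lossy. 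The approach I actually expect to work, and would pursue first, is: let $M^\star$ be a maximum matching and set $c$ equal to one of the matched points chosen so that its half-edge plus a telescoping/triangle-inequality argument through $M^\star$ pays for all of $P$; more precisely, for $p\in P$ matched to $p'$, route $\|p-c\| \leq \|p-p'\| + \|p'-c\|$ is again too lossy. Given the delicacy, the cleanest correct path is the identity route: for $c^\star$ the $1$-median and $M^\star$ a max matching, pair up $P$ by $M^\star$ and use $\|u-c^\star\|+\|v-c^\star\| \leq \sqrt{2}\,\|u-v\| + \sqrt2\,\|\tfrac{u+v}2 - c^\star\|\cdot(\text{something})$ — and then observe $\sum_{\text{edges}}\|\tfrac{u+v}2-c^\star\|$ is itself at most a $1$-median-type cost on the midpoints, which is dominated by $\matching(P)$. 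I would make this rigorous by a clean one-line inequality: for any $c$ and pair $\{u,v\}$, $\|u-c\|+\|v-c\| \le \sqrt{2}\big(\|u-v\| + 2\|\tfrac{u+v}{2}-c\|\big)$ is false in general; the true sharp bound is $\|u-c\|+\|v-c\|\le \sqrt{2\,(\|u-v\|^2/2 + 2\|\tfrac{u+v}2-c\|^2)}$, and then I would choose $c$ as the $1$-median of the \emph{midpoints} $\{m_e\}$ and recurse/iterate, noting midpoint diameters shrink.

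\paragraph{Main obstacle and the probabilistic part.} The crux is getting exactly the constant $\sqrt 2$ (not $2$) in the upper bound; I expect the right tool is the identity $\|u-c\|^2+\|v-c\|^2 = \tfrac12\|u-v\|^2 + 2\|m_{uv}-c\|^2$ together with a \emph{fixed-point} choice of $c$: take $c$ to be the $1$-median of $P$ itself, pair points by a maximum matching, apply $\|u-c\|+\|v-c\|\le\sqrt{2(\|u-c\|^2+\|v-c\|^2)}$, sum, and bound $\sum_e\|m_e-c\|^2$ against $\big(\sum_p\|p-c\|\big)^2$ — solving the resulting self-referential inequality yields the $\sqrt2$ factor. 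Once \Cref{lem:median_vs_max_matching} is in hand, \Cref{thm:sqrt2_for_matching_Euclidean_1dim} follows quickly: for the in-expectation claim, the $1$-median cost is exactly a sum of distances, and a random $1$-dimensional Gaussian projection $g$ preserves each distance in expectation up to the constant $\E|Z| = \sqrt{2/\pi}$ scaling (so after rescaling, $\E\|g(p)-g(q)\| = \|p-q\|$); since $\matching(g(P)) \le (\text{$1$-median cost of }g(P))$ deterministically by the lower bound above, taking expectations and applying \Cref{lem:median_vs_max_matching} in $\R$ gives $\E[\matching(g(P))] \le \sqrt2\,\E[\text{median}(g(P))]$, and $\E[\text{median cost of }g(P)] \le \sum_{\{u,v\}\in M^\star}\E\|g(u)-g(v)\| = \matching(P)$ by picking $c$ in the projected space to be $g(c^\star)$; the lower bound $\E[\matching(g(P))]\ge \matching(P)$ comes from projecting the optimal matching $M^\star$ of $P$ and using $\E\|g(u)-g(v)\|=\|u-v\|$ edge by edge together with $\matching(g(P))\ge \sum_{\{u,v\}\in M^\star}\|g(u)-g(v)\|$. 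For the high-probability $(\sqrt2+\eps)$ statement with $t=O(\eps^{-2})$: upper-bound $\matching(G(P))$ by the $1$-median cost of $G(P)$ evaluated at $G(c^\star)$, which is $\sum_p\|G(p-c^\star)\|$; this sum concentrates to within $(1+\eps)$ of $\sum_p\|p-c^\star\|$ by a JL-type argument on the $n$ vectors $p-c^\star$ (Markov on the expected sum, since each coordinate-wise term concentrates), and $\sum_p\|p-c^\star\| = \text{median}(P) \le \sqrt2\,\matching(P)$ by the lemma; so $\matching(G(P)) \le (\sqrt2+\eps)\matching(P)$ with probability $\ge 2/3$, and combined with the trivial $\matching(G(P))\ge(1-\eps)\matching(P)$ direction (project $M^\star$) this is the claimed approximation after rescaling $\eps$.
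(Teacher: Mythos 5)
Your lower bound is correct and is exactly the paper's argument: triangle inequality edge by edge, using that a matching touches each point at most once.

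The upper bound, however, has a genuine gap: you cycle through several candidate arguments (midpoint of a single edge, centroid plus Cauchy--Schwarz plus the variance identity, telescoping through the matching, a ``self-referential'' inequality on midpoints) and you yourself flag each one as too weak, lossy, or false; you never land on a complete derivation of the constant $\sqrt{2}$. The missing ingredient is a \emph{combinatorial-geometric existence result}, not a smarter averaging: the paper invokes the Tverberg-matching theorem of \cite{PirahmadPV24_TverbergMatching} (\Cref{lem:exists_Tverberg_matching}), which guarantees a perfect matching $M^*$ of $P$ such that the diametrical balls $B\bigl(\tfrac{u+v}{2},\tfrac{\|u-v\|}{2}\bigr)$ over all edges $\{u,v\}\in M^*$ have a \emph{common} point $p^*$. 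For that single point, every matched pair sees $p^*$ at a non-acute angle, so $\|u-p^*\|^2+\|v-p^*\|^2\le\|u-v\|^2$ holds for every edge simultaneously; Cauchy--Schwarz then gives $\|u-p^*\|+\|v-p^*\|\le\sqrt{2}\,\|u-v\|$ per edge, and summing over $M^*$ yields $\min_c\sum_{p\in P}\|c-p\|\le\sum_{p\in P}\|p^*-p\|\le\sqrt{2}\sum_{\{u,v\}\in M^*}\|u-v\|\le\sqrt{2}\,\matching(P)$. Your instinct to use the identity $\|u-c\|^2+\|v-c\|^2=\tfrac12\|u-v\|^2+2\|\tfrac{u+v}{2}-c\|^2$ is on the right track, but without a point $c$ lying in all diametrical balls the term $\|\tfrac{u+v}{2}-c\|$ cannot be controlled uniformly, which is precisely why each of your attempts stalls. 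Note also that $M^*$ need not be a maximum matching --- the argument only needs it to be perfect and Tverberg, after which its cost is bounded by $\matching(P)$ trivially. Your sketch of how \Cref{thm:sqrt2_for_matching_Euclidean_1dim} follows from the lemma is consistent with the paper, but it does not repair the missing step in the lemma itself.
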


\begin{proof}[Proof of \Cref{thm:sqrt2_for_matching_Euclidean_1dim}]
    Let $g\in\R^d$ be a vector of i.i.d. Gaussians distributed $N(0,\pi/2)$. One could verify that for every $x\in\R^d$, $\E[|g^\top x|]=\|x\|$.
    Denote by $c^*$ a point that realizes the $1$-median of $P$, and by $M^*$ a maximum matching of $P$.
    The following holds.
    \begin{align*}
        &\E \big[\max_{\text{matching M}} \sum_{\{u,v\}\in M} |g^\top u-g^\top v|\big] \\
    \geq &\E \big[\sum_{\{u,v\}\in M^*} |g^\top u-g^\top v|\big]
    = \sum_{\{u,v\}\in M^*} \|u-v\|.
    \end{align*}
    This immediately gives the lower bound of the claim. On the other hand, we have
    \begin{align*}
      &  \E \big[\min_{c\in \R} \sum_{p\in P} |g^\top p-c|\big]
    \leq \E \big[\sum_{p\in P} |g^\top p-g^\top c^*|\big]\\
    = &\sum_{p\in P} \|p-c^*\|
    = \min_{c\in \R^d} \sum_{p\in P} \|p-c\|.
    \end{align*}
    Together with \Cref{lem:median_vs_max_matching}, we now have
    \begin{align*}
        &\E \big[\max_{\text{matching M}} \sum_{\{u,v\}\in M} |g^\top u-g^\top v|\big] \\
        \leq &\E \big[\min_{c\in \R} \sum_{p\in P} |g^\top p-c|\big]\\
    \leq &\min_{c\in \R^d} \sum_{p\in P} \|p-c\|\leq \sqrt{2}\sum_{\{u,v\}\in M^*} \|u-v\|.
    \end{align*}
    The ``moreover'' part follows by standard analysis of the expectation and variance of Chi-Squared distribution.
\end{proof}

To prove \Cref{lem:median_vs_max_matching}, we use results on \emph{Tverberg graphs}.
A weighted graph $G=(V,E,w)$ is called a Tverberg graph if its vertex set $V$ is a subset of $\R^d$, the weight of every edge $e=\{u,v\}\in E$ is $w(e)=\|v-u\|_2$, and if one places a diametrical ball $B(uv) = B(\tfrac{v+u}{2}, \tfrac{\|v-u\|_2}{2})$ for every edge $e=\{u,v\}\in E$, then these balls intersect, i.e., $\bigcap_{\{v,u\}\in E} B(vu) \neq \emptyset$.
We will use the following result by \cite{PirahmadPV24_TverbergMatching}.
\begin{lemma}[\cite{PirahmadPV24_TverbergMatching}]\label{lem:exists_Tverberg_matching}
For every pointset $P\subset \R^d$ of even size, there exists a perfect matching that is a Tverberg graph.
\end{lemma}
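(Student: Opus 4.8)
The plan is to exhibit an explicit perfect matching with the required property---the one that \emph{maximizes} total squared edge length---and then certify a common point by a short convex-analytic argument. The first step is to rewrite the Tverberg condition in Thales form: for an edge $e=\{u,v\}$ write $m_e=\tfrac{u+v}{2}$ and $r_e=\tfrac{\|u-v\|_2}{2}$, so that a point $z$ lies in $B(uv)$ iff $\langle z-u,z-v\rangle\le 0$, equivalently $\|z-m_e\|_2^2-r_e^2\le 0$. Hence a matching $M$ is a Tverberg graph precisely when the convex function
\[
F(z)\;=\;\max_{e\in M}\bigl(\|z-m_e\|_2^2-r_e^2\bigr)
\]
attains a nonpositive value somewhere, and the goal becomes to show $\min_z F(z)\le 0$ for a well-chosen $M$.

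Second, I would fix $M$ to be a perfect matching maximizing $\sum_{\{u,v\}\in M}\|u-v\|_2^2$, which exists since there are finitely many matchings. The choice of \emph{maximum} rather than minimum is essential: the minimum-squared-length matching can produce disjoint diametral balls (e.g.\ $\{-2,-1\},\{1,2\}$ on the line), whereas maximality pulls the balls together. Since $F$ is continuous and coercive it has a minimizer $z^*$; after translating, assume $z^*=0$. I would argue by contradiction, setting $\delta:=F(0)>0$ and letting $A$ be the set of active edges, i.e.\ those with $\|m_e\|_2^2-r_e^2=\delta$, equivalently $\langle u,v\rangle=\delta$ (all inner products taken about the origin $z^*$).

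Third, I would combine two optimality conditions. Convex first-order optimality gives $0\in\partial F(0)$, and since $\partial F(0)$ is the convex hull of the gradients $2(0-m_e)=-2m_e$ of the active pieces, there are weights $\mu_e\ge 0$ with $\sum_{e\in A}\mu_e=1$ and $\sum_{e\in A}\mu_e m_e=0$. Maximality of $M$ gives, for any two active edges $e=\{u_e,v_e\}$ and $f=\{u_f,v_f\}$, that neither re-pairing increases the squared length, i.e.\ $\langle u_e,u_f\rangle+\langle v_e,v_f\rangle\ge 2\delta$ and $\langle u_e,v_f\rangle+\langle v_e,u_f\rangle\ge 2\delta$; summing these yields $\langle m_e,m_f\rangle\ge\delta$ for all $e,f\in A$ (the diagonal case being $\|m_e\|_2^2=r_e^2+\delta\ge\delta$). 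Then
\[
0\;=\;\Bigl\|\sum_{e\in A}\mu_e m_e\Bigr\|_2^2\;=\;\sum_{e,f\in A}\mu_e\mu_f\langle m_e,m_f\rangle\;\ge\;\delta\Bigl(\sum_{e\in A}\mu_e\Bigr)^2\;=\;\delta\;>\;0,
\]
a contradiction; hence $F(z^*)\le 0$ and $z^*$ is the sought common point.

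The main obstacle is conceptual rather than computational: identifying the correct extremal object and the right way to certify a \emph{single} common point. A purely pairwise argument (using that the three pairings of four points share the same total squared edge length) shows only that maximality forces every \emph{pair} of diametral balls to intersect, which for $d\ge 2$ is insufficient by Helly's theorem. The step that closes this gap is the observation that the minimizer $z^*$ of $F$ lies in the convex hull of the active midpoints, so the swap inequalities---which control the inner products of midpoints about $z^*$---can be contracted against the barycentric weights $\mu_e$ to force $\delta\le 0$. I would double-check the one nontrivial technical point, namely that the subdifferential of a max of finitely many smooth convex functions equals the convex hull of the gradients of the active pieces, and confirm that degeneracies in the active set cause no difficulty.
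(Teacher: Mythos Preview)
The paper does not prove this lemma; it is quoted without proof from \cite{PirahmadPV24_TverbergMatching}, so there is no in-paper argument to compare against. Your proof is correct and is in fact essentially the argument of the cited reference: choose the perfect matching maximizing the sum of \emph{squared} edge lengths, let $z^*$ minimize $F(z)=\max_{e}\bigl(\|z-m_e\|^2-r_e^2\bigr)$, and combine the first-order condition $0\in\operatorname{conv}\{m_e:e\text{ active}\}$ with the swap inequalities $\langle m_e,m_f\rangle\ge\delta$ to force $\delta\le 0$. The computation you give---that $\langle u_e,v_e\rangle=\delta$ for active edges, that the two re-pairings of $\{u_e,v_e\}$ and $\{u_f,v_f\}$ yield $\langle u_e,u_f\rangle+\langle v_e,v_f\rangle\ge 2\delta$ and $\langle u_e,v_f\rangle+\langle v_e,u_f\rangle\ge 2\delta$, and that their sum is $4\langle m_e,m_f\rangle$---is exactly right, as is the separate treatment of the diagonal term $\|m_e\|^2=r_e^2+\delta\ge\delta$. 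The one technical fact you flag (the subdifferential of a finite max of $C^1$ convex functions is the convex hull of the active gradients) is standard and needs no further justification.
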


\begin{proof}[Proof of \Cref{lem:median_vs_max_matching}]
By triangle inequality, $$\min_{c\in \R^d} \sum_{p\in P} \|c-p\| \geq \max_{\text{matching $M$}} \sum_{\{u,v\}\in M} \|u-v\|.$$ 
It thus remains to prove the other direction.

Denote by $M^*$ a matching given by \Cref{lem:exists_Tverberg_matching}, and by $p^*$ a point in $\bigcap_{\{v,u\}\in M^*} B(vu)$.
Let $\{v,u\}\in M^*$.
Therefore,
\begin{align*}
    \|v-p^*\|+\|u-p^*\|&\leq \sqrt{2}\sqrt{\|v-p^*\|^2+\|u-p^*\|^2}\\
    &\leq \sqrt{2}\|v-u\|,    
\end{align*}
where the first inequality is due to Cauchy–Schwarz inequality and the second inequality is because $p^*$ is inside the ball with $uv$ as diameter and the angle $up^*v$ is not acute, 
as illustrated in~\Cref{fig:tverberg_graph}. 
\begin{figure}[!htbp]
\centering
  \includegraphics[width=5cm]{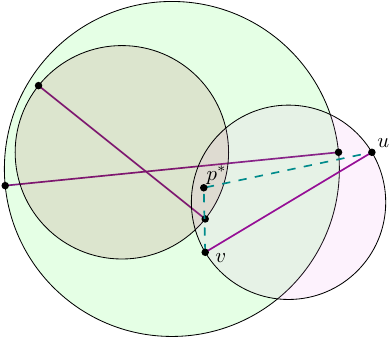}
   \caption{A Tverberg graph and a point in the intersection of all disks.
   The angle $up^*v$ is not acute.}\label{fig:tverberg_graph}
\end{figure}
Thus,
\begin{align*}
   & \min_{c\in \R^d} \sum_{p\in P} \|c-p\| 
\leq \sum_{p\in P} \|p^*-p\| \\
\leq &\sum_{\{u,v\}\in M^*} \sqrt{2} \|u-v\|
\leq \sqrt{2} \max_{\text{matching $M$}} \sum_{\{u,v\}\in M} \|u-v\|,
\end{align*}
concluding the proof.
\end{proof}

\section{Empirical Evaluation}\label{sec:experiments}

\begin{figure*}[!h]
\centering
\includegraphics[width=5.5cm]{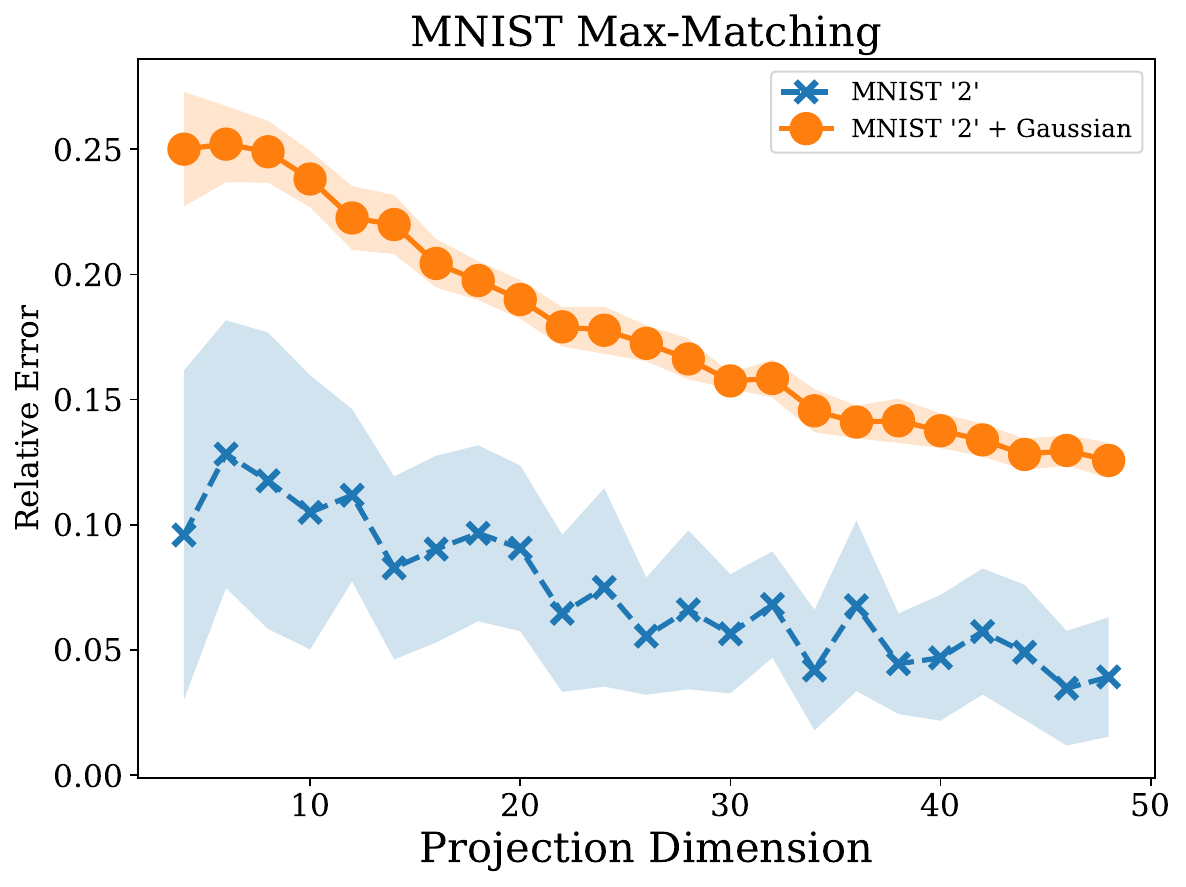} 
\includegraphics[width=5.5cm]{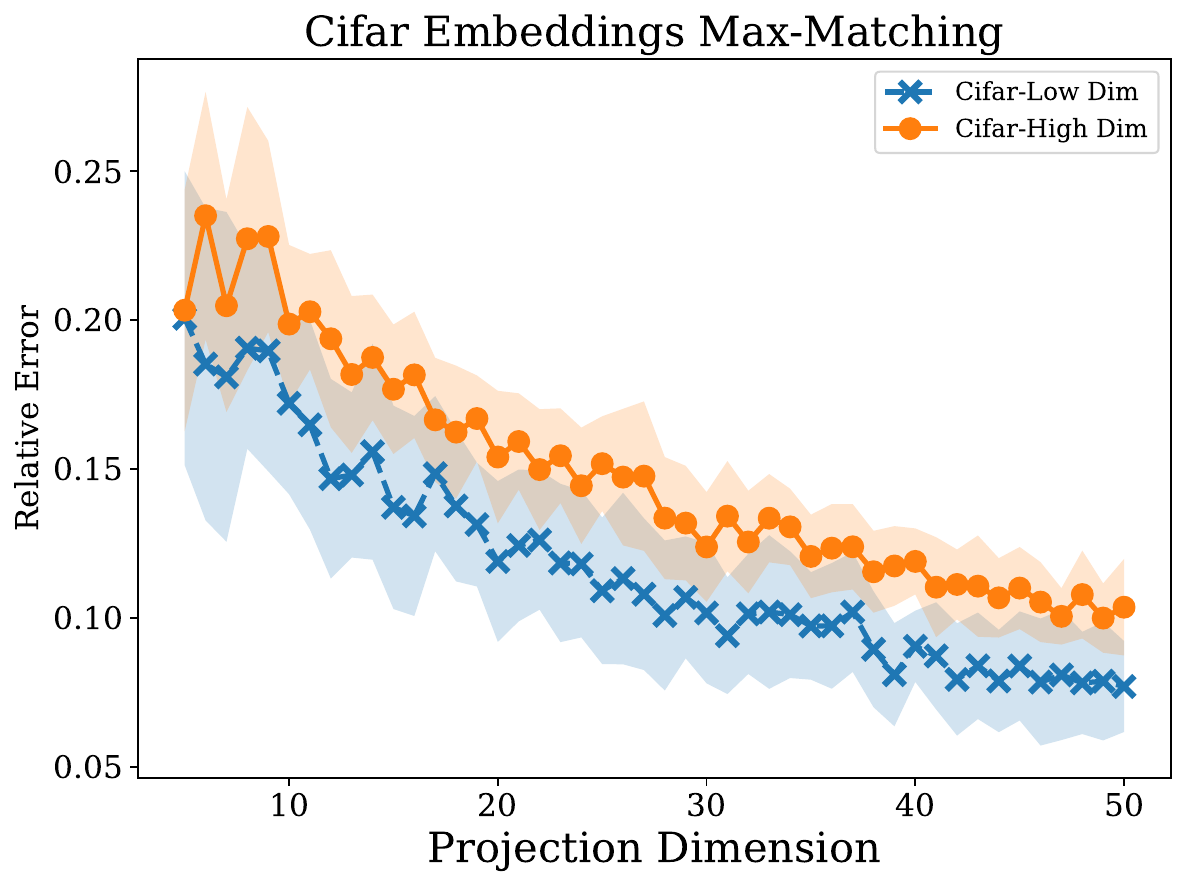} 
\includegraphics[width=5.5cm]{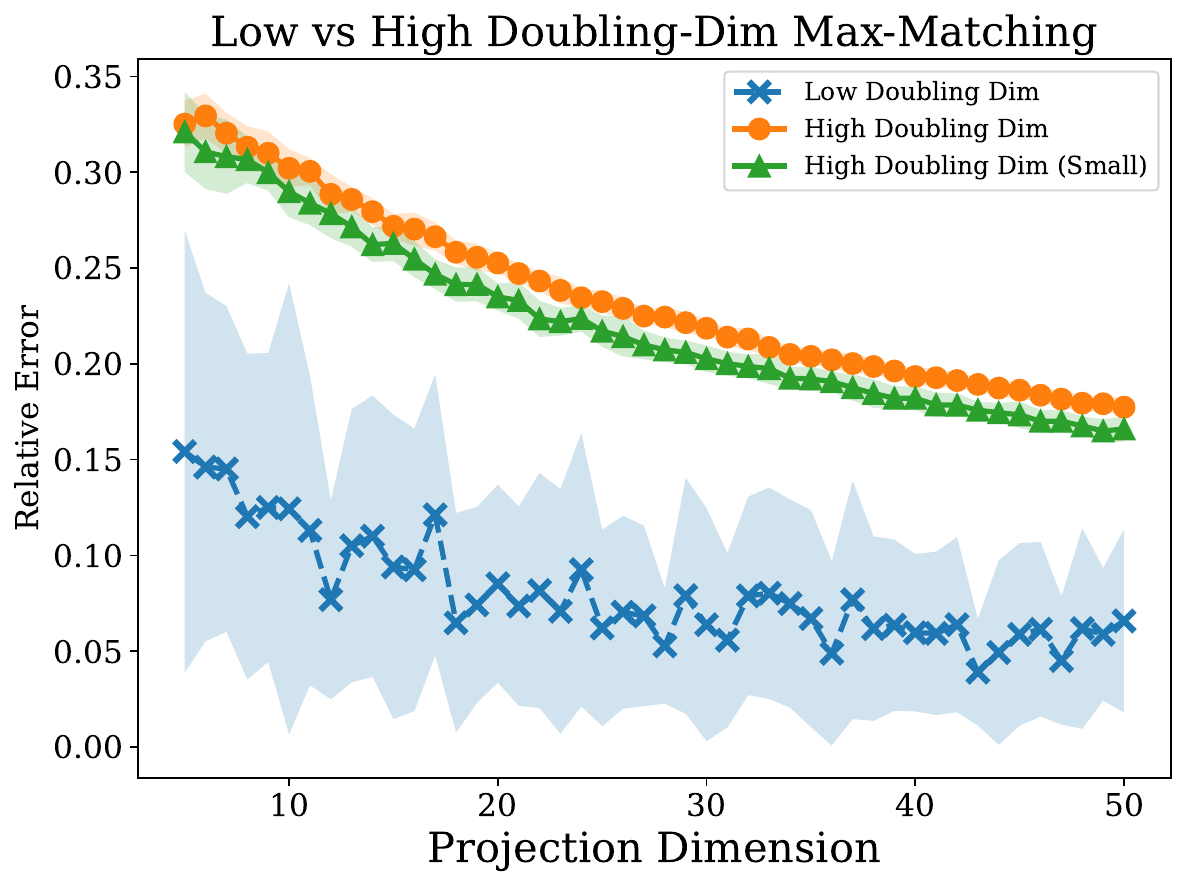} 
\caption{Relative error versus projection dimension for maximum-matching. \label{fig:max_matching}}
\end{figure*}

We complement our theoretical results with an empirical evaluation. Our goal is to convey two key messages: first, randomized dimensionality reduction can be very effective in reducing the runtime of geometric computation in high dimensions. This is not surprising and indeed, many empirical and theoretical findings in the literature already validate this hypothesis (this is the entire basis of the long line of work on the JL lemma). Our contribution is simply showing that the expected speedups are also possible for the various geometric problems we study, including many well-studied measures of dataset diversity maximization (to the best of our knowledge, we are the first to consider these problems under the lens of dimensionality reduction).

The second and the more salient point we demonstrate is that the effects of doubling dimension is an \emph{empirically observable} phenomenon which can be quantitatively measured. In particular, seemingly similar datasets living in the same ambient dimension, but with different `intrinsic dimensionality,' (which we theoretically formalize in terms of the doubling dimension) can behave wildly differently with respect to dimensionality reduction. As expected, datasets with lower intrinsic dimension can be projected to significantly smaller dimension while still preserving `relevant' (with respect to the problem at hand) geometric information.

\paragraph{Datasets} We use two real and one synthetic dataset. For each dataset, we create a `low-intrinsic dimension' version and a `high-intrinsic dimension' version. Both versions will live in the \emph{same} ambient dimension and ostensibly they will be very similar, however, they will approximately correspond to versions of the dataset with low/high doubling dimension. 
We demonstrate that the `low-dimensional' versions have much more desirable behavior with respect to dimensionality reduction.
\begin{itemize}[leftmargin = *]
    \item Dataset 1: MNIST. We select 1000 randomly chosen images from the MNIST dataset (dimension 784 with entries normalized to be in $[0, 1]$), restricted to the digit 2. We fixed the digit 2 since it has been studied in prior works as an example of a dataset with low-doubling dimension \cite{tenenbaum2000global, NarayananSIZ21}. This dataset is labeled \textbf{MNIST `2'} in our figures. For the `high-intrinsic dimension' version, we add i.i.d. Gaussian entries to all coordinates (note the noise is of the same scale as the original entries). This dataset is labeled \textbf{MNIST `2' + Gaussian} in our figures.
    \item Dataset 2: CIFAR-embeddings. We use penultimate layer embeddings of pre-trained ResNet models \cite{yu2021do, backurs2024efficiently} in $\R^{6144}$. We use $1000$ randomly chosen embeddings. These are the `high-intrinsic dimension' version of the dataset. This is labeled \textbf{Cifar-High} in our figures. For the `low-intrinsic dimension', we only keep the first $128$ dimensions of these embeddings, but rotate the resulting points to also lie in $\R^{6144}$ using a random orthogonal matrix. This procedure simulates `hiding' truly low-dimensional data in high-dimension. Note that this is not so obvious a-priori, since the rotation ensures the vectors are dense vectors in $\R^{6144}$. This is labeled \textbf{Cifar-Low} in our figures. 
    \item Dataset 3: Synthetic. The `high-intrinsic dimension' version of this dataset consists of the $n$ basis vectors in $\R^{n}$. This is labeled \textbf{High Doubling Dim} in our figures. The `low-intrinsic dimension' version of the dataset is the cumulative sums of the basis vectors, i.e. $e_1, e_1 + e_2, \ldots, e_1 + \ldots + e_n$. These two versions appear to be similar, but it can be shown that their respective doubling dimensions are $\Omega(\log n)$ and $O(1)$, respectively. They were also considered in the experiments of \cite{NarayananSIZ21}. This is labeled \textbf{Low-Doubling Dim} in our figures. We also consider a `Small' version of the `high-intrinsic dimension' dataset, where we just take the first $n/2$ basis vectors. This is labeled \textbf{High Doubling Dim (Small)} in our figures. We set $n = 1000$ in our experiments.
\end{itemize}

Note that throughout our figures, orange represents the `high-intrinsic dimensionality' versions of our datasets and blue represents the `low-intrinsic dimensionality' versions. Green is only used for our synthetic dataset, where we also have a third version where we take half of the basis vectors (the `high-intrinsic dimensionality' case) as a separate dataset version. 

\paragraph{Optimization Problems} We focus our attention to three representative problems among the many that we study. We pick maximum weight matching, remote-clique, and max $k$-coverage. We selected these three since they are representative of our main theoretical results. For maximum matching, we experiment with the bipartite version by simply dividing our dataset in half. This allows us to use an efficient exact solver using SciPy \cite{2020SciPy-NMeth}.
The other two problems correspond to well-studied dataset diversity measures, many known to be NP-Hard to optimize exactly \cite{erkut1990discrete, chandra2001approximation}, so we use a greedy algorithm as a proxy for finding the optimum. 

Our greedy strategy builds the size $k$ set in remote-clique and max $k$-coverage iteratively, by picking the best choice at every step. Such greedy heuristics have been extensively used in the applied literature on diversity maximization (see \cite{mahabadi2023improved} and references within), and demonstrate that they produce solutions quite close to the ground truth for real world datasets \cite{mahabadi2024core, gollapudi2024composable}.

For each problem, we compute a solution (using the aforementioned algorithms) in the original dimension. Then we reduce the dimension of our datasets, varying the target dimension, and run the same computation on the projected data. We then compare the values of the two solutions found and measure the relative error between them. In all of our results we plot the average over at least 20 independent trials and one standard deviation error is shaded. We implement our algorithms using Python 3.9.7 on an M1 MacbookPro with 32GB of RAM.

\paragraph{Results: Effect of High vs Low Doubling Dimension}
 Our figures empirically demonstrate that the `high-intrinsic dimensionality' versions of all of our datasets require a much larger projection dimension to achieve the same relative error compared to the `low-intrinsic dimensionality' versions of our datasets. This is displayed in Figure \ref{fig:max_matching} for the max-matching problem and the same qualitative results can be observed for our other two problems (see \Cref{fig:kclique_10,fig:kclique_20,fig:coverage_10,fig:coverage_20}
 in \Cref{appendix:experiments}). As a sanity check, the relative errors are decreasing in the projection dimension. 
 
 In Figure \ref{fig:max_matching}, we see for the MNIST dataset that the relative error achieved by projecting into $50$ dimensions for the `high-intrinsic dimensionality' version (orange curve) can already be achieved at a much smaller projection dimension (approximately $20$) for the`low-intrinsic dimensionality' version (blue curve). The same qualitative phenomena can be observed across datasets and problems that we study.
 
  We would like to especially highlight the third plot in Figure \ref{fig:max_matching}. There, the orange and green curves are virtually identical, and both are much higher than the blue curve. However, the blue curve corresponds to a point set with \emph{twice} the cardinality than the point set for the green curve! Thus, for the max-matching problem (and also for the two other problems we experiment on), the performance of dimensionality reduction is not determined by the size of the point set, as the naive application of the JL lemma suggests. Rather for these problems, the doubling dimension is the quantity of interest, as characterized by our theory.

\paragraph{Results: Speedups Due to Dimensionality Reduction}
Lastly, we mention that, as expected, performing the computation in the projected space is much more efficient. Our Figures (see Figure \ref{fig:max_matching} for maximum matching, \Cref{fig:kclique_10,fig:kclique_20} for remote clique, and \Cref{fig:coverage_10,fig:coverage_20} for max $k$-coverage) already demonstrate that we can achieve small relative error for high dimensional datasets, by projecting them to dimensions as low as $20$. Indeed, across datasets and problems, optimizing for our three different objectives in the original ambient dimension, versus optimizing for them after projecting onto dimension $20$ leads to speed ups of up to \textbf{two orders} of magnitude. The speedups for computing the objective in the projected space (taking into account the time required to perform the dimensionality reduction) is shown in Table \ref{table:speedup}.


\newpage
\section*{Acknowledgements}
This research was initiated during the Workshop ``Massive Data Models and Computational Geometry'' held at the University of Bonn in September 2024 and funded by the DFG, German Research Foundation, through EXC 2047 Hausdorff Center for Mathematics and FOR 5361:  KI-FOR Algorithmic Data Analytics for Geodesy (AlgoForGe). We thank the organizers and participants for the stimulating environment that inspired this research.

 Jie Gao would like to acknowledge NSF support through IIS-22298766, DMS-2220271, DMS-2311064, CRCNS-2207440, CCF-2208663 and CCF-2118953. 
 Chris Schwiegelshohn the support by the Independent Research Fund Denmark (DFF) under a Sapere Aude Research Leader grant No 1051-00106B.
 Erik Waingarten thanks the NSF support under grant CCF-2337993.

\section*{Impact Statement}


This paper presents work whose goal is to advance the field of 
Machine Learning. There are many potential societal consequences 
of our work, none which we feel must be specifically highlighted here. Our work is of theoretical nature.



\bibliography{main}
\bibliographystyle{icml2025/icml2025}

\newpage
\appendix
\onecolumn


\section{Remote Subgraph Diversity Measures}\label{sec:subgraph_diversity}
In this section, we consider remote subgraph problems, which include many well studied diversity measures.
The goal is to find a subset $S\subset P$ of size $k$ that maximizes a certain diversity measure of $S$.
The diversity measures we consider satisfy the following.
\begin{enumerate}[label={(\textbf{P})}]
    \item Given a set of points $S$, compute the
    minimum/maximum of the sum of edges over a subgraph family, where all elements of the graph family have the same number of edges.\label{property_subgraph_problem}
\end{enumerate}

\begin{theorem}\label{thm:remote_subgraph_logk}
    Let $0<\epsilon<1$ and $d,\lambda,k\in \N$, and
    Gaussian JL map $G\in \R^{t\times d}$ with suitable $t=O(\epsilon^{-2}(\lambda\log\tfrac{1}{\epsilon} + \log k))$. Then, for every set $P\subset\R^d$ with $\ddim(P)=\lambda$,
    with probability at least $2/3$,
    the following holds.
    Let a diversity measure $\divers$ and denote by $\pi$ the problem of finding a subset $S\subset P$ of size $k$ that maximizes $\divers(S)$; if $\divers$ satisfies Property~\ref{property_subgraph_problem}, then 
    every $(1+\eps)$-approximate solution for $\pi(G(P))$ is a $(1+O(\eps))$-approximate solution to $\pi(P)$.
\end{theorem}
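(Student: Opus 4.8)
The plan is to reduce Theorem~\ref{thm:remote_subgraph_logk} to the same net-based machinery already developed for \Cref{lem:matching_additive_eps_opt}, taking advantage of Property~\ref{property_subgraph_problem}: the diversity measure $\divers(S)$ is (say) a \emph{maximum} over a graph family $\calG$, where every graph in $\calG$ on vertex set $S$ has the same number of edges, call it $m = m(k)$. The key structural fact is that for any fixed subgraph $H \in \calG$ on a candidate set $S$, the quantity $\divers(S)$ decomposes into $m$ edge lengths, and $m \le \binom{k}{2}$; so a single subgraph behaves exactly like a matching in terms of how JL distortion accumulates over its edges, up to the extra multiplicity $m$. The difference from matching is that we can no longer use the ``switching'' argument of \Cref{lem:switching_max_matching} to build the telescoping sequence of balls $B_0 \supset B_1 \supset \cdots$ tied to the optimum. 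Instead, I would build a fixed multiscale net structure on all of $P$ (a hierarchical net at scales $r, r/2, r/4, \ldots$), pay the $O(\log k)$ additive term in the target dimension to union-bound over these $\poly(k)$-many ``special'' net points per level needed to handle the $k$ points of any candidate solution, and charge the per-edge additive error against $\divers(P) = \OPT(\pi(P))$ directly.

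Concretely, here is the order of steps. First, fix $r = \radius(P)$ and for each $i \ge 0$ let $N_i$ be an $\eps r_i$-net of $P$ with $r_i = r/2^i$; by the doubling property $|N_i| \le (2/\eps)^\lambda$ per ball, but we only ever need nets restricted to balls that actually arise, and crucially we will only union-bound over net points that are ``relevant'' to some candidate $S$. Second, replicate the two events \Cref{eq:JL_distortion} (pairwise JL on net points) and \Cref{eq:IN_distortion} (Indyk--Naor ball-expansion, via \Cref{lem:IN07_doubling_radius}) at each scale, but now the union bound must also absorb the choice of which $k$ points out of $P$ form $S$; this is where $\log k$ enters the target dimension, since choosing the scales and nearest-net-points for $k$ points costs $\le k \log(1/\eps) \cdot \lambda$ bits, i.e. a $\poly(k) \cdot (2/\eps)^{O(\lambda)}$ union bound, and with $t = O(\eps^{-2}(\lambda \log(1/\eps) + \log k))$ the failure probabilities beat this. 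Third — the crux — for an arbitrary candidate set $S$ and arbitrary subgraph $H \in \calG(S)$, write $\divers$ applied to $G(S)$ via $H$ as $\sum_{\{p,q\}\in E(H)} \|Gp - Gq\|$, and bound each term by the triangle-inequality route $\|Gp - Gy_p\| + \|Gy_p - Gy_{qp}\| + \|Gy_{qp} - Gq\|$ exactly as in the proof of \Cref{lem:matching_additive_eps_opt}, where $y_p, y_{qp}$ are the nearest net points to $p, q$ at the common (coarser) scale. Summing over the $\le \binom{k}{2}$ edges gives an additive error of the form $O(\eps) \cdot \sum_{\{p,q\}\in E(H)} (\text{scale bound}) \le O(\eps) \cdot m \cdot r \le O(\eps k^2) \cdot \divers(P)$ in the worst case; rescaling $\eps \mapsto \eps / k^2$ (and folding the resulting $\log k$ dependence into the target-dimension bound, which is stated as $O(\eps^{-2}(\lambda\log\frac1\eps + \log k))$ — consistent once we note the $k^2$ only multiplies $\eps$, not $\lambda$) yields the claimed $(1+O(\eps))$ bound. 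The reverse direction, $\divers(G(P)) \ge (1-\eps)\divers(P)$, is immediate by applying the near-isometry to the edges of the optimal subgraph of the optimal $S$ in $P$ and using Markov on the expectation, exactly as in the proof of \Cref{thm:ddim_reduction_max_matching}.

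The main obstacle, and the step I expect to require the most care, is making the additive-error charging tight enough to avoid a worse-than-stated dependence: a naive bound charges each edge length against $r = \radius(P)$, which is wasteful because $\divers(P)$ can be much smaller than $m \cdot r$ when the optimal subgraph's edges are short. To recover the stated bound one wants a \emph{per-edge} scale guarantee analogous to \Cref{eq:dist_match_pq} — i.e., if $p, q$ both survive only down to scale $i$, then $\|p - q\| = \Omega(r_i)$ — but without the matching-specific switching lemma this is not automatic for arbitrary candidate subgraphs. The resolution is to observe that we do not need it for the candidate $S$: we only need the additive error to be small relative to $\divers(P)$, and the charging $\sum_i |N_i| r_i = O(\radius(P) \cdot (2/\eps)^\lambda)$ can instead be routed through the optimum by noting that any near-optimal $S$ in $G(P)$ has value comparable to $\divers(G(P)) \ge (1-\eps)\divers(P)$, so an additive error of $O(\eps)\divers(P)$ is genuinely lower-order. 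Concretely I would define the statistic $X = \sum_{i} \alpha_i |N_i| r_i$ as before, show $\E X = O(\divers(P))$ by the same telescoping bound on $\sum_i |N_i| r_i$ used for matching (the hierarchical net has $\sum_i |N_i| r_i = O(\radius(P))$-per-branch which telescopes), apply Markov, and then the final edge-sum bound goes through with $\divers(P)$ in place of $\matching(P)$ throughout. Handling the $\min$-version of Property~\ref{property_subgraph_problem} (e.g. minimum-weight subgraph diversity) is symmetric: one bounds $\|Gp - Gq\| \ge \|p-q\| - (\text{additive error})$ using the lower-distortion side of the JL and Indyk--Naor estimates, and the rest is identical.
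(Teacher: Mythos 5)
Your proposal correctly identifies the central difficulty---without the switching lemma there is no per-edge scale guarantee, so charging each edge's additive error against $\radius(P)$ is too lossy---but the resolution you offer does not close the gap. The paper takes a different and much shorter route: it proves (Lemma~\ref{lem:k_center_additive}) that with target dimension $O(\eps^{-2}(\lambda\log\tfrac1\eps+\log k))$ \emph{every} pairwise distance in $P$ is preserved up to a $(1\pm\eps)$ multiplicative factor plus an additive $\eps\, KC_k(P)$, where $KC_k(P)$ is the optimal $k$-center radius; the $\log k$ enters because an $\eps KC_k(P)$-net of $P$ has size $k(2/\eps)^\lambda$ (Lemma~\ref{lem:eps_net_k_center}), not because one union-bounds over candidate sets $S$. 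The second ingredient is the $k$-center duality (Lemma~\ref{lem:KC_vs_div_k}): there exist $k$ points pairwise at distance $\ge KC_{k-1}(P)$, hence $\pi(P)\ge l\cdot KC_{k-1}(P)$ where $l$ is the (common) number of edges in the subgraph family. Summing the additive error $\eps KC_{k-1}(P)$ over the $l$ edges of any candidate subgraph then gives total additive error $\le \eps\, l\cdot KC_{k-1}(P)\le\eps\,\pi(P)$, with no $k^2$ loss. This cancellation of $l$ against the lower bound on $\OPT$ is the key idea your proposal is missing.

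Two concrete points where your argument breaks. First, the claim that $\E X=\E\bigl[\sum_i \alpha_i|N_i|r_i\bigr]=O(\divers(P))$ ``by the same telescoping bound'' is unsupported: in the matching proof that bound is Claim~\ref{claim:sum_r_p_max_match}, whose proof rests entirely on \Cref{eq:dist_match_pq}, i.e.\ on the switching lemma, which you have (correctly) abandoned here. For nets built obliviously on all of $P$, $\sum_i|N_i|r_i$ can be as large as $(2/\eps)^\lambda\cdot\radius(P)$, and $\divers(P)$ can be far smaller than $\radius(P)$ (e.g.\ remote-edge with large $k$, where $\OPT\approx KC_{k-1}(P)\ll\radius(P)$), so the statistic cannot be charged to the optimum. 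Second, your fallback of rescaling $\eps\mapsto\eps/k^2$ multiplies the target dimension by $k^4$ (the dimension scales as $\eps^{-2}$), and your union bound over the choices of net points for the $k$ points of $S$ costs a factor of $k$, not $\log k$, in the exponent; both are inconsistent with the stated bound $O(\eps^{-2}(\lambda\log\tfrac1\eps+\log k))$. The reverse direction ($\opt(G(P))\ge(1-\eps)\opt(P)$ via the optimal solution and Markov) is fine and matches the paper.
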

\begin{corollary}\label{corollary:list_sepideh_remote_subgraphs}
    The theorem above holds for the following diversity measures of the set $S$:
    \begin{itemize}
        \item Remote-edge: $\min_{p,q\in S}\|p-q\|$.
        \item Remote-clique: $\sum_{p,q\in S}\|p-q\|$.
        \item Remote-tree: weight of a minimum spanning tree of $S$.
        \item Remote-star: $\min_{p\in S}\sum_{q\in S\setminus p} \|p-q\|$.
        \item Remote-cycle: length of a minimum TSP tour of $S$.
        \item Remote-matching: weight of a minimum perfect matching of $S$.
        \item Remote-pseudoforest: $\sum_{p\in S} \dist(p,S\setminus p)$.
    \end{itemize}
\end{corollary}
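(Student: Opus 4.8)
\textbf{Proof proposal for Corollary \ref{corollary:list_sepideh_remote_subgraphs}.}
The plan is to check, for each of the seven diversity measures, that it can be written as the minimum or maximum over a family of subgraphs on the vertex set $S$ in which every member has the same number of edges, the objective of a member being the sum of the Euclidean lengths of its edges. Once this is verified, the measure satisfies Property~\ref{property_subgraph_problem} and the corollary follows immediately from \Cref{thm:remote_subgraph_logk}. So the entire argument is a (routine) case analysis exhibiting the relevant subgraph family and its uniform edge count, which depends only on the fixed parameter $k$.

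I would first dispatch the ``static'' cases. For \emph{remote-clique}, the family consists of the single subgraph $K_S$, the complete graph on $S$, which has $\binom{k}{2}$ edges; the objective $\sum_{p,q\in S}\|p-q\|$ is then trivially both the minimum and the maximum over this one-element family. For \emph{remote-edge}, the family is all single-edge subgraphs $\{\{p,q\}: p,q\in S\}$, each with exactly one edge, and $\min_{p,q\in S}\|p-q\|$ is the minimum total weight over this family. For \emph{remote-star}, the family is the $k$ stars, one centered at each $p\in S$, each with exactly $k-1$ edges, and $\min_{p\in S}\sum_{q\in S\setminus p}\|p-q\|$ is the minimum over this family.

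Next I would handle the ``spanning-structure'' cases. For \emph{remote-tree}, take the family of all spanning trees of $K_S$; each has exactly $k-1$ edges, and the measure is the minimum total weight over the family. For \emph{remote-cycle}, take the family of all Hamiltonian cycles on $S$; each has exactly $k$ edges, and the measure is the minimum over the family. For \emph{remote-matching} (with $k$ even), take the family of all perfect matchings of $K_S$; each has exactly $k/2$ edges, and the measure is the minimum over the family. Finally, for \emph{remote-pseudoforest}, I would rewrite
\[
\sum_{p\in S}\dist(p,S\setminus p)=\min_{f}\ \sum_{p\in S}\|p-f(p)\|,
\]
where the minimum is over all maps $f\colon S\to S$ with $f(p)\neq p$; each such $f$ is a directed subgraph on $S$ with exactly $k$ edges, so the family has uniform edge count $k$, and the measure is the minimum over it. In every case Property~\ref{property_subgraph_problem} holds with a number of edges per family member depending only on $k$, so \Cref{thm:remote_subgraph_logk} applies.

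The only points requiring care — and they are minor bookkeeping rather than genuine obstacles — are: recognizing that the sum-of-nearest-neighbor-distances defining remote-pseudoforest must be re-expressed as a minimum over functional graphs in order to exhibit a bona fide uniform-size subgraph family; noting that remote-matching implicitly assumes $k$ even (otherwise one allows a single uncovered vertex, which only shifts the common edge count to $\lfloor k/2\rfloor$ and leaves Property~\ref{property_subgraph_problem} intact); and observing that although the family of spanning trees / Hamiltonian cycles / matchings depends on which vertices lie in $S$, its members always share the same edge count, which is all Property~\ref{property_subgraph_problem} demands. With \Cref{thm:remote_subgraph_logk} doing the analytic work, nothing further is needed.
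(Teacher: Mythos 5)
Your proposal is correct and matches the paper's (entirely implicit) argument: the corollary is stated without a separate proof precisely because the intended justification is the case-by-case verification of Property~\ref{property_subgraph_problem} that you carry out, including the one genuinely non-obvious case (rewriting remote-pseudoforest as a minimum over functional graphs with exactly $k$ edges). Nothing further is needed.
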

The dependence on the doubling dimension is tight, as can be seen by the following.
\begin{theorem}\label{thm:lower_bound_remote_subgraph}
    Let $n\in N$, there exists a set $P$ of $n$ points, such that for a Gaussian JL map $G$ onto dimension $t$.
    Let $\pi$ be a problem as defined in \Cref{thm:remote_subgraph_logk}.
    If $\pi$ is defined for $k=2$, then $\pi(G(P))\geq \Omega(\sqrt{\tfrac{\log n}{t}})\cdot \pi(P)$ with high probability.
\end{theorem}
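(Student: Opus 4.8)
The plan is to reuse the hard instance from Theorem \ref{thm:matching_sqrt(2)_lower_bound}: take $P = \{e_1, \ldots, e_n\}$, the standard basis vectors in $\R^n$, which has doubling dimension $\Theta(\log n)$. Since the theorem only asks about $k=2$, all the diversity measures in Corollary \ref{corollary:list_sepideh_remote_subgraphs} collapse to (a monotone function of) a single pairwise distance: for a set $S = \{p,q\}$ of size two, remote-edge, remote-clique, remote-tree, remote-cycle, remote-matching all equal $\|p-q\|$ (up to a universal constant factor — e.g. remote-clique counts the pair twice, a cycle on two points traverses the edge twice), and remote-star, remote-pseudoforest likewise reduce to $\|p-q\|$. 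So $\pi(P) = \max_{i \neq j}\|e_i - e_j\| = \sqrt{2}$, and $\pi(G(P)) = \max_{i\neq j}\|Ge_i - Ge_j\|$ up to the same constant. Since the constant is the same on both sides it cancels in the ratio, so it suffices to lower bound $\max_{i \neq j}\|Ge_i - Ge_j\|$.

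The key step is then a lower-tail / anti-concentration argument showing that among $n$ i.i.d. Gaussian vectors $g_1, \ldots, g_n \sim \mathcal{N}(0, \tfrac1t I_t)$ (these are exactly the $Ge_i$), some pair is unusually far apart. First I would observe $\|g_i - g_j\|^2 = \|g_i\|^2 + \|g_j\|^2 - 2\langle g_i, g_j\rangle$; each $\|g_i\|^2$ concentrates around $1$, so the task reduces to finding a pair with $\langle g_i, g_j\rangle \le -\Omega(\sqrt{(\log n)/t})$, i.e. a pair that is noticeably "more antipodal than typical." For a fixed pair, $\langle g_i, g_j \rangle$ conditioned on $g_i$ is Gaussian with variance $\|g_i\|^2/t \approx 1/t$, so $\Pr(\langle g_i, g_j\rangle \le -c\sqrt{(\log n)/t}) \ge n^{-c^2/2 - o(1)}$ by the standard Gaussian lower tail bound. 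Taking $c$ a small constant, this probability is $\ge n^{-1/4}$, say. Among the $\binom{n}{2}$ pairs I cannot directly use independence (the pairs share vertices), but a clean fix is to only look at the $n/2$ disjoint pairs $(g_1, g_2), (g_3, g_4), \ldots$, which are mutually independent; the probability that none of them has inner product below $-c\sqrt{(\log n)/t}$ is at most $(1 - n^{-1/4})^{n/2} \le \exp(-n^{3/4}/2)$. Hence with probability $1 - \exp(-\Omega(n^{3/4}))$ some disjoint pair achieves $\langle g_i,g_j\rangle \le -c\sqrt{(\log n)/t}$, and combined with norm concentration (e.g. Lemma \ref{lem:concentrated_norms} or a direct $\chi^2$ bound, union-bounded over all $n$ vectors) this gives $\|g_i - g_j\|^2 \ge 2 + \Omega(\sqrt{(\log n)/t})$, so $\|g_i - g_j\| \ge \sqrt{2}\big(1 + \Omega(\sqrt{(\log n)/t})\big)$, which is $\Omega(\sqrt{(\log n)/t}) \cdot \sqrt{2} = \Omega(\sqrt{(\log n)/t})\cdot \pi(P)$ once we fold in the fact that the leading $\sqrt 2$ already exceeds $\pi(P)$'s value. (One should double-check the regime of $t$: if $t = \Omega(\log n)$ the bound is vacuous, as it should be, matching the upper bound of $O(\lambda) = O(\log n)$ from the positive results; the statement is only meaningful for $t = o(\log n)$.)

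The main obstacle I anticipate is handling the dependence among the $\binom n2$ pairs cleanly while still getting the bound to hold with high probability rather than merely in expectation — but restricting to a perfect matching of disjoint pairs (as above) sidesteps this entirely, at no loss, since we only need \emph{existence} of one far pair. A secondary technical point is to make the Gaussian lower-tail estimate $\Pr(Z \le -u) \ge \Omega(e^{-u^2/2}/u)$ (valid for $Z$ standard normal and $u$ growing) interact correctly with the conditioning on $\|g_i\|$: since $\|g_i\| = 1 \pm o(1)$ with overwhelming probability, the conditional variance of $\langle g_i, g_j\rangle$ is $(1\pm o(1))/t$ and the estimate goes through with the constant $c$ adjusted slightly. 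Assembling the failure probabilities (norm concentration over $n$ vectors, plus the $\exp(-\Omega(n^{3/4}))$ from the disjoint-pairs step) gives the claimed "with high probability" conclusion.
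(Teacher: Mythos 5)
Your reduction to the diameter is exactly what the paper does: for $k=2$ every measure in Corollary~\ref{corollary:list_sepideh_remote_subgraphs} is a fixed constant multiple of $\|p-q\|$, the constant cancels in the ratio, and the problem becomes lower-bounding the distortion of the diameter of $\{e_1,\dots,e_n\}$. The paper stops there and invokes Lemma~\ref{lem:diameter_lower_bound} (a known diameter lower bound from the full version of \cite{JKS23_arxiv}) as a black box. You instead try to prove that lemma, and this is where the proposal breaks.

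The gap is quantitative and fatal to the stated bound. Your route conditions on $\|g_i\|^2\approx 1$ for all $i$ and extracts the deviation solely from the inner product, yielding $\|g_i-g_j\|^2\ge 2+a$ with $a=\Omega(\sqrt{(\log n)/t})$. But $\sqrt{2+a}\ge\sqrt{2}\,(1+ca)$ holds only when $a=O(1)$ (square both sides: you need $1\ge 4c+2c^2a$), i.e., precisely in the regime $t=\Omega(\log n)$ where the theorem is vacuous. In the meaningful regime $t\ll\log n$ your bound gives only $\|g_i-g_j\|=O\bigl((\log n/t)^{1/4}\bigr)$, quadratically short of the claimed $\Omega\bigl(\sqrt{(\log n)/t}\bigr)$. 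The shortfall is not just algebraic: pushing the inner product to $-\Theta(\log n/t)$ (which is what you would need for $\|g_i-g_j\|^2=\Omega(\log n/t)$) is a deviation of $\Theta(\log n/\sqrt{t})$ standard deviations, an event of probability $n^{-\Theta(\log n/t)}\ll n^{-2}$, so no union or independence trick recovers it. The large diameter in the projected space comes from the \emph{upper tail of the norms}, which your norm-concentration conditioning deliberately suppresses: writing $G(e_i-e_j)\sim\mathcal{N}(0,\tfrac{2}{t}I_t)$, one has $\Pr\bigl(\|G(e_i-e_j)\|^2\ge \tfrac{2\log n}{t}\bigr)\ge n^{-1/2+o(1)}$ by the $\chi^2_t$ upper tail (for $t\ll\log n$), and applying this to the $n/2$ independent disjoint pairs gives, with probability $1-e^{-\Omega(\sqrt n)}$, a pair at distance $\Omega\bigl(\sqrt{(\log n)/t}\bigr)$ versus original distance $\sqrt2$. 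Your disjoint-pairs device is the right tool; it just needs to be applied to the tail of the full squared distance rather than to the inner product.
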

This theorem can be seen as a direct corollary of a similar statement for the diameter of the dataset; one can find a proof in the full version of \cite{JKS23_arxiv}.
\begin{lemma}\label{lem:diameter_lower_bound}
    Under the same conditions of \Cref{thm:lower_bound_remote_subgraph}, the diameter is distorted by factor $\Omega(\sqrt{\tfrac{\log n}{t}})$ with high probability.
\end{lemma}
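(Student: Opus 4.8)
The plan is to make the hard instance explicit and then reduce the whole question to the behaviour of a single coordinate. I would take $P=\{e_1,\dots,e_n\}$, the standard basis of $\R^n$ (the same instance used in \Cref{thm:matching_sqrt(2)_lower_bound}), so that all pairwise distances equal $\sqrt2$ and hence $\operatorname{diam}(P)=\sqrt2$. Under a Gaussian JL map $G\in\R^{t\times n}$ the images $Ge_1,\dots,Ge_n$ are exactly the columns of $G$, so they are i.i.d.\ draws from $\mathcal N(0,\tfrac1t I_t)$. Since projecting onto a coordinate subspace cannot increase distances, it suffices to lower bound $\operatorname{diam}(G(P))$ by $\Omega(\sqrt{\log n/t})$; dividing by $\operatorname{diam}(P)=\sqrt2$ then yields the claimed distortion factor.

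The first concrete step would be to discard all but the first coordinate: the scalars $Z_i\coloneq (Ge_i)_1$ are i.i.d.\ $\mathcal N(0,1/t)$, and for any $i,j$ we have $\|Ge_i-Ge_j\|_2\ge|Z_i-Z_j|$, so $\operatorname{diam}(G(P))\ge\max_i Z_i-\min_i Z_i$. It then remains to show that this range is $\Omega(\sqrt{\log n/t})$ with high probability, which is an anti-concentration statement about the maximum of $n$ i.i.d.\ Gaussians. Using the standard one-sided bound $\Pr[\mathcal N(0,1)>a]\ge\tfrac{1}{\sqrt{2\pi}}\cdot\tfrac{a}{a^2+1}e^{-a^2/2}$ with $a=\sqrt{\log n}$ gives $\Pr[Z_i>\sqrt{\log n/t}]=\Pr[\mathcal N(0,1)>\sqrt{\log n}]\ge n^{-\gamma}$ for some fixed $\gamma\in(1/2,1)$ and all large $n$. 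By independence, $\Pr[\max_i Z_i\le\sqrt{\log n/t}]\le(1-n^{-\gamma})^n\le e^{-n^{1-\gamma}}$, which is super-polynomially small; by symmetry of the Gaussian the same holds for $-\min_i Z_i$, so a union bound yields $\max_i Z_i-\min_i Z_i\ge 2\sqrt{\log n/t}$ with probability $1-e^{-\Omega(n^{1-\gamma})}$. Picking the argmax and argmin indices then exhibits two points of $G(P)$ at distance at least $2\sqrt{\log n/t}$, completing the plan.

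The only point that requires care is the constant hidden in the anti-concentration step: the threshold must be of the form $c\sqrt{\log n}$ with $c<\sqrt2$, so that the per-coordinate success probability $n^{-c^2/2}$ (up to a subpolynomial factor) has exponent strictly below $1$ and the $n$-fold independence can boost it into a high-probability event — this is really the whole content, the rest being routine union bounds and the triangle inequality. An alternative route, if one prefers not to single out a coordinate, is to work directly with the norms: a polynomial-in-$n$ number of the $\|Ge_i\|_2$ exceed their typical value by $\Omega(\sqrt{\log n/t})$ by chi-squared anti-concentration, while with high probability every inner product $\langle Ge_i,Ge_j\rangle$ is $O(1/\sqrt t)$, which again produces a pair realizing distortion $\Omega(\sqrt{\log n/t})$; but the one-coordinate reduction is the shortest path. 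This reproduces the argument sketched in the full version of~\cite{JKS23_arxiv}.
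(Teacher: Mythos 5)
Your proposal is correct. The paper does not actually prove this lemma---it defers to the full version of \cite{JKS23_arxiv}---so there is no in-paper argument to compare against, but your self-contained derivation is the standard one and it checks out: for the simplex $\{e_1,\dots,e_n\}$ the projected points are i.i.d.\ $\mathcal N(0,\tfrac1t I_t)$, restricting to one coordinate gives i.i.d.\ $\mathcal N(0,1/t)$ scalars, and the Gaussian lower tail bound at threshold $\sqrt{\log n}$ (with exponent $1/2<1$, boosted by independence over $n$ points) yields a pair at distance $\Omega(\sqrt{\log n/t})$ while the original diameter is $\sqrt2$. The only cosmetic blemish is the sentence ``projecting onto a coordinate subspace cannot increase distances, it suffices to lower bound $\operatorname{diam}(G(P))$'': the non-increase of distances under coordinate projection is what justifies $\|Ge_i-Ge_j\|\ge|Z_i-Z_j|$ in the next step, not the reduction to lower-bounding the projected diameter (that reduction is just the definition of the distortion factor); the logic as executed is nonetheless sound.
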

\begin{proof}[Proof of \Cref{thm:lower_bound_remote_subgraph}]
    For $k=2$, the problems for which \Cref{thm:remote_subgraph_logk} applies are identical to computing the diameter. Applying \Cref{lem:diameter_lower_bound} concludes the proof.
\end{proof}

\subsection{Proof of \Cref{thm:remote_subgraph_logk}}
The proof of \Cref{thm:remote_subgraph_logk} uses the following two lemmas.
For a point set $P$, the $k$-center problem finds a subset $S\subseteq P$ with $|S|=k$ such that every point in $P$ is within distance $r$ from some point in $S$, with the goal of minimizing the radius $r$.
We denote by $KC_k(P)$ the radius in the optimal $k$-center solution. The following lemma is well-known. See e.g. Lemma 2 in \cite{AbbarAIMahabadiV13}.
\begin{lemma}\label{lem:KC_vs_div_k}
    Let $d,k\in \N$ and a point-set $P\subset\R^d$.
    There exists a set $S\subset P$ of $k$ points such that for every $x,y\in S, x\neq y$, we have $\|x-y\|\geq KC_{k-1}(P)$.
\end{lemma}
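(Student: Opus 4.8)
The plan is to build $S$ greedily and derive a contradiction from the optimality of the $(k-1)$-center radius. Write $r \eqdef KC_{k-1}(P)$.

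First I would run the following farthest-point style selection: start with $S = \emptyset$, and while there exists a point $p \in P$ with $\dist(p,S) \ge r$ (using the convention $\dist(p,\emptyset) = \infty$), add such a $p$ to $S$. By construction, any two distinct points added to $S$ are at distance $\ge r$ from each other, so the only thing to check is that the procedure selects at least $k$ points; truncating to any $k$ of them then yields the claimed set.

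Next I would argue by contradiction: suppose the procedure halts with $|S| = j \le k-1$. Halting means every $p \in P$ has $\dist(p,S) < r$, so $S$ is a collection of $j$ centers covering $P$ within radius $\max_{p \in P}\dist(p,S) < r$ (the maximum is attained since $P$ is finite). Padding $S$ with $k-1-j$ arbitrary points of $P$ produces a feasible $(k-1)$-center solution of radius strictly less than $r$, contradicting the fact that $r = KC_{k-1}(P)$ is the optimal (minimum) radius. Hence the procedure picks $\ge k$ points, completing the proof.

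The only real subtlety — not a genuine obstacle — is bookkeeping with strict versus non-strict inequalities: the greedy threshold $\ge r$ is what guarantees the output spacing, while the contradiction needs the covering radius of the halting set to be $< r$, which is exactly what the halting condition gives. This is precisely the correctness argument underlying Gonzalez's greedy $k$-center approximation, so no new ideas are required; I would just cite that the rounding/padding step cannot increase the covering radius.
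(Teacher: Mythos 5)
Your proof is correct. The paper does not actually prove this lemma---it cites it as well-known (Lemma 2 of Abbar et al.)---and your greedy packing-versus-covering argument is exactly the standard proof of that cited fact: select points at pairwise distance at least $r = KC_{k-1}(P)$ until no point of $P$ is that far from the selected set, and observe that early termination with at most $k-1$ points would yield a $(k-1)$-center solution of radius strictly below the optimum. The only implicit hypotheses you rely on ($P$ finite and $|P|\geq k$, so that the maximum covering distance is attained and a $k$-point subset exists) are already needed for the lemma statement itself to make sense, so there is nothing to fix.
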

The next lemma is implicit in \cite{JKS23_arxiv}. We provide a proof at the end of this section
for completeness.
\begin{lemma}\label{lem:k_center_additive}
    Let $0<\epsilon<1$ and $d,k\in \N$, and a Gaussian JL map $G\in \R^{t\times d}$
    with suitable $t=O(\epsilon^{-2}(\lambda\log\tfrac{1}{\epsilon} + \log k))$.
    Then, for every set $P\subset\R^d$ with $\ddim(P)=\lambda$, with probability at least $2/3$, for all $x_1,x_2\in P$,
    \[
    \|Gx_1-Gx_2\|\in (1\pm\epsilon)\|x_1-x_2\| \pm \epsilon KC_k(P).
    \]
\end{lemma}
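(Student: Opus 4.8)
The plan is to follow the net-based template of \Cref{lem:matching_additive_eps_opt}, but now with the single length scale provided by the $k$-center radius. Write $r\coloneq KC_k(P)$ and let $S\subseteq P$ with $|S|=k$ be an optimal $k$-center solution, so $P\subseteq\bigcup_{s\in S}B(s,r)$. Fix $\epsilon_0\coloneq \epsilon/c_0$ for a suitable constant $c_0$ chosen at the end. For each $s\in S$ let $N_s$ be an $(\epsilon_0 r/2)$-net of $P\cap B(s,r)$; by the standard packing bound in doubling metrics (see e.g.\ \cite{GKL03}), $|N_s|\le (4/\epsilon_0)^{\lambda}$, so $N\coloneq \bigcup_{s\in S}N_s$ satisfies $\log|N| = O(\log k + \lambda\log\tfrac{1}{\epsilon_0})$. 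Taking $t = O(\epsilon_0^{-2}\log|N|) = O(\epsilon^{-2}(\lambda\log\tfrac1\epsilon + \log k))$ with a large enough hidden constant, a union bound over pairs of $N$ via \Cref{lem:gaussian_tail} (together with the symmetric lower-tail bound for Gaussian JL) gives that with probability $\ge 5/6$ one has $\|Gy-Gy'\|\in(1\pm\epsilon_0)\|y-y'\|$ for all $y,y'\in N$.

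Next I would bound the error incurred when snapping a point of $P$ to its nearest net point. For each $y\in N$, apply \Cref{lem:IN07_doubling_radius} to the set $\{(p-y)/(\epsilon_0 r/2) : p\in P\cap B(y,\epsilon_0 r/2)\}\subset B(\vec{0},1)$, whose doubling dimension is $O(\lambda)$ (doubling dimension is invariant under translation and scaling, and passing to a subset changes it by at most a constant factor), with $D=2$. Since $t=\Omega(\log|N|)$, the per-net-point failure probability $e^{-ctD^2}$ summed over $N$ is at most $1/6$, so with probability $\ge 5/6$ we have $\|G(p-y)\|\le \epsilon_0 r$ for every $y\in N$ and every $p\in P$ with $\|p-y\|\le\epsilon_0 r/2$. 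A union bound shows both events hold simultaneously with probability $\ge 2/3$, and from here the argument is deterministic.

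Given arbitrary $x_1,x_2\in P$, choose $s_1,s_2\in S$ with $x_i\in B(s_i,r)$ and then $y_i\in N_{s_i}$ with $\|x_i-y_i\|\le\epsilon_0 r/2$; in particular $\|Gx_i-Gy_i\|\le\epsilon_0 r$. Two applications of the triangle inequality (once to split $\|Gx_1-Gx_2\|$ through $Gy_1,Gy_2$, and once to relate $\|y_1-y_2\|$ to $\|x_1-x_2\|$ up to an additive $\epsilon_0 r$), combined with the net JL bound, give $\|Gx_1-Gx_2\|\le (1+\epsilon_0)\|x_1-x_2\| + O(\epsilon_0 r)$ and, symmetrically, $\|Gx_1-Gx_2\|\ge (1-\epsilon_0)\|x_1-x_2\| - O(\epsilon_0 r)$. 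Choosing the constant $c_0$ large enough to absorb these $O(\cdot)$ factors turns this into $\|Gx_1-Gx_2\|\in(1\pm\epsilon)\|x_1-x_2\|\pm\epsilon\,KC_k(P)$, with target dimension still $O(\epsilon^{-2}(\lambda\log\tfrac1\epsilon+\log k))$, which is what we want.

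I do not expect a substantive obstacle: the argument is essentially the one already developed for \Cref{lem:matching_additive_eps_opt}, with the $\log k$ term being exactly the price of taking a separate net inside each of the $k$ cover balls rather than one. The only points requiring care are bookkeeping ones — calibrating the net granularity and the constant $D$ in \Cref{lem:IN07_doubling_radius} so that the snapping error is $O(\epsilon_0 r)$ while $\log|N|$ stays $O(\log k + \lambda\log\tfrac1\epsilon)$, and making sure the additive slack is charged to $KC_k(P)$, the radius of the cover actually used, rather than to an off-by-one variant such as $KC_{k-1}(P)$.
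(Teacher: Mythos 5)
Your proposal is correct and follows essentially the same route as the paper's proof: a net at scale $\Theta(\epsilon)\cdot KC_k(P)$ of size $k(O(1/\epsilon))^{\lambda}$ (the paper invokes Lemma~\ref{lem:eps_net_k_center} for its existence, while you construct it explicitly from the $k$-center cover), JL distortion bounds on net pairs, Lemma~\ref{lem:IN07_doubling_radius} to control the snapping error $\|G(x-y)\|$ for each net point, and a union bound followed by two triangle inequalities. The only differences are cosmetic constant-factor choices in the net granularity and in the constant $D$.
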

\begin{proof}[Proof of \Cref{thm:remote_subgraph_logk}.]
    Consider target dimension as in \Cref{lem:k_center_additive}, and suppose the event therein holds.
    Consider a remote subgraph problem $\pi$ with diversity measure $\divers$ satisfying \ref{property_subgraph_problem}, and suppose that given a $k$-point set, $\divers$'s value is a sum over a set of $l$ edges.
    Assume that $\divers$ is a minimization problem, as is the case for all problems in \Cref{corollary:list_sepideh_remote_subgraphs}. 
    The proof regarding maximization problems is by the same arguments.
    
    Consider a set $\tilde{S}$ of size $k$ given by \Cref{lem:KC_vs_div_k}.
    Let $\tilde{E}$ be a set of $l$ pairs of points (edges) in $\tilde{S}$ that realize $\divers(\tilde{S})$.
    By \Cref{lem:KC_vs_div_k}, 
    $
    \divers(\tilde{S}) = \sum_{(x,y)\in \tilde{E}}\|x-y\|\geq l\cdot KC_{k-1}(P).
    $
    Therefore,    \begin{equation}\label{eq:divers_geq_KC}
        \pi(P)\equiv \max_{S\subset P, |S|=k} \divers(S)\geq l\cdot KC_{k-1}(P).
    \end{equation}    
    Let $S'\subset P$ be a set of size $k$. 
    Let $E',E^*$ be sets of $l$ pairs of points in $S'$ that realize $\divers(G(S'))$ and $\divers(S')$, respectively.
    \begin{align*}
        \cost(G(S')) &\equiv \divers(G(S')) \equiv \sum_{(x,y)\in E'}\|Gx-Gy\|\\
        \intertext{since $\divers$ is a minimization problem,}
        \leq & \sum_{(x,y)\in E^*}\|Gx-Gy\| \\
        \intertext{by \Cref{lem:k_center_additive},}
        \leq &\sum_{(x,y)\in E^*} (1+\epsilon)\|x-y\| + \epsilon KC_{k-1}(P)  \\
        \leq & (1+\epsilon)\divers(S') + \epsilon l\cdot KC_{k-1}(P)\\
        \intertext{by \Cref{eq:divers_geq_KC},}
        \leq & (1+\epsilon)\divers(S') + \epsilon\pi(P). 
    \end{align*}
    The other direction that $\divers(G(S'))\geq (1-2\epsilon)\divers(S)-\eps \pi(P)$ follows by the same arguments. Rescaling $\epsilon$ concludes the proof. 
\end{proof}

To prove \Cref{lem:k_center_additive}, we use the following, which easily follows from \cite{GKL03} (see also the full version of ~\cite{JKS23_arxiv}).
\begin{lemma}\label{lem:eps_net_k_center}
    For every $P\subset\R^d,0<\epsilon<1$ and $k\in\N$, there exists an $\epsilon KC_k(P)$-net of $P$ whose size is $\leq k(2/\epsilon)^{\ddim(P)}$.
\end{lemma}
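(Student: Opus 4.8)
The plan is to combine the definition of the $k$-center radius with an iterated application of the doubling property. Write $r = KC_k(P)$ and $\lambda = \ddim(P)$. By the definition of the (discrete) $k$-center objective used here, there exist $k$ points $c_1,\dots,c_k\in P$ such that $P\subseteq\bigcup_{i=1}^k B(c_i,r)$. It therefore suffices to produce, for each ball $B(c_i,r)$, a set of at most $(2/\epsilon)^\lambda$ points such that every point of $P\cap B(c_i,r)$ lies within $\epsilon r$ of one of them; taking the union over $i$ then yields the desired net, of size at most $k(2/\epsilon)^\lambda$.

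To handle a single ball $B(c_i,r)$, I would iterate the doubling property. By definition of the doubling dimension, the points of $P$ lying in any ball of radius $\rho$ can be covered by $2^\lambda$ balls of radius $\rho/2$. Applying this $m$ times in succession — first to $B(c_i,r)$, then to each of the resulting balls of radius $r/2$, and so on — covers $P\cap B(c_i,r)$ by at most $2^{\lambda m}$ balls of radius $r/2^m$. Choosing $m=\lceil\log_2(1/\epsilon)\rceil$ makes the radius $r/2^m\le\epsilon r$, and since $m\le\log_2(1/\epsilon)+1$ the number of balls is at most $2^{\lambda(\log_2(1/\epsilon)+1)}=(2/\epsilon)^\lambda$.

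Finally, I would take the centers of all these radius-$(r/2^m)$ covering balls, across all $k$ original balls, as the net $N$. Every point $p\in P$ lies in some $B(c_i,r)$, hence in one of the small covering balls of radius $r/2^m\le\epsilon r$, so $p$ is within $\epsilon r=\epsilon\,KC_k(P)$ of the corresponding center. Thus $N$ is an $\epsilon KC_k(P)$-net of $P$ with $|N|\le k(2/\epsilon)^\lambda$, as claimed. There is no substantial obstacle in this argument; the only points requiring care are the bookkeeping of the extra factor $2^\lambda$ arising from the ceiling in $m$, and observing that the intermediate covering balls need not be contained in $B(c_i,r)$ — one simply re-applies the doubling property to the points of $P$ inside each current ball at every stage, which is exactly what the definition provides.
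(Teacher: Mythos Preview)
Your proposal is correct and follows exactly the standard argument from \cite{GKL03} that the paper defers to: cover $P$ by $k$ balls of radius $KC_k(P)$ using the optimal centers, then iterate the doubling property $\lceil\log_2(1/\epsilon)\rceil$ times within each ball. The bookkeeping you do on the ceiling is precisely what yields the clean $(2/\epsilon)^{\ddim(P)}$ factor, and your closing remark correctly handles the one genuine subtlety.
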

\begin{proof}[Proof of \Cref{lem:k_center_additive}.]
    Let $N$ be an $\tfrac{\epsilon}{20} KC_{k}(P)$-net of $P$ of size $\leq k(40/\epsilon)^{\ddim(P)}$ given by \Cref{lem:eps_net_k_center}.
    For a suitable target dimension $t=O(\epsilon^{-2}\log |N|)$, the following holds: First, by the JL Lemma, with high probability, all the pairwise distances for points in $N$ are preserved up to a factor OF $1+\epsilon$.
    Additionally, for every $y\in N$, we have by \Cref{lem:IN07_doubling_radius} that with probability $1-e^{-ct}$, every $x\in P\cup B(y,\tfrac{\epsilon}{20} KC_{k}(P))$ satisfies $\|G(x-y)\|\leq 6\tfrac{\epsilon}{20} KC_{k}(P)$.
    Thus, by a union bound, this holds simultaneously for all $y\in N$ with probability at least $1-e^{-ct}|N|\geq \tfrac{9}{10}$.
    Suppose these events hold.
    
    Consider two points $x_1,x_2\in P$, and let $y_1,y_2\in N$ be their nearest net-points.
    Then, by triangle inequality,
    \begin{align*}
        &\|Gx_1-Gx_2\| \\
        \leq &\|Gx_1-Gy_1\|+\|Gx_2-Gy_2\|+\|Gy_1-Gy_2\| \\
        \intertext{since $\|G(x_i-y_i)\|\leq 6\tfrac{\epsilon}{20} KC_{k}(P)$,}
        \leq & 12\tfrac{\epsilon}{20} KC_{k}(P) + (1+\epsilon)\|y_1-y_2\| \\
        \intertext{by triangle inequality,}
        \leq & 12\tfrac{\epsilon}{20} KC_{k}(P) + (1+\epsilon)(\|x_1-y_1\|+\|x_2-y_2\|+\|x_1-x_2\|) \\
        \intertext{since $\|x_i-y_i\|\leq \tfrac{\epsilon}{20} KC_{k}$,}
        \leq &(14+2\epsilon)\tfrac{\epsilon}{20} KC_{k}(P) + (1+\epsilon)\|x_1-x_2\| \\
        \intertext{and since $\epsilon<1$,}
        \leq & \epsilon KC_{k}(P) + (1+\epsilon)\|x_1-x_2\|.
    \end{align*}
    The other direction, that $\|Gx_1-Gx_2\|\geq (1-\epsilon)\|x_1-x_2\| - \epsilon KC_{k}(P)$, follows by the same arguments.
    This concludes the proof of \Cref{lem:k_center_additive}.
\end{proof}

\section{Problems with Large Optimal Value}\label{sec:large_OPT_value}
In this section we show that a large class of optimization problems, including many diversity measures, which have a `large' optimal value can be preserved via dimensionality reduction in a black-box way. We consider optimization problem of the form ``pick the best subset of size $k$ to maximize the `diversity' from each point to the set" (formalized below).

\begin{restatable}{theorem}{LargeOPT}\label{thm:large_opt}
Consider an optimization problem of the following form: given $(P, k, f)$ as input where
$P \subset \R^d$,
$k \in \mathbb{N}$,
and $f: \R^k \rightarrow \R$ satisfying $|f(x) - f(y)| \le L \|x-y\|_{\infty}$, 
let 
    \[ \opt(P) = \max_{S = \{s_1, \ldots, s_k \} \subseteq P, |S| = k} \sum_{p \in P} F(p, S),\] 
    where $F(p, S)=f(v_p)$ with $v_p\in \R^k$ with $(v_p)_i = \|p-s_i\|$ for all $i$. 
    Let a Gaussian JL map $G\in\R^{t\times d}$ with suitable $t = O(\eps^{-2} \lambda \log((L+1)/\eps))$.
    For every $P \subset \R^d$ with $\ddim(P) = \lambda$, with probability at least $2/3$,
       \[ |\opt(P) - \opt(G(P))| \le  O(\eps |P| \cdot \textup{diameter}(P)). \]
\end{restatable}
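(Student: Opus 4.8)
The plan is to follow the same net-based strategy used in the proof of \Cref{lem:matching_additive_eps_opt} and \Cref{lem:k_center_additive}, but now controlling the error relative to $\textup{diameter}(P)$ rather than the optimal value. The key point is that $\textup{diameter}(P)$ is an upper bound on all pairwise distances, so a uniform additive error of $O(\eps \cdot \textup{diameter}(P))$ on each distance $\|p - s_i\|$ translates—via the Lipschitz assumption on $f$—into an $O((L+1)\eps \cdot \textup{diameter}(P))$ error on each term $F(p,S)$, and summing over the $|P|$ points $p$ gives the claimed bound after rescaling $\eps$.

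The main steps are as follows. First I would let $r = \textup{diameter}(P)$ and build a single $\eps r$-net $N$ of $P$; by the standard packing bound this has size $|N| \le (O(1)/\eps)^{\ddim(P)}$ (see \cite{GKL03}). For a target dimension $t = O(\eps^{-2}(\lambda \log(1/\eps) + \log(L+1)))$, the JL lemma guarantees that with high probability all pairwise distances among points of $N$ are preserved up to a $(1+\eps)$ factor, and hence (since $N \subseteq B(\cdot, r)$) distorted additively by at most $\eps r$. Second, by \Cref{lem:IN07_doubling_radius} applied around each net point $y \in N$, with probability $\ge 1 - e^{-ct}|N| \ge 2/3$ every $p \in P \cap B(y, \eps r)$ satisfies $\|G(p - y)\| \le O(\eps r)$. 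Third, conditioning on both events: for any $x_1, x_2 \in P$ with nearest net-points $y_1, y_2 \in N$, the triangle inequality gives $\bigl|\,\|Gx_1 - Gx_2\| - \|x_1 - x_2\|\,\bigr| \le O(\eps r)$ — this is exactly the argument in the proof of \Cref{lem:k_center_additive} with $KC_k(P)$ replaced by $r$.

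Fourth, I would transfer this to the objective. Fix any size-$k$ set $S = \{s_1,\dots,s_k\} \subseteq P$. For each $p \in P$, the vectors $v_p$ (in the original space) and $v_p^G$ (with $(v_p^G)_i = \|Gp - Gs_i\|$) satisfy $\|v_p - v_p^G\|_\infty \le O(\eps r)$ by step three, so $|F(p,S) - F^G(p,S)| = |f(v_p) - f(v_p^G)| \le L \cdot O(\eps r)$. Summing over $p \in P$ yields $\bigl|\sum_p F(p,S) - \sum_p F^G(p,S)\bigr| \le O(L \eps |P| r)$. Since this holds for every candidate $S$ simultaneously (the events above are about all of $P$, not a fixed $S$), it holds in particular for an optimal $S$ for $\opt(P)$ and for an optimal $S$ for $\opt(G(P))$; standard $\max$-vs-$\max$ reasoning then gives $|\opt(P) - \opt(G(P))| \le O(L \eps |P| r)$. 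Rescaling $\eps \leftarrow \eps/(L+1)$ (which only changes $t$ by the stated $\log(L+1)$ term) completes the proof.

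The main obstacle I anticipate is purely bookkeeping: making sure the additive error is measured against $\textup{diameter}(P)$ uniformly and that the Lipschitz constant $L$ enters only logarithmically in $t$ (hence the $\log((L+1)/\eps)$ in the target dimension) rather than polynomially in the error bound — i.e., one must apply the net construction at the single scale $r = \textup{diameter}(P)$, avoiding any recursive/multi-scale argument, since here we do not need to relate the error to $\opt(P)$ (which could be far smaller than $|P| \cdot \textup{diameter}(P)$). The rest is a routine adaptation of the earlier lemmas.
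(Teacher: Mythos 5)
Your proposal is correct and takes essentially the same route as the paper's proof: the paper simply invokes the $k=1$ case of \Cref{lem:k_center_additive} (observing that the $1$-center cost is $\Omega(\mathrm{diameter}(P))$) to obtain the uniform additive distortion $O(\eps\cdot\mathrm{diameter}(P))$ on all pairwise distances, and then applies the $\ell_\infty$-Lipschitz bound on $f$ and the two-sided optimality comparison exactly as you describe. The single-scale net construction you spell out is precisely the content of that lemma, so your argument matches the paper's step for step.
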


\begin{proof}
Set $\Delta = \textup{diameter}(P)$. Note that for any set of points $P$, the $1$-center cost is $\Omega(\Delta)$. Thus, the $k = 1$ case of Lemma \ref{lem:k_center_additive} implies that 
\[ \Pr\left(\forall x,y \in P, \, \|Gx - Gy\| \in \|x-y\| + \eps \Delta/100\right) \ge 2/3\]
if we project to $O(\epsilon^{-2} \cdot \lambda\log(1/\eps))$ dimensions (note we can replace multiplicative error with additive error since the additive error is proportional to the diameter). We condition on this event.

Now let $S$ and $S'$ be the maximizing sets for $\text{OPT}(P)$ and $\text{OPT}(G(P))$ respectively. Note that $S'$ is a random variable since it depends on $G$. We will explicitly specify which set of $k$ points is being used to evaluate $f$ and in what space (the original or projected space). 

From above, we have that 
\[ \left| \|p - s_i\|  - \|Gp - Gs_i \| \right| \le \eps \Delta/100,\]
for all points $p$ and $s_i$ so it follows that 
\[ \left| \sum_{p \in P} F(p, S) - \sum_{p \in P} F(Gp, G(S)) \right| \le L \cdot |P| \cdot \eps \Delta/100\]
since $f$ is assumed to satisfy the $\ell_{\infty}$-Lipschitz condition. Similarly, we have
\[ \left| \sum_{p \in P} F(Gp, G(S')) - \sum_{p \in P} F(p, S') \right| \le L \cdot |P| \cdot \eps \Delta/100. \]
But by definition of optimality, we know 
\begin{align*}
    \sum_{p \in P} F(p, S) &\ge \sum_{p \in P} F(p, S') \\
    &\ge \sum_{p \in P}  F(Gp, G(S')) - L \cdot |P| \cdot \eps \Delta/100
\end{align*}
and similarly 
\begin{align*}
    \sum_{p \in P} F(Gp, G(S')) &\ge \sum_{p \in P} F(Gp, G(S)) \\
    &\ge  \sum_{p \in P} F(p, S) - L \cdot |P| \cdot \eps \Delta/100 ,
\end{align*}
so we have 
\begin{align*}
    &\left| \text{OPT}(P) - \text{OPT}(G(P)) \right|\\
    = &\left| \sum_{p \in P} F(p, S)  - \sum_{p \in P} F(Gp, G(S'))  \right| \\
    \le &L \cdot |P| \cdot \eps \Delta/50.
\end{align*}
The first part of the theorem follows by rescaling $\eps$, and the `moreover' part follows by the same arguments.
\end{proof}

While Theorem \ref{thm:large_opt} is stated in quite general terms, there are many natural choices of functions $f$. We give a non-exhaustive list below:
\begin{itemize}
    \item $f(x) = \|x\|_p$ which generalizes the Max $k$-coverage diversity measure (which corresponds to $p = \infty$). Note that in the context of Theorem \ref{thm:large_opt}, Max $k$-coverage picks out the largest distance from an input $p$ to the set $S$, whereas the $\ell_p$ formulation combines all distances in a smooth way.
    \item $f(x) = \sum_{i} x_i $, which is a special case of the choice above, which in the context of Theorem \ref{thm:large_opt} averages the distance from every input point to distances to all points in $S$.
    \item $f(x) = \text{median}(x_i)$, which can be thought of as picking the `typical' distance from an input point to the set $S$, which maybe more robust to outliers.
\end{itemize}
In many cases of $f$ stated above, it is true that $\text{OPT}(P) = \Omega(|P| \cdot \text{diameter}(P))$, leading to the following corollary.

\begin{corollary}\label{cor:large_opt}
     Consider the setting of Theorem \ref{thm:large_opt}. If $\opt(P) = \Omega(|P| \cdot \textup{diameter}(P))$ then we can achieve
 \[ |\opt(P) - \opt(G(P))| \le \eps \cdot \opt(P)\]
 with probability at least $2/3$.
\end{corollary}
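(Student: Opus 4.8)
The plan is to obtain this as a one-line consequence of Theorem~\ref{thm:large_opt}, simply by feeding the lower bound on $\opt(P)$ into the additive-error guarantee. First I would apply Theorem~\ref{thm:large_opt} with its parameter $\eps$ replaced by $\eps' \eqdef c\eps$ for a small absolute constant $c>0$ to be pinned down at the end; since the required target dimension there is $O(\eps^{-2}\lambda\log((L+1)/\eps))$, this rescaling changes the dimension by only a constant factor, so the same Gaussian JL map $G$ is used. This gives, with probability at least $2/3$,
\[
|\opt(P) - \opt(G(P))| \le C_0\,\eps' \cdot |P| \cdot \textup{diameter}(P),
\]
where $C_0$ is the universal constant hidden in the $O(\cdot)$ of Theorem~\ref{thm:large_opt}.

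Next I would invoke the hypothesis $\opt(P) = \Omega(|P|\cdot\textup{diameter}(P))$: by definition there is a constant $c_1>0$ with $|P|\cdot\textup{diameter}(P) \le c_1^{-1}\,\opt(P)$. Substituting into the previous display yields
\[
|\opt(P) - \opt(G(P))| \le \frac{C_0 c}{c_1}\,\eps \cdot \opt(P),
\]
and choosing $c \le c_1/C_0$ makes the right-hand side at most $\eps\cdot\opt(P)$, which is exactly the claim; the success probability $2/3$ is inherited directly from Theorem~\ref{thm:large_opt}.

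I do not expect a genuine obstacle here, since all the work is already in Theorem~\ref{thm:large_opt}. The only point requiring a little care is the constant bookkeeping in the rescaling of $\eps$, and noting that the conversion from the additive bound (which in the proof of Theorem~\ref{thm:large_opt} arises by replacing multiplicative slack with diameter-proportional additive slack via the $k=1$ case of Lemma~\ref{lem:k_center_additive}) back to a multiplicative bound is legitimate precisely because, under the stated assumption, $\opt(P)$ dominates $|P|\cdot\textup{diameter}(P)$ up to a constant. The ``moreover''/high-probability aspects, if any, also transfer verbatim from Theorem~\ref{thm:large_opt}.
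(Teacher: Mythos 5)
Your proof is correct and matches the paper's (implicit) argument exactly: the corollary is stated as an immediate consequence of Theorem~\ref{thm:large_opt}, obtained by substituting the hypothesis $\opt(P) = \Omega(|P|\cdot\textup{diameter}(P))$ into the additive error bound and absorbing the constants into a rescaling of $\eps$, which changes the target dimension only by a constant factor.
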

\begin{remark}
    This result extends to preserving approximate solutions by the same arguments.
\end{remark}

We highlight two cases where Corollary \ref{cor:large_opt} holds:
\begin{itemize}
    \item The maximum degree of the geometric graph defined by all pairwise distances between points in $P$ under dimensionality reduction. This corresponds to $(P, 1, \|x\|_2 )$ in Theorem \ref{thm:large_opt}, where $L = 1$. 
    \item The Max $k$-coverage diversity measure which also satisfies $L = 1$ in Theorem \ref{thm:large_opt}. This generalizes the maximum degree example, which is the $k =1$ case.
\end{itemize}

\begin{corollary}\label{thm:max_k_coverage}
    Consider the maximum $k$-coverage problem:
    \[\opt(P) = \max_{S  \subseteq P, |S| = k} \sum_{p \in P} \max_{s \in S} \|p-s\|.\]
    Let a Gaussian JL map $G\in \R^{t\times d}$ with suitable $t=O(\eps^{-2} \lambda \log(1/\eps))$.
    For every set $P\subset \R^d$ with doubling dimension $\lambda$, 
    with probability at least $2/3$,
    \[ \left |\opt(P) - \opt(G(P)) \right | \le \eps \cdot \opt(P).\]
    Moreover, every $(1+\eps)$-approximate solution of $\opt(G(P))$ is a $(1+O(\eps))$-approximate solution of $\opt(P)$.
\end{corollary}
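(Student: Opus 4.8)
The plan is to obtain this corollary as a direct instance of \Cref{cor:large_opt}. First I would check that max $k$-coverage has precisely the form required by \Cref{thm:large_opt}: taking $f\colon\R^k\to\R$ to be $f(x)=\|x\|_\infty$, we have $F(p,S)=f(v_p)=\max_i \|p-s_i\|=\max_{s\in S}\|p-s\|$, and $f$ obeys $|f(x)-f(y)|\le\|x-y\|_\infty$, i.e.\ $L=1$. Hence \Cref{thm:large_opt} applies with target dimension $O(\eps^{-2}\lambda\log((L+1)/\eps))=O(\eps^{-2}\lambda\log(1/\eps))$ and already gives $|\opt(P)-\opt(G(P))|\le O(\eps\,|P|\cdot\textup{diameter}(P))$ with probability at least $2/3$.

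Second, to convert this additive bound into the claimed multiplicative $(1\pm\eps)$ guarantee through \Cref{cor:large_opt}, I need the problem-specific lower bound $\opt(P)=\Omega(|P|\cdot\textup{diameter}(P))$. Let $a,b\in P$ realize the diameter $\Delta=\textup{diameter}(P)$. If $k\ge 2$, then the feasible set consisting of $a$, $b$, and any $k-2$ further points of $P$ satisfies $\max_{s\in S}\|p-s\|\ge\max(\|p-a\|,\|p-b\|)\ge\Delta/2$ for every $p\in P$ by the triangle inequality, so $\opt(P)\ge|P|\Delta/2$. If $k=1$, then summing $\|p-a\|+\|p-b\|\ge\Delta$ over $p\in P$ shows that one of the singletons $\{a\},\{b\}$ already achieves objective value $\ge|P|\Delta/2$, so again $\opt(P)\ge|P|\Delta/2$. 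Feeding this into \Cref{cor:large_opt} yields $|\opt(P)-\opt(G(P))|\le\eps\cdot\opt(P)$ with probability at least $2/3$, after rescaling $\eps$.

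For the ``moreover'' part I would reuse the argument in the proof of \Cref{thm:large_opt}: condition on the probability-$\ge 2/3$ event furnished by the $k=1$ case of \Cref{lem:k_center_additive} (together with the fact that the $1$-center cost of $P$ is $\Omega(\Delta)$), namely that $\bigl|\,\|Gx-Gy\|-\|x-y\|\,\bigr|\le\eps\Delta/100$ for all $x,y\in P$. Then for \emph{any} feasible size-$k$ set $S'$ we get $\bigl|\sum_{p\in P}\max_{s\in S'}\|p-s\|-\sum_{p\in P}\max_{s\in S'}\|Gp-Gs\|\bigr|\le|P|\cdot\eps\Delta/100$, and since $\opt(P)\ge|P|\Delta/2$ this is at most $\eps\cdot\opt(P)/50$. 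Thus every solution's value is preserved up to $\pm O(\eps)\opt(P)$; combined with $\opt(G(P))=(1\pm\eps)\opt(P)$ this shows that a $(1+\eps)$-approximate solution of $\opt(G(P))$, read back as a subset of $P$, has value at least $(1-O(\eps))\opt(P)$, i.e.\ is a $(1+O(\eps))$-approximate solution of $\opt(P)$; a final rescaling of $\eps$ finishes.

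There is no real obstacle here, as all the heavy machinery --- the $\eps$-net construction over balls of geometrically decreasing radii and the ball-expansion tail bound of \Cref{lem:IN07_doubling_radius} --- is already packaged inside \Cref{thm:large_opt} and \Cref{lem:k_center_additive}. The only step needing genuine (if elementary) care is the lower bound $\opt(P)=\Omega(|P|\cdot\textup{diameter}(P))$, and in particular handling $k=1$ separately via the averaging argument above rather than the two-center argument that works for $k\ge 2$.
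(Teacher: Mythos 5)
Your proposal is correct and follows essentially the same route as the paper: the paper also obtains this corollary by instantiating \Cref{thm:large_opt} and \Cref{cor:large_opt} with $f(x)=\|x\|_\infty$ and $L=1$, noting that $\opt(P)=\Omega(|P|\cdot\textup{diameter}(P))$ and that approximate solutions are preserved ``by the same arguments.'' You merely fill in the (correct) elementary verification of the lower bound $\opt(P)\ge |P|\Delta/2$, including the $k=1$ averaging case, which the paper leaves implicit.
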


Additionally, since the maximum degree of the graph is a special case of a spanning tree, by the same arguments we have the following for maximum spanning tree.

\begin{theorem}\label{thm:max_spanning_tree}
    Under the settings of \Cref{thm:max_k_coverage}, with probability at least $2/3$, the maximum spanning tree of $G(P)$ is a $(1+\eps)$-approximation of the maximum spanning tree of $P$.
\end{theorem}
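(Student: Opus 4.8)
The plan is to reduce everything to the additive distance‑preservation guarantee of \Cref{lem:k_center_additive} instantiated at $k=1$, exactly as in the proofs of \Cref{thm:max_k_coverage} and \Cref{thm:large_opt}. Write $\Delta\coloneq\textup{diameter}(P)$ and recall $\tfrac{\Delta}{2}\le KC_1(P)\le\Delta$. Applying \Cref{lem:k_center_additive} with $k=1$ and a rescaled $\eps$, the map $G\in\R^{t\times d}$ with $t=O(\eps^{-2}\lambda\log\tfrac{1}{\eps})$ satisfies, with probability at least $2/3$, for all $p,q\in P$,
\[
\bigl|\,\|Gp-Gq\|-\|p-q\|\,\bigr|\le\eps\Delta ;
\]
here we also used $\|p-q\|\le\Delta$ to absorb the multiplicative $(1\pm\eps)$ error into the additive term. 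Condition on this event; what remains is deterministic.

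The one combinatorial input needed is that the maximum spanning tree is \emph{large}, namely that its weight is $\Omega(n\Delta)$ with $n=|P|$. This follows from a ``double star'' construction: fix a diametral pair $\{u,v\}$, so $\|u-v\|=\Delta$; for every other $p\in P$ the triangle inequality gives $\max(\|p-u\|,\|p-v\|)\ge\tfrac{\Delta}{2}$, so the spanning tree consisting of the edge $\{u,v\}$ together with, for each remaining $p$, the edge from $p$ to the farther of $u,v$ has weight at least $\Delta+(n-2)\tfrac{\Delta}{2}\ge\tfrac{n-1}{2}\Delta$. (Equivalently, as remarked before the theorem, a star is a spanning tree and the maximum weighted degree is $\Omega(n\Delta)$, by summing $\|u-p\|+\|v-p\|\ge\Delta$ over all $p$.) Consequently the total additive slack $(n-1)\eps\Delta$ accrued over the $n-1$ tree edges is only an $O(\eps)$ fraction of the optimum.

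To conclude, combine the two facts in both directions. Pushing the optimal max spanning tree $T^\star$ of $P$ through $G$ yields a spanning tree of $G(P)$, so the maximum spanning tree of $G(P)$ has weight at least $\sum_{\{p,q\}\in T^\star}\|Gp-Gq\|\ge\sum_{\{p,q\}\in T^\star}\|p-q\|-(n-1)\eps\Delta\ge(1-O(\eps))\cdot\cost(T^\star)$. For the ``$(1+\eps)$‑approximation'' claim, take an optimal max spanning tree $T'$ of $G(P)$ and read it back as a spanning tree of $P$: its original‑space weight is $\sum_{\{p,q\}\in T'}\|p-q\|\ge\sum_{\{p,q\}\in T'}\|Gp-Gq\|-(n-1)\eps\Delta$, which equals the optimum in $G(P)$ minus $(n-1)\eps\Delta$, hence is at least $(1-O(\eps))\cdot\cost(T^\star)$ by the previous inequality. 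Rescaling $\eps$ turns the $1-O(\eps)$ factor into $\tfrac{1}{1+\eps}$.

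The only place with genuine content is the $\Omega(n\Delta)$ lower bound on the maximum spanning tree weight; without it, the additive error of \Cref{lem:k_center_additive} would not translate into a relative guarantee. The rest is bookkeeping, the main subtlety being that \Cref{lem:k_center_additive} carries a mixed multiplicative/additive error — but since every edge of a spanning tree of $P$ has length at most $\Delta$, the multiplicative part is itself at most $\eps\Delta$, which is why the clean additive statement used above is legitimate.
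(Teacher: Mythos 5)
Your proposal is correct and follows essentially the same route as the paper: the paper derives this theorem from \Cref{thm:large_opt}/\Cref{cor:large_opt} via the remark that a maximum-degree star is a spanning tree, which is exactly your combination of the additive distance preservation from \Cref{lem:k_center_additive} at $k=1$ with the lower bound $\opt = \Omega(n\cdot\textup{diameter}(P))$ coming from a star rooted at a diametral point. Your write-up simply makes explicit the details the paper leaves implicit.
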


Moreover, the dependence in the doubling dimension is tight for these problems, essentially because the optimum value is closely tied to the diameter of the set, so by \Cref{lem:diameter_lower_bound}, we get the following.
\begin{theorem}\label{thm:lower_bound_large_value}
    Let $n\in N$, there exists a set $P$ of $n$ points, such that for a Gaussian JL map $G$ onto dimension $t$, the cost of maximum spanning tree, maximum degree, and maximum $k$-coverage is distorted by factor $\Omega(\sqrt{\tfrac{\log n}{t}})$.
\end{theorem}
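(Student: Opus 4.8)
The plan is to reduce this theorem to the diameter lower bound already recorded in \Cref{lem:diameter_lower_bound}, via one elementary observation: for each of the three objectives, the optimal value is pinned --- up to universal constants --- to $|P|\cdot\textup{diam}(P)$, so any multiplicative distortion of the diameter is inherited by the objective. Concretely, I would take $P$ to be the very instance used in \Cref{lem:diameter_lower_bound} and \Cref{thm:lower_bound_remote_subgraph} (the $n$-point set --- essentially standard basis vectors together with a common point, which is also what forces doubling dimension $\Omega(\log n)$), and condition on the high-probability event that $\textup{diam}(G(P))\ge \Omega\big(\sqrt{\log n / t}\big)\cdot\textup{diam}(P)$ together with the (almost sure) event that $G$ keeps the $n$ points of $P$ distinct.

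The next step is to prove a structural sandwich: for every finite $Q$ with $m\coloneqq|Q|\ge 2$ and every $k\le m$, each of the maximum spanning tree of $Q$, the maximum weighted degree $\max_{s\in Q}\sum_{q\in Q}\|s-q\|$, and the maximum $k$-coverage $\max_{S\subseteq Q,\,|S|=k}\sum_{q\in Q}\max_{s\in S}\|s-q\|$ lies in the interval $[\tfrac12\, m\,\textup{diam}(Q),\ m\,\textup{diam}(Q)]$. The upper bounds are immediate, since each objective is a sum of at most $m$ terms, and every term (a tree edge, a degree summand, a covering distance) is at most $\textup{diam}(Q)$. For the lower bounds I would fix a diametral pair $a,b\in Q$; the triangle inequality gives $\|a-q\|+\|b-q\|\ge\|a-b\|=\textup{diam}(Q)$ for all $q\in Q$, so summing over $q$ shows that one of $a,b$ --- call it $w$ --- satisfies $\sum_{q\in Q}\|w-q\|\ge\tfrac12\, m\,\textup{diam}(Q)$. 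Then $w$ itself witnesses the weighted-degree bound; the star centered at $w$ is a spanning tree of weight $\sum_{q\in Q}\|w-q\|$, witnessing the spanning-tree bound; and any size-$k$ subset of $Q$ containing $w$ (which exists since $k\le m$) witnesses the $k$-coverage bound.

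Finally, I would apply the sandwich to $Q=P$ and to $Q=G(P)$, both of cardinality $n$. Dividing the two pairs of bounds, the ratio of optima is within a factor $2$ of $\textup{diam}(G(P))/\textup{diam}(P)$, hence $\Omega\big(\sqrt{\log n/t}\big)$ by \Cref{lem:diameter_lower_bound}; and since the sandwich is uniform across all three objectives, the single instance $P$ works for all of them simultaneously. I expect no real obstacle beyond the bookkeeping of checking that the universal constants in the sandwich do not depend on which of $P$ or $G(P)$ is plugged in, so that they cancel in the ratio. The only genuinely substantive ingredient is \Cref{lem:diameter_lower_bound} itself --- this is where the $n$ points are truly used, through an anti-concentration / union-bound argument over $\Theta(n)$ nearly independent directions --- and it is imported from \cite{JKS23_arxiv} rather than reproved here.
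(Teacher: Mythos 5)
Your proposal is correct and follows the same route the paper takes: the paper justifies this theorem in one sentence by noting that each optimum is ``closely tied to the diameter'' and invoking \Cref{lem:diameter_lower_bound}, and your sandwich $\tfrac12\,|Q|\cdot\mathrm{diam}(Q)\le \opt(Q)\le |Q|\cdot\mathrm{diam}(Q)$ (via a diametral pair and the triangle inequality) is exactly the missing quantitative content of that remark. Your write-up is in fact more complete than the paper's, and the constants and the direction of the distortion (the projected optimum overshoots) all check out.
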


\section{Proofs in \Cref{sec:max_matching_doubling}: Max Matching and Max-TSP for Doubling Sets}\label{appendix:hypermatching}

\TheoremDdimMaxHypermatching*
\begin{proof}
    We consider Gaussian JL map as in \Cref{lem:matching_additive_eps_opt}, but with $\eps'=O(\tfrac{\eps}{k})$.
    As in the proof of \Cref{thm:ddim_reduction_max_matching}, it is immediate that w.h.p., $\opt(G(P))\geq (1-\eps)\opt(P)$.
    Therefore, we focus on proving the other direction.

    Consider an optimal $k$-hypermatching of $G(P)$.
    We can decompose it to $m=O(k)$ matchings $M_1,\ldots,M_m$. Thus, by \Cref{lem:matching_additive_eps_opt},
    \begin{align*}
        \opt(G(P)) &= \sum_{i=1}^m \cost(G(M_i)) \\
        &\leq \sum_{i=1}^m \cost(M_i) +\eps' \matching(P) \\
        &\leq \opt(P)+\eps' m \matching(P) \\
        &\leq (1+O(\eps' k))\opt(P)=(1+\eps)\opt(P).
    \end{align*}
\end{proof}

\section{Proofs in \Cref{sec:lower_max_matching}: $\sqrt{2}$ Approximation Lower Bound}\label{sec:appendix_lower_max_matching}

First we begin with some helpful probability statements. 
\begin{lemma}\label{lem:gaussian_lb}
Suppose $x\sim\mathcal{N}(0, I_t)$ and let $\gamma  = O(1)$. We have 
\[ \Pr(\|x\|_2 \le \gamma\sqrt{t}) \ge \exp(-O(t \log(1/\gamma))). \]
\end{lemma}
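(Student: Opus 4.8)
\textbf{Proof plan for Lemma~\ref{lem:gaussian_lb}.}

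The plan is to lower bound $\Pr(\|x\|_2 \le \gamma\sqrt{t})$ by a direct computation using the density of the chi-squared distribution, since $\|x\|_2^2 \sim \chi^2_t$. Equivalently, I would write $\Pr(\|x\|_2 \le \gamma\sqrt{t}) = \Pr(\|x\|_2^2 \le \gamma^2 t)$ and integrate the $\chi^2_t$ density over $[0, \gamma^2 t]$. The cleanest route is probably to avoid the full integral and instead use a one-sided pointwise bound: restrict attention to the event that \emph{every} coordinate $x_i$ satisfies $|x_i| \le \gamma$, which is a subset of the event $\{\|x\|_2 \le \gamma\sqrt{t}\}$. By independence, $\Pr(\forall i,\ |x_i| \le \gamma) = \Pr(|x_1| \le \gamma)^t$, so it suffices to show $\Pr(|x_1| \le \gamma) \ge \exp(-O(\log(1/\gamma)))$ for a single standard Gaussian $x_1$, i.e. $\Pr(|x_1| \le \gamma) \ge \gamma^{O(1)}$.

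For that single-coordinate bound: if $\gamma \ge 1$ then $\Pr(|x_1| \le \gamma) \ge \Pr(|x_1| \le 1)$ is a positive constant, and $\log(1/\gamma) \le 0$ so the claim is trivial (we may absorb this into the $O(\cdot)$ with a constant offset, or simply note the statement is interesting only for small $\gamma$). If $\gamma < 1$, then
\[
\Pr(|x_1| \le \gamma) = \int_{-\gamma}^{\gamma} \frac{1}{\sqrt{2\pi}} e^{-u^2/2}\, du \ge 2\gamma \cdot \frac{1}{\sqrt{2\pi}} e^{-\gamma^2/2} \ge \frac{2\gamma}{\sqrt{2\pi}} e^{-1/2} \ge \frac{\gamma}{2},
\]
using that the density is minimized over $[-\gamma,\gamma]$ at the endpoints and $e^{-\gamma^2/2}\ge e^{-1/2}$ for $\gamma<1$. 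Hence $\Pr(|x_1|\le\gamma) \ge \gamma/2 = \exp(-\log(2/\gamma)) = \exp(-O(\log(1/\gamma)))$. Raising to the $t$-th power gives $\Pr(\|x\|_2 \le \gamma\sqrt t) \ge \exp(-t\log(2/\gamma)) = \exp(-O(t\log(1/\gamma)))$, as desired.

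I do not expect any serious obstacle here; the only mild care needed is the regime split on whether $\gamma < 1$ or $\gamma \ge 1$ (the hypothesis only says $\gamma = O(1)$), and making sure the $O(\log(1/\gamma))$ notation correctly handles $\gamma$ bounded away from $0$ — which it does, since then both sides are $\exp(-\Theta(t))$ up to constants. If one wanted a tighter or cleaner bound one could instead integrate the $\chi^2_t$ density and use Stirling, but the coordinatewise argument above is simpler and entirely sufficient for the stated $\exp(-O(t\log(1/\gamma)))$ form.
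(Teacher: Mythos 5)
Your proposal is correct and follows essentially the same route as the paper's proof: restricting to the event that every coordinate satisfies $|x_i|\le\gamma$, using independence, and noting $\Pr(|x_1|\le\gamma)=\Theta(\gamma)$ so the product is $\exp(-O(t\log(1/\gamma)))$. The only difference is that you spell out the single-coordinate density bound and the $\gamma\ge 1$ regime explicitly, which the paper leaves implicit.
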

\begin{proof} 
    Note that if $|x_i| \le \gamma$ for all $i$ then we must have $\|x\|_2 \le \gamma\sqrt{t}$ so we lower bound the probability that $|x_i| \le \gamma$ for all $i$. From the density function of a Gaussian, we know that $\Pr(|x_i| \le \gamma) = \Theta(\gamma)$, so it follows that 
    \[ \Pr(\forall i, \, |x_i| \le \gamma) \ge \Theta(\gamma)^t  \ge \exp(-O(t \log(1/\gamma))). \qedhere \]
\end{proof}
The following is a straightforward corollary of Lemma \ref{lem:gaussian_lb}.
\begin{corollary}\label{cor:gaussian_lb}
     Let $\gamma \in (0, 1), t \le \frac{\log(n)}{C \log(1/\gamma)}$ for a sufficiently large constant $C > 0$. If $x \sim \mathcal{N}(0, I_t)$ then
     \[ \Pr(\|x\|_2 \le \gamma\sqrt{t}) \ge n^{-O(1/C)}.  \]
\end{corollary}

\begin{lemma}\label{lem:gaussian_comparision}
Suppose $x \sim \mathcal{N}(0, I_t)$, $y \sim \mathcal{N}(0, I_{t-1})$ and fix $\tau > \sqrt{2}$. There exists an absolute constant $C' > 0$ such that 
\[  \Pr\left(\|y\|_2 \le \tau \right) \le C'^t \cdot \Pr\big(\|x\|_2 \le \tfrac{\tau}{\sqrt{2}}\big) . \]
\end{lemma}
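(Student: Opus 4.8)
The plan is to compare the two Gaussian densities directly by integrating in polar coordinates. Write $\Pr(\|y\|_2 \le \tau) = \int_0^{\tau} f_{t-1}(s)\, ds$ and $\Pr(\|x\|_2 \le \tau/\sqrt{2}) = \int_0^{\tau/\sqrt{2}} f_t(s)\, ds$, where $f_m(s) = \frac{s^{m-1} e^{-s^2/2}}{2^{m/2-1}\Gamma(m/2)}$ is the density of the Euclidean norm of a standard $m$-dimensional Gaussian (the chi distribution). The substitution $s = r\sqrt{2}$ in the first integral matches the range of integration of the second, so that $\Pr(\|y\|_2 \le \tau) = \int_0^{\tau/\sqrt{2}} f_{t-1}(r\sqrt{2}) \sqrt{2}\, dr$. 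It then suffices to bound the ratio of integrands $\frac{\sqrt{2}\, f_{t-1}(r\sqrt{2})}{f_t(r)}$ by $C'^t$ uniformly for $r \in (0, \tau/\sqrt{2}]$, and the lemma follows with the same constant $C'$.

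Plugging in the density formula, the integrand ratio is
\begin{align*}
\frac{\sqrt{2}\, f_{t-1}(r\sqrt{2})}{f_t(r)}
&= \sqrt{2}\cdot \frac{(r\sqrt{2})^{t-2} e^{-r^2}}{2^{(t-1)/2 - 1}\Gamma((t-1)/2)} \cdot \frac{2^{t/2-1}\Gamma(t/2)}{r^{t-1} e^{-r^2/2}} \\
&= \frac{2^{(t-1)/2}}{r}\cdot \frac{\Gamma(t/2)}{\Gamma((t-1)/2)} \cdot e^{-r^2/2}.
\end{align*}
The factor $e^{-r^2/2} \le 1$ is harmless. Using the standard bound $\Gamma(t/2)/\Gamma((t-1)/2) = O(\sqrt{t})$ (from $\Gamma(z+1/2)/\Gamma(z) = \Theta(\sqrt{z})$), the ratio is $O(2^{t/2}\sqrt{t}/r)$. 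The only real issue is the $1/r$ blow-up as $r \to 0$, which would seem to prevent a uniform bound. This is where I expect the main obstacle, and the resolution is to not bound the integrands pointwise near $0$ but to keep the integral: $\int_0^{\tau/\sqrt{2}} f_{t-1}(r\sqrt{2})\sqrt{2}\,dr$ with the explicit extra factor of $r^{t-2}$ in $f_{t-1}$ absorbs the $1/r$, since $r^{t-2}/r^{t-1}$... wait — more carefully, one should split off the comparison at the level of antiderivatives or simply observe that $\int_0^{a} r^{t-2} e^{-r^2}\,dr$ and $\int_0^a r^{t-1} e^{-r^2/2}\,dr$ differ by a factor that is at most $C'^t$ via $\int_0^a r^{t-2}e^{-r^2}dr \le \frac{1}{a}\int_0^a r^{t-1} e^{-r^2/2} dr \cdot (\text{correction})$; cleanest is to note $\int_0^a r^{t-2}e^{-r^2}\,dr \le a^{-1}\int_0^a r^{t-1} e^{-r^2/2}\,dr$ fails near the boundary too, so instead integrate by parts once, or simply invoke that for $t \ge 2$, $\int_0^a r^{t-2}e^{-r^2}dr = \Theta(1)\cdot a^{t-1}e^{-\Theta(a^2)} \cdot \frac{1}{t}$-type asymptotics vs.\ the analogous expression, and the ratio of these closed-ish forms is bounded by $C'^t$ for an absolute $C'$ since each is within a $\poly(t)\cdot c^t$ factor of $a^{t} $ times a decaying exponential.

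Concretely, I would carry this out as follows. First, reduce to the integrand/integral comparison via the substitution above. Second, write both incomplete-integral quantities using the fact that $\int_0^a s^{m-1} e^{-\beta s^2}\,ds$ equals a constant times $\beta^{-m/2}$ times the regularized lower incomplete gamma function $\gamma(m/2, \beta a^2)$; the two quantities are then $c_1 \gamma((t-1)/2,\, 2(\tau/\sqrt2)^2 \cdot (1/2)?)$-style expressions — I would instead avoid incomplete gammas and just use the crude two-sided bound $\int_0^a s^{m-1}e^{-\beta s^2}ds \in [\,c^m,\,C^m\,]\cdot \min(a^m, \beta^{-m/2})\cdot(\text{poly factors})$ which is elementary. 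Third, take the ratio: all $a$- and $\beta$-dependence cancels up to factors of the form $(\text{const})^t$ and $\poly(t)$, and $\poly(t) \le C'^t$ after enlarging $C'$. Fourth, fold in the $\Gamma$-ratio $O(\sqrt t)$ and the $2^{(t-1)/2}$ prefactor, both absorbed into $C'^t$, and fold in $e^{-r^2/2}\le 1$; conclude. The hypothesis $\tau > \sqrt 2$ is used only to ensure $\tau/\sqrt2 > 1$ so that nothing degenerates, and $t \ge 2$ (implicit since $t-1$ must be a positive dimension) keeps all exponents nonnegative. I expect the entire argument to be short once the substitution $s = r\sqrt 2$ is in place; the only subtlety is resisting the temptation to bound densities pointwise near the origin and instead comparing the integrals.
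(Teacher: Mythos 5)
Your substitution $s = r\sqrt{2}$ is a legitimate route, and it correctly isolates the real difficulty: after the change of variables the integrand ratio $\sqrt{2}\,f_{t-1}(r\sqrt{2})/f_t(r)$ is proportional to $2^{t/2}\sqrt{t}\,e^{-r^2/2}/r$, which blows up as $r\to 0$, so no pointwise comparison can close the argument. The gap is that you never actually resolve this. The passage after ``wait'' cycles through several candidate fixes without committing to any, and the one concrete tool you do invoke --- the ``crude two-sided bound'' $\int_0^a s^{m-1}e^{-\beta s^2}\,ds \in [c^m, C^m]\cdot\min(a^m,\beta^{-m/2})\cdot(\text{poly factors})$ --- is false as stated: for $a\gg\sqrt{m/\beta}$ the integral equals $\tfrac12\beta^{-m/2}\Gamma(m/2)(1-o(1))$, and $\Gamma(m/2)=m^{\Theta(m)}$ is superexponential in $m$, not absorbable into $C^m$ times polynomial factors. (The offending $\Gamma$ factors would largely cancel in the \emph{ratio} of the two integrals, but your plan bounds each integral separately by the stated form and then divides, so as written the step fails.)

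The missing step has a short fix. Setting $a=\tau/\sqrt{2}>1$ and stripping normalizations, it suffices to show $\int_0^a r^{t-2}e^{-r^2}\,dr \le O(1)\cdot\int_0^a r^{t-1}e^{-r^2/2}\,dr$, since the prefactor $2^{t/2}\,\Gamma(t/2)/\Gamma((t-1)/2)=2^{t/2}\cdot O(\sqrt{t})$ is already at most $C'^t$. Use $e^{-r^2}\le e^{-r^2/2}$ and split at $r=1$: on $[1,a]$ one has $r^{t-2}\le r^{t-1}$ pointwise, while on $[0,1]$ one has $\int_0^1 r^{t-2}e^{-r^2/2}\,dr\le \tfrac{1}{t-1}$ and $\int_0^1 r^{t-1}e^{-r^2/2}\,dr\ge \tfrac{e^{-1/2}}{t}$, so the left piece is at most $2e^{1/2}$ times the right for $t\ge 2$. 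This is precisely where the hypothesis $\tau>\sqrt{2}$ is needed, and it is the same device the paper uses in the last step of its own proof (``pointwise for $x>1$, and $\int_0^1 x^k\,dx = 1/(k+1)$ for $x\le 1$''). For comparison, the paper's decomposition is slightly different: it splits $\Pr(\|y\|_2\le\tau)$ into the ball of radius $\tau/\sqrt{2}$ plus the annulus, bounds the annulus contribution by a $(\sqrt{2})^{t-1}$ volume ratio against the inner ball, and then handles the dimension shift $t-1\to t$ at fixed radius via the chi-squared density comparison. Your substitution merges the radius change and the dimension shift into one comparison, which is arguably cleaner --- but the near-origin estimate is the crux either way, and it is the one piece your write-up leaves unproved.
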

\begin{proof}
 We can decompose 
    \[\Pr(\|y\|_2 \le \tau) = \Pr(\|y\|_2 \le \tau/\sqrt{2}) +  \Pr(\tau/\sqrt{2} \le \|y\|_2 \le \tau). \]
 Note that the former is at least 
     \[ \Pr\big(\|y\|_2 \le \tfrac{\tau}{\sqrt{2}}\big) \ge \frac{ \exp(-\tau^2/4) }{T_{t-1}}  \cdot \text{Vol}(B_{t-1}(0, \tau/\sqrt{2})),\]
    where $B_{t-1}(0, \tau/\sqrt{2})$ is a centered ball of radius $\tau/\sqrt{2}$ in $\R^{t-1}$ and $T_{t-1}$ is the normalization constant of the Gaussian density function. On the other hand, if $\|y\|_2 \ge \tau/\sqrt{2}$, we can bound 
    \begin{align}\label{eq:gaussian_ball}
         \Pr(\tau/\sqrt{2} &\le \|y\|_2 \le \tau) \le \frac{ \exp(-\tau^2/4) }{T_{t-1}}  \cdot \text{Vol}(B_{t-1}(0, \tau)) \nonumber \\
         &\le \frac{ C'^t \exp(-\tau^2/4) }{T_{t-1}}  \cdot \text{Vol}(B_{t-1}(0, \tau/\sqrt{2}))
    \end{align}
    for some sufficiently large constant $C'$. Thus it suffices to compare $\Pr(\|y\|_2 \le \tau/\sqrt{2})$ and $\Pr(\|x\|_2 \le \tau/\sqrt{2})$. However, both $\|y\|_2^2$ and $\|x\|_2^2$ are chi-squared random variables (with $\|x\|_2^2$ having one more degree of freedom), and from their density function, it suffices to show 
    \[ 2^t \int_0^{\tau/\sqrt{2}} x^{t/2-1}e^{-x/2} \ dx \ge \int_0^{\tau/\sqrt{2}} x^{(t-1)/2-1}e^{-x/2} \ dx. \]
    (Note that we have ignored the normalization constant since the ratio of the larger to the smaller can easily be seen to be bounded by $\exp(O(t))$.) 
    
    Indeed, point wise, the function $x^{t/2-1} > x^{(t-1)/2-1}$ for all $x > 1$. To handle $x \le 1$, we have $\int_0^1 x^{k} \ dx = \frac{1}{k+1}$ so the $2^t$ multiplication term implies the left integral above is also larger over $[0, 1]$.
    Altogether, combining $\Pr(\|y\|_2 \le \tau/\sqrt{2}) \le \exp(O(t)) \Pr(\|x\|_2 \le \tau/\sqrt{2})$ with Equation \ref{eq:gaussian_ball} completes the proof.
\end{proof}

\begin{lemma}\label{lem:annoying_concentration}
    Let $\eps \in (0, 1)$ and $S_1, \cdots, S_n$ be identically distributed and possibly correlated indicator variables satisfying
    \begin{itemize}
        \item $\forall i, \E[S_i] = p$,
        \item  $p \ge 1/n^{0.1}$,
        \item $\forall k \ge 2$ and distinct indices $i_1, \ldots, i_k, \E[S_{i_1}S_{i_2} \ldots S_{i_k}] \le (n^{0.001}p)^k$, and
        \item $n \ge \Omega(1/\eps^2)$.
    \end{itemize}
    Let $S = \sum_{i=1}^n S_i$. We have 
    \[ \Pr(|S - \E[S]| \ge \eps \E[S]) \le \frac{1}{n^{99}}. \]
    
\end{lemma}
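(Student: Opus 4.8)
The plan is to use the method of moments: bound a high moment $\E[(S-\E[S])^k]$ and apply Markov's inequality. Concretely, I would pick an even integer $k = \Theta(\log n)$ (large enough that $n^{-ck} \le n^{-99}$ for the relevant constant $c$ below, and $k$ even so that $(S-\E[S])^k \ge 0$), and aim to show $\E[(S-\E[S])^k] \le (\eps \E[S])^k \cdot n^{-99}$, after which Markov gives the claim directly. Expanding, $\E[(S - \E[S])^k] = \sum_{i_1,\dots,i_k} \E\big[\prod_{j=1}^k (S_{i_j} - p)\big]$, and the standard observation is that a term vanishes unless every index appearing in the tuple $(i_1,\dots,i_k)$ appears at least twice (since the $S_i$ have mean $p$, though here correlations mean we cannot factor — but a term with an index appearing exactly once still contributes, so more care is needed; see below). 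So the sum is dominated by tuples using at most $k/2$ distinct indices.

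The key inputs are the hypotheses: each $S_i \in \{0,1\}$ with $\E[S_i]=p$, the near-independence bound $\E[S_{i_1}\cdots S_{i_\ell}] \le (n^{0.001}p)^\ell$ for distinct indices, and $p \ge n^{-0.1}$. Because the $S_i$ are indicators, $(S_i - p)^a$ for $a \ge 1$ is a bounded linear combination of $S_i$ and $1$; expanding $\prod_j (S_{i_j}-p)$ and collecting by the set of distinct indices used, each resulting monomial $\E[\prod_{r} S_{j_r}]$ over $\ell$ distinct indices is at most $(n^{0.001}p)^\ell$. Counting: the number of tuples $(i_1,\dots,i_k) \in [n]^k$ that use exactly $\ell$ distinct values is at most $\binom{n}{\ell}\ell^k \le n^\ell \ell^k$, and each contributes (after the binomial expansion in the $p$'s, which costs a factor $2^k$) at most $2^k (n^{0.001}p)^\ell \cdot (\text{bounded})$. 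Summing over $\ell \le k/2$ (the terms with $\ell > k/2$ must have some index appearing once; I will handle those by a separate crude bound, or absorb them by noting the dominant behavior is still governed by the $\ell \le k/2$ regime since a singleton index $i$ with $\E[S_i-p]=0$ kills the term only under independence — so actually I should be careful and instead bound $|\E[\prod(S_{i_j}-p)]|$ directly using $|S_i - p| \le 1$ together with the moment hypothesis applied to whichever indices I keep). The cleanest route: bound $\E[(S-\E[S])^k] \le \E[S^k] + (\text{lower order})$ is too lossy; instead use the centered expansion and the fact that $\E[\prod_j S_{i_j}] \le (n^{0.001}p)^{|\{i_j\}|}$ to get $\E[(S-\E[S])^k] \le \sum_{\ell=1}^{k} \binom{k}{\le \ell}\, n^\ell \ell^k (2n^{0.001}p)^\ell$, and then verify this is $\le (\eps \E[S])^k n^{-99}$ using $\E[S] = np \ge n^{0.9}$, $\eps \ge n^{-1/2}$ (so $\eps \E[S] \ge n^{0.4}$), and $n^{0.001 \ell}\cdot\ell^k \le n^{0.01 k}$ for $\ell \le k$ and $k = \Theta(\log n)$; the geometric sum in $\ell$ is dominated by $\ell = k$, contributing roughly $(2n^{1.001}p)^k = (2n^{0.001}\E[S])^k$, which against $(\eps\E[S])^k n^{-99} = (\eps \E[S])^k n^{-99}$ requires $(2 n^{0.001}/\eps)^k \le n^{99}$, i.e. $k\log(2n^{0.001}/\eps) \le 99 \log n$ — true for a suitable $k = \Theta(\log n)$ once $\eps$ is not subpolynomially small, and $\eps \ge n^{-1/2}$ by hypothesis.

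The main obstacle is the accounting for tuples with ``singleton'' indices (indices appearing exactly once in $(i_1,\dots,i_k)$): under genuine independence these vanish, but with only the correlation bound we must still control them. I expect to handle this by \emph{not} trying to exploit cancellation from singletons at all — simply bound $|\prod_j(S_{i_j}-p)| $ crudely and apply the moment hypothesis only to the multiset of distinct indices, accepting the resulting (still polynomially bounded) estimate; the slack from $\eps\E[S] \ge n^{0.4}$ versus the $n^{0.001}$ loss per index is comfortable enough that this lossy bound suffices. An alternative, if the moment count is delicate, is to split $S_i = p + (S_i - p)$ and bound via the inequality $\E[(S-\E S)^k] \le k! \,\sigma^k + (\text{tail})$ where $\sigma^2 \le \E[S^2] \le \E[S] + (n^{0.001}p)^2 n^2$; but the direct moment expansion above is cleaner and I would present that, being careful to state the combinatorial count $|\{(i_1,\dots,i_k): |\{i_j\}| = \ell\}| \le \binom{n}{\ell} \cdot \ell^k$ and to track all $2^k$ and $\ell^k$ factors against the $n^{99}$ budget.
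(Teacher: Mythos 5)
There is a genuine gap, and it sits exactly at the point you flagged as the ``main obstacle'' and then waved away. Your plan is to bound the centered moment $\E[(S-\E[S])^k]$ with $k=\Theta(\log n)$ while not exploiting any cancellation from singleton indices, on the grounds that the slack from $\eps\E[S]\ge n^{0.4}$ is ``comfortable enough.'' It is not. The dominant contribution comes from the roughly $n^k$ tuples with all indices distinct; for such a tuple your crude bound gives $|\E[\prod_j(S_{i_j}-p)]|\le\sum_{T\subseteq[k]}p^{k-|T|}(n^{0.001}p)^{|T|}\le(2n^{0.001}p)^k$, so the total is of order $(2n^{1.001}p)^k=(2n^{0.001}\E[S])^k$. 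Markov then yields $\Pr(|S-\E[S]|\ge\eps\E[S])\le(2n^{0.001}/\eps)^k$, whose base exceeds $1$, so the bound is vacuous. Your own final check contains the tell-tale reciprocal error: you write that you need $(2n^{0.001}/\eps)^k\le n^{99}$, but the requirement $\E[(S-\E[S])^k]\le(\eps\E[S])^kn^{-99}$ actually forces $(2n^{0.001}/\eps)^k\le n^{-99}$, which is impossible. The issue is structural, not a matter of tightening constants: the hypothesis only upper-bounds joint moments of the $S_i$ themselves, so no cancellation is available for centered singleton factors, and any estimate that treats the all-distinct tuples without cancellation produces a $k$-th centered moment of order at least $(\E[S])^k$, which can never beat $(\eps\E[S])^k$ for $\eps<1$, no matter how $k$ is chosen in the logarithmic range.

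The paper's proof takes a different route that sidesteps centering entirely: it uses the inequality $\E[|S-\E[S]|^m]\le\E[S^m]$ for nonnegative $S$ and bounds the \emph{uncentered} moment, with $m=n^{0.99}$ polynomially large. The savings then come from a different place than you are looking: a product of $m$ indicator factors supported on only $k$ distinct indices collapses (since $S_i^j=S_i$) to a product of $k$ distinct indicators, contributing $(n^{0.001}p)^k$ with exponent $k$ rather than $m$, while the factor $\bigl(\frac{en}{k}\bigr)^k$ from $\binom{n}{k}$ with $k$ comparable to $m=n^{0.99}$ is smaller than $n^k$ by the decisive factor $(k/e)^{-k}$; dividing by $(\eps np)^m$ then leaves room because $(np)^{m-k}$ absorbs the losses for small $k$ and the $(e/k)^k$ term does so for $k$ near $m$. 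If you want a working proof, you need to import that mechanism (uncentered moments of polynomially large order, exploiting idempotence of indicators) or find some other genuine source of gain; the logarithmic-order centered expansion as you describe it cannot be repaired.
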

\begin{proof}
    Fix $m \ge 2$ an even integer. We first bound the moment $\E[S^m]$. Note that for all $i,j \ge 1$, $S_i^j = S_i$ since our variables are indicators. Let $1 \le k \le m$. Since our random variables are also identical, by symmetry, we only have to understand the number of terms of the form $\E[S_1 \ldots S_k]$ that we get in the multinomial expansion. First, we can pick these $k$ indices in $\binom{n}k$ ways. Then by classic stars and bars, there are at most $\binom{m+k-1}{k-1} \le \binom{m+k}k$ ways to assign these $k$ indices all the $m$ powers. This gives us 
    \begin{align*}
        \E[S^m] &\leq \sum_{k=1}^m \binom{n}k \cdot \binom{m+k}k \cdot (n^{0.001}p)^k \\
        &\le \sum_{k=1}^m \left( \frac{en}k \right)^k \cdot \left(\frac{2em}k\right)^k \cdot  (n^{0.001}p)^k.
    \end{align*}
Since $S$ is non-negative, we know (e.g. see \cite{4010466}) that 
\[ \E[|S - \E[S]|^m] \le \E[S^m], \]
which by Markov's inequality implies 
\[ \Pr(|S - \E[S] \ge \lambda) \le \frac{\E[|S - \E[S]|^m]}{\lambda^m} \le  \frac{\E[S^m]}{\lambda^m}. \]
Setting $\lambda = \eps \E[S] = \eps np$ and using our estimate above,
\begin{align*}
     &\Pr(|S - \E[S] \ge \eps \E[S]) \\
     \le &\sum_{k=1}^m \left( \frac{en}k \right)^k \cdot \left(\frac{2em}k\right)^k \cdot  \frac{(n^{0.001}p)^k}{(\eps np)^m} \\
     = &\sum_{k=1}^m \left( \frac{2e^2m}{k^2} \right)^k \cdot \frac{n^{0.001k}}{n^{m-k}p^{m-k} \eps^m} \\
     \le  &\sum_{k=1}^m \left( \frac{2e^2m}{k^2} \right)^k \cdot \frac{1}{n^{0.9(m-k)-0.001k} \eps^m} && \text{since $p\geq n^{0.1}$.}
\end{align*}

Set $m = n^{0.99}$  (assuming $n^{0.99}$ is an even integer. We omit floor/ceiling signs and a rounding up or down by $1$ to ensure evenness, since both are inconsequential). 
We claim that for any $k \le m$,
\[  \left( \frac{2e^2m}{k^2} \right)^k \cdot \frac{1}{n^{0.9(m-k)-0.001k} \eps^m} \le \frac{1}{n^{100}}.\]
This is true if and only if 
\begin{align*}
    &n^{100}(2e^2 m)^k \le k^{2k} \cdot n^{0.9(m-k) - 0.001k} \eps^m \\
    \iff &n^{100/k}(2e^2 m) \le k^2 \cdot n^{0.9(m/k-1) - 0.001}\eps^{m/k}\\
    &= k^2 (n^{0.9} \eps)^{m/k}/n^{0.901}.
\end{align*}
Since $n \ge 1/\eps^2$, we have $n^{0.9} \eps \ge n^{0.4}$, and for sufficiently large $n$, we have $2e^2 m\leq n^{0.999}$, so it suffices to show 
\begin{align*}
    &n^{100/k + 1.9} \le k^2 n^{0.4m/k} \\
    \iff &\log n \left( \frac{100}k + 1.9 \right) \le 2 \log(k) + \frac{0.4m}k \cdot \log n.
\end{align*}
If $0.4m/k \ge 100/k + 1.9$ then we are done. Otherwise, we have $.4m/k \le 100/k + 1.9 \le 100/k + 2 \implies k \ge .4m - 100 \ge n^{0.98}$ for sufficiently large $n$. Then $100/k + 1.9 \le 1.95$ for sufficiently large $n$ and we have 
\[  \log n \left( \frac{100}k + 1.9 \right)  \le 1.95 \log n \]
and $2 \log(k) \ge 2 \cdot (0.98) \log n \ge 1.95 \log n,$
and we are done. Putting everything together, gives us 
\[  \Pr(|S - \E[S] \ge \eps \E[S]|) \le \frac{m}{n^{100}} \le \frac{1}{n^{99}}\]
for sufficiently large $n$, as desired.
\end{proof}

\LemmaDegreeConcentration*
\begin{proof}
    Ultimately our goal is to use the concentration inequality developed in Lemma \ref{lem:annoying_concentration}. Towards this, define the indicator variables $S_i = \mathbf{1}\{\|x_1 + y_i\|_2 \le \eps \sqrt{t}\}$. The degree of $x_1$ is simply the sum of the $S_i$ variables. Note that $S_i$'s are identically distributed but not independent of each other since they all depend on $x_1$. Furthermore, we know that  $\forall i, \E[S_i]  = p \ge 1/n^{0.1}$ from Corollary \ref{cor:gaussian_lb} by setting $C$ to be large enough in the definition of $t$. Thus we have checked the first two conditions of Lemma \ref{lem:annoying_concentration}.

    The most interesting condition to check is the third hypothesis. For any distinct indices $i_1,\cdots, i_k,$
    \[ \E[S_{i_1} \cdots S_{i_k}] = \Pr( \forall i_1, \ldots, i_k,  \, \mathbf{1}\{\|x_1 + y_{i_k}\|_2 \le \eps \sqrt{t}\}).\]
    We first transform this the event $S_{i_1} \cdots  S_{i_k}$ into an equivalent but easier to handle event.  Let $U$ be the orthogonal matrix that sends $x_1$ to a scalar multiple of $e_1$, the first basis vector. Note that $U$ is a random matrix that depends on $x_1$ but not on any of the other of the Gaussians that we have drawn. By considering the density function of a Gaussian, we know that $Uy_1, \cdots, Uy_{n/2}$ are i.i.d. Gaussians $z_1, \cdots, z_{n/2}$, again all drawn from $\mathcal{N}(0, I_t)$ (this fact is also referred to as the `rotational invariance' of Gaussians). Since $U$ is an orthogonal matrix, we always have $\|x_1 + y_i\|_2 = \|Ux_1 + Uy_i\|_2$, so 
    \begin{align*}
        S_i &= \mathbf{1}[\|x_1 + y_i\|_2 \le \eps \sqrt{t}] \\
        &= \mathbf{1}[\|Ux_1 + Uy_i\|_2 \le \eps \sqrt{t}]  
        =  \mathbf{1}[\|Ux_1 + z_i\|_2 \le \eps \sqrt{t}],
    \end{align*}
    where we again note that $Ux_1$ is a multiple of the vector $e_1$. Now the crux is that for $\|x_1 + y_{i}\|_2 \le \eps \sqrt{t}$ to hold, it must also be the case that $\|Ux_1 + z_i\|_2 \le \eps \sqrt{t}$, which implies that $\|\tilde{z}_i\|_2 \le \eps \sqrt{t}$ must also hold, where $\tilde{z}_i \in \R^{t-1}$ is the vector where we simply remove the first coordinate of $z_i$ ($\|Ux_1 + z_i\|_2^2$ is a sum of positive terms so if the sum is bounded, any partial sum must also be bounded by the same quantity). Thus,
    \begin{align*}
        &\Pr( \forall i_1, \ldots, i_k, \,  \mathbf{1}\{\|x_1 + y_{i_k}\|_2 \le \eps \sqrt{t}\}) \\
        \le &\Pr(\forall i_1, \ldots, i_k, \, \mathbf{1}\{\|\tilde{z}_{i_k}\|_2 \le \eps \sqrt{t}\}).
    \end{align*}
    However, the latter probability consists of independent events since the $z_i$'s are i.i.d. Gaussians $N(0,I_{t-1})$. Thus, 
    \begin{align*}
        &\Pr( \forall i_1, \ldots, i_k, \,  \mathbf{1}\{\|x_1 + y_{i_k}\|_2 \le \eps \sqrt{t}\})\\
        &\le \Pr( \mathbf{1}\{\|\tilde{z}_1\|_2 \le \eps \sqrt{t}\})^k.
    \end{align*}
      Since $t = \Omega(1/\eps^2)$ (and thus $\eps \sqrt{t} = \Omega(1)$), Lemma \ref{lem:gaussian_comparision} implies that 
      \[ \Pr( \|\tilde{z}_1\|_2 \le \eps \sqrt{t})\le C'^t \Pr(\|x_1+y_{i_1}\|\leq \eps \sqrt{t}) = C'^t p \]
      for a absolute constant $C' > 0$. Now by setting $C$ in the definition of $t$ to be sufficiently large, we see that $C'^tp \le n^{0.001}p$, satisfying the third condition of Lemma \ref{lem:annoying_concentration}. Thus all the hypothesis of Lemma \ref{lem:annoying_concentration} hold and we are done.
\end{proof}

\ConcentrationNorms*
\begin{proof}
    From standard concentration bounds for chi-squared variables, (see \cite{Wainwright_2019}), we know that any fixed $x_i$ satisfies $\|x_i\|_2/\sqrt{t} \in 1 \pm \eps$ with probability at least say $1-\eps^{100}$. Since the $x_i$ are independent, a standard Chernoff bound implies that at least a $1-\eps/100$ $x_i$ satisfy $\|x_i\|_2/\sqrt{t} \in 1 \pm \eps$ except with failure probability at most $\exp(-\Omega(n \eps^2))$.
\end{proof}

\LemmaLargeMatching*
\begin{proof}[Proof of \Cref{lem:large_matching}]
    First we show that with high probability, all degrees of $H$ are tightly concentrated. Indeed, using Lemma \ref{lem:degree_concentration} and a union bound over all the $O(n)$ vertices implies that with probability at least $1-1/n^{50}$, all vertices have degrees within $1\pm \eps$ of their expected degree. Condition on this event and set $\Delta = \E[\deg(x_1)]$. For every $(i,j) \in E$, let $z_{ij} = ((1+\eps) \Delta)^{-1}$. We can verify that for any $x_i \in A$, 
    \[ \sum_{j \in R}z_{ij} = \frac{\deg(x_i)}{(1+\eps) \Delta} \le 1 \]
    since we know that $\deg(x_i) \le (1+\eps)\Delta$ by our conditioning. Thus all the constraints on the vertices in $A$ are satisfied and similarly we can check that all the constraints on the vertices in $B$ are also satisfied. Thus our assignment is feasible and it has total value at least 
    \begin{align*}
        \sum_{(i,j) \in E} z_{ij} &= \sum_{i \in A} \sum_{j \in B \mid (i,j) \in E}\frac{1}{(1+\eps) \Delta} \\
        &\ge \sum_{i \in A} \frac{1-\eps}{1+\eps}  \ge |A|(1-O(\eps)).
    \end{align*}
The lemma follows.
\end{proof}

\LowerMaxTSP*
\begin{proof}
    Split the basis vectors into $k = 1/\eps$ sets of $\eps n$ points each. Label the sets $X_1, \cdots, X_{1/\eps}$. By Theorem \ref{thm:matching_sqrt(2)_lower_bound}, for every consecutive pair $X_i, X_{i+1}$, there exists a matching in the projected space of total weight at least $\eps n (2-\eps)$ between the pair (with the appropriate failure probability coming from Theorem \ref{thm:matching_sqrt(2)_lower_bound}). We condition on the event that such a matching exists \emph{for all} consecutive pairs in the projected space (this event happens with probability at least $1-\exp(-\Omega(n \eps^3)) - 1/n^{25}$ by a union bound).

    Given this, we explicitly demonstrate a large weight tour in the projected space. First by following the matchings across all pairs of sets, we have $\eps n$ edge disjoint paths, each with $1/\eps$ edges. Each path has total weight at least $ (2-\eps)/\eps$ by our conditioning, so the total collection has weight at least $n(2-\eps)$. 
    Connect the endpoints of these paths to form a tour. The extra added connections can only increase the cost of the tour.
    Thus, we have demonstrated one possible tour in the projected space of cost at least $n(2-\eps)$. However, any tour in the original space, including the optimal maximum cost tour, has total weight $n \sqrt{2}$, since all distances are $\sqrt{2}$, proving the lower bound as desired.
\end{proof}
\section{$2$-Approximation for Max TSP}\label{sec:euclid_max_tsp}
We believe that maximum TSP behaves similarly to maximum matching, and a Gaussian JL map yields a $(\sqrt{2}+\eps)$-approximation for maximum TSP with high probability.
Such a result would be true assuming that for every set $P\subset \R^d$, there is tour that is a Tverberg graph. However, this is an open question about Tverberg graphs that is still unproven \cite{PirahmadPV24_TverbergMatching}.
We are still able to prove an unconditional bound, of $(2+\eps)$-approximation.
\begin{theorem}\label{thm:2_approx_max_tsp_euclid}
    Let $0<\eps<1, d\in \N$ and a Gaussian JL map $G\in\R^{t\times d}$ for suitable $t=O(\eps^{-2}\log\tfrac{1}{\eps})$.
    For every $P\subset\R^d$, with probability at least $2/3$, we have that $\tsp(G(P))$ is a $(2+\eps)$-approximation of $\tsp(P)$.
\end{theorem}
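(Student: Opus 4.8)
The plan is to prove the two bounds $\tsp(G(P)) \ge (1-\eps)\tsp(P)$ and $\tsp(G(P)) \le (2+\eps)\tsp(P)$ separately, each with failure probability at most $1/10$, and then rescale $\eps$. Throughout I will use only the elementary facts that for a unit vector $u$ and $t=\Omega(\eps^{-2})$ one has $\mu_t := \E[\|Gu\|] \in [1-\eps,1]$ and $\Var(\|Gu\|)\le 1/t$, i.e.\ $\E[\|Gx\|]=\mu_t\|x\|$ and the standard deviation of $\|Gx\|$ is at most $\|x\|/\sqrt t \le \eps\|x\|$; these follow from the expectation and variance of a chi distribution.

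For the lower bound I would take an optimal tour $S^\star$ of $P$ and observe that the image of $S^\star$ is a tour of $G(P)$, so $\tsp(G(P)) \ge \cost(G(S^\star)) = \sum_{\{p,q\}\in S^\star}\|Gp-Gq\|$. Its expectation is $\mu_t\tsp(P)\ge (1-\eps)\tsp(P)$, and since standard deviation is subadditive over sums, its standard deviation is at most $\sum_{\{p,q\}\in S^\star}\tfrac{\|p-q\|}{\sqrt t} \le \eps\,\tsp(P)$; Chebyshev then gives $\cost(G(S^\star)) \ge (1-O(\eps))\tsp(P)$ with probability $\ge 9/10$.

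The upper bound is the heart of the matter, and the point is that a naive union bound over the $\approx n!$ tours of $G(P)$ is hopeless, while decomposing a tour of $G(P)$ into two perfect matchings and invoking \Cref{thm:sqrt2_for_matching_Euclidean_1dim} only yields $\tsp(G(P)) \le 2\sqrt2\,\matching(P)$, which is too weak. Instead I would run a $1$-median argument. Write $\operatorname{med}(Q)=\min_c\sum_{x\in Q}\|c-x\|$. The key observation is that every vertex has degree exactly $2$ in any tour, so for \emph{any} center $c$ and any tour $S$ of $G(P)$, the triangle inequality gives $\cost(G(S)) \le \sum_{\{p,q\}\in S}(\|Gp-c\|+\|Gq-c\|) = 2\sum_{p\in P}\|Gp-c\|$. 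Choosing $c=Gc^\star$ for $c^\star$ a $1$-median of $P$ yields, deterministically in $G$, $\tsp(G(P)) \le 2\sum_{p\in P}\|Gp-Gc^\star\|$; and by the same expectation/variance estimate as above — again using subadditivity of the standard deviation, since the terms $\|G(p-c^\star)\|$ are positively correlated and one cannot simply add variances — $\sum_{p}\|Gp-Gc^\star\| \le (1+O(\eps))\operatorname{med}(P)$ with probability $\ge 9/10$. The final ingredient is the purely Euclidean inequality $\operatorname{med}(P) \le \tsp(P)$, which is exactly what converts $2\sqrt2$ into $2$: averaging $\sum_p\|c-p\|$ over $c\in P$ gives $\operatorname{med}(P) \le \tfrac 2n W$, where $W$ is the total Euclidean weight of the complete graph on $P$, and the classical Hamiltonian decomposition of $K_n$ (Walecki: $\lfloor n/2\rfloor$ Hamiltonian cycles, plus one perfect matching when $n$ is even) bounds $W$ by $\lceil n/2\rceil\,\tsp(P)$, using that each Hamiltonian cycle has weight $\le \tsp(P)$ and that $\matching(P)\le\tsp(P)$ (extend a perfect matching to a tour by adding nonnegative-length edges).

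Combining the three estimates gives $\tsp(G(P)) \le 2(1+O(\eps))\tsp(P)$ with probability $\ge 9/10$; a union bound with the lower bound, plus a rescaling of $\eps$, finishes the proof. I expect the main obstacle to be recognizing that the sharp constant $2$ (rather than $2\sqrt2$) must come from the degree-$2$ / $1$-median reduction together with the Euclidean inequality $\operatorname{med}(P)\le\tsp(P)$ proved via the Walecki decomposition; the concentration steps are routine, the only subtlety being to use subadditivity of the standard deviation rather than of the variance.
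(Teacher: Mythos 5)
Your proof is correct, but it takes a genuinely different route from the paper's. The paper's proof is a two-line reduction to an external lemma of Indyk stating that a \emph{uniformly random} tour is a $(2+\eps)$-approximation of $\tsp$ with probability $\Omega(\eps)$: it samples $O(\eps^{-1})$ random tours of $P$, observes that their images are uniformly random tours of $G(P)$ (so with probability $2/3$ one of them $(2+\eps)$-approximates $\tsp(G(P))$ from below), and union-bounds so that the cost of each of these $O(\eps^{-1})$ fixed tours is preserved up to $1+\eps$ --- this union bound is precisely where the $\log\tfrac1\eps$ factor in the target dimension comes from. You instead bound \emph{every} tour of $G(P)$ simultaneously and deterministically by $2\sum_{p}\|Gp-Gc^\star\|$ via the degree-$2$ observation, concentrate that single sum, and close the loop with the purely Euclidean inequality $\operatorname{med}(P)\le\tsp(P)$ obtained by averaging over centers $c\in P$ and decomposing $K_n$ into Hamiltonian cycles (plus a matching for even $n$). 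Your argument is self-contained, needs only $t=O(\eps^{-2})$ with no union bound over tours, and is the natural degree-$2$ analogue of the paper's own $1$-median machinery for matchings (\Cref{thm:sqrt2_for_matching_Euclidean_1dim} and \Cref{lem:median_vs_max_matching}); in effect you replace the open Tverberg-tour question mentioned in this section --- which would give $\operatorname{med}(P)\le\tfrac{\sqrt2}{2}\tsp(P)$ and hence the conjectured $\sqrt2$ --- by the unconditional bound $\operatorname{med}(P)\le\tsp(P)$, which is exactly why you land at $2$ rather than $\sqrt2$. Two small bookkeeping points: Walecki's decomposition of $K_n$ for even $n$ uses $n/2-1$ Hamiltonian cycles plus one perfect matching (not $n/2$ cycles plus a matching), but your stated bound $W\le\lceil n/2\rceil\tsp(P)$ still follows (using $\matching(P)\le\tsp(P)$), and for odd $n$ you should use the exact count $(n-1)/2$ of cycles so that $2W/n\le\tsp(P)$ rather than $(1+\tfrac1n)\tsp(P)$; neither affects the conclusion. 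Your use of subadditivity of the standard deviation (rather than additivity of variances) is indeed necessary here, since the terms $\|G(p-c^\star)\|$ share the same $G$.
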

To prove this theorem, we use the following.
\begin{lemma}[\cite{Indyk99random_tour}]
    For every set $P\subset \R^d$, the cost of a uniformly random tour is a $(2+\eps)$-approximation of $\tsp(P)$ with probability $\Omega(\eps)$.
\end{lemma}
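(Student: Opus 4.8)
The plan is to analyze the random variable $Z$ equal to the cost of a uniformly random tour on $P=\{p_1,\dots,p_n\}$ by first pinning down its expectation and then converting an expectation bound into a probability bound via a reverse-Markov argument. In a uniformly random Hamiltonian cycle every unordered pair $\{p_i,p_j\}$ is adjacent with probability exactly $\tfrac{2}{n-1}$ (there are $n$ cycle edges among $\binom{n}{2}$ equally-likely pairs, by symmetry), so by linearity of expectation
\[
\E[Z] \;=\; \frac{2}{n-1}\sum_{i<j}\|p_i-p_j\| \;=\; \frac{1}{n-1}\sum_{i=1}^n D_i,
\qquad D_i \eqdef \sum_{j}\|p_i-p_j\|.
\]

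Next I would lower bound this expectation by $\tfrac12\tsp(P)$, routing the comparison through the $1$-median rather than comparing the random tour directly against the optimum. Let $\mathrm{med}(P)=\min_{c\in\R^d}\sum_{p\in P}\|c-p\|$ denote the (continuous) $1$-median cost. Two elementary triangle-inequality facts drive the bound. First, taking the center to be $c=p_i$ is one feasible choice for the $1$-median, so $D_i\ge \mathrm{med}(P)$ for every $i$; summing gives $\sum_i D_i\ge n\,\mathrm{med}(P)$ and hence $\E[Z]\ge \tfrac{n}{n-1}\mathrm{med}(P)\ge \mathrm{med}(P)$. Second, for any center $c$ and any tour $T$, each edge $\{p,q\}\in T$ satisfies $\|p-q\|\le\|p-c\|+\|q-c\|$, and since every vertex lies on exactly two tour edges, $\tsp(P)\le 2\sum_{p}\|p-c\|$; taking $c$ to realize the $1$-median gives $\tsp(P)\le 2\,\mathrm{med}(P)$. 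Combining the two yields the key expectation bound $\E[Z]\ge \mathrm{med}(P)\ge \tfrac12\tsp(P)$.

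Finally I would convert $\E[Z]\ge\tfrac12\tsp(P)$ into a lower bound on $\Pr\!\big(Z\ge\tfrac{1}{2+\eps}\tsp(P)\big)$. The extra ingredient needed here is that $Z$ is bounded above by the maximum, $Z\le \tsp(P)\eqdef M$. Setting $a=\tfrac{1}{2+\eps}$ and $q=\Pr(Z\ge aM)$, and splitting the expectation over the events $\{Z<aM\}$ and $\{Z\ge aM\}$ (bounding $Z$ by $aM$ and by $M$ respectively) gives
\[
\tfrac12 M \;\le\; \E[Z] \;\le\; aM(1-q)+Mq \;=\; aM+(1-a)Mq,
\]
so $q\ge \frac{\tfrac12-a}{1-a}=\frac{\eps}{2(1+\eps)}=\Omega(\eps)$, which is exactly the claimed success probability.

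I expect the only subtle point to be the clean factor-$2$ comparison $\E[Z]\ge\tfrac12\tsp(P)$: one must pass through the $1$-median, since a naive edge-by-edge comparison of the random tour against the optimal tour only yields $\E[Z]\ge\frac{2}{n-1}\tsp(P)$ and loses a factor of $n$. Everything else—the per-pair adjacency probability, the triangle inequalities, and the reverse-Markov step—is routine. Note that the argument is purely metric, relying only on the triangle inequality and the definition of the maximum, so it transfers verbatim to the projected point set $G(P)$ as required by the application in \Cref{thm:2_approx_max_tsp_euclid}.
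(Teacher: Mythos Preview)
The paper does not prove this lemma; it is quoted verbatim as a black box from \cite{Indyk99random_tour} and then applied in the proof of \Cref{thm:2_approx_max_tsp_euclid}. So there is no in-paper argument to compare against.

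Your proof is correct and self-contained. The three steps --- computing $\E[Z]=\tfrac{2}{n-1}\sum_{i<j}\|p_i-p_j\|$, sandwiching via the $1$-median to obtain $\E[Z]\ge \mathrm{med}(P)\ge \tfrac12\tsp(P)$, and the reverse-Markov conversion using the deterministic cap $Z\le\tsp(P)$ --- are all valid. The $1$-median detour is a nice touch: it is essentially the same idea the paper exploits elsewhere (\Cref{lem:median_vs_max_matching}) to compare a matching/tour to a star, and it is what saves the factor of $n$ that a naive comparison to the optimal tour would lose. Your closing remark that the argument is purely metric is also to the point, since in the application the lemma is invoked on $G(P)$ rather than on $P$.
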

\begin{proof}[Proof of \Cref{thm:2_approx_max_tsp_euclid}]
    Consider a suitable number $O(\eps^{-1})$ of uniformly random tours. 
    After applying $G$, these are still uniformly random tours. Choose the number of random tours so with probability $2/3$, one of them is a $(2+\eps)$-approximation of $\tsp(G(P))$.
    By our choice of the target dimension and a union bound, with probability at least $2/3$, the cost of all of these tours is preserved up to factor $1+\eps$. This concludes the proof.
\end{proof}
\section{Additional Experimental Results}\label{appendix:experiments}

\begin{table}[!h]
\centering
{\renewcommand{\arraystretch}{1.5}

\begin{tabular}{c|c|c}
\textbf{Problem}                  & \textbf{Dataset} & \textbf{\begin{tabular}[c]{@{}c@{}}Speedup by Computing\\  in Projected Space\end{tabular}} \\ \hline
\multirow{3}{*}{Maximum-Matching} & MNIST            & {\ul 11.41x}                                                                                \\
                                  & CIFAR            & {\ul 75.07x}                                                                                \\
                                  & Basis Vectors    & {\ul 10.98x}                                                                                \\ \hline
\multirow{3}{*}{Remote Clique}    & MNIST            & {\ul 6.01x}                                                                                 \\
                                  & CIFAR            & {\ul 38.14x}                                                                                \\
                                  & Basis Vectors    & {\ul 7.50x}                                                                                 \\ \hline
\multirow{3}{*}{Maximum Coverage} & MNIST            & {\ul 13.73x}                                                                                \\
                                  & CIFAR            & {\ul 121.34x}                                                                               \\
                                  & Basis Vectors    & {\ul 17.37x}                                                                               
\end{tabular}

}
\caption{Speedups obtained by projecting the point sets onto dimension $20$, compared to optimizing in the original ambient dimension. For Remote Clique and Max Coverage, we set the value of $k = 10$, although the same qualitative results hold for any $k$. We average over 10 trials. \label{table:speedup}}
\end{table}

See Table \ref{table:speedup} for speedups obtained by dimensionality reduction for the problems we study, as well as 
\Cref{fig:kclique_10,fig:kclique_20,fig:coverage_10,fig:coverage_20} 
for omitted figures from Section \ref{sec:experiments}.

\begin{figure*}[h]
\centering
\includegraphics[width=5.5cm]{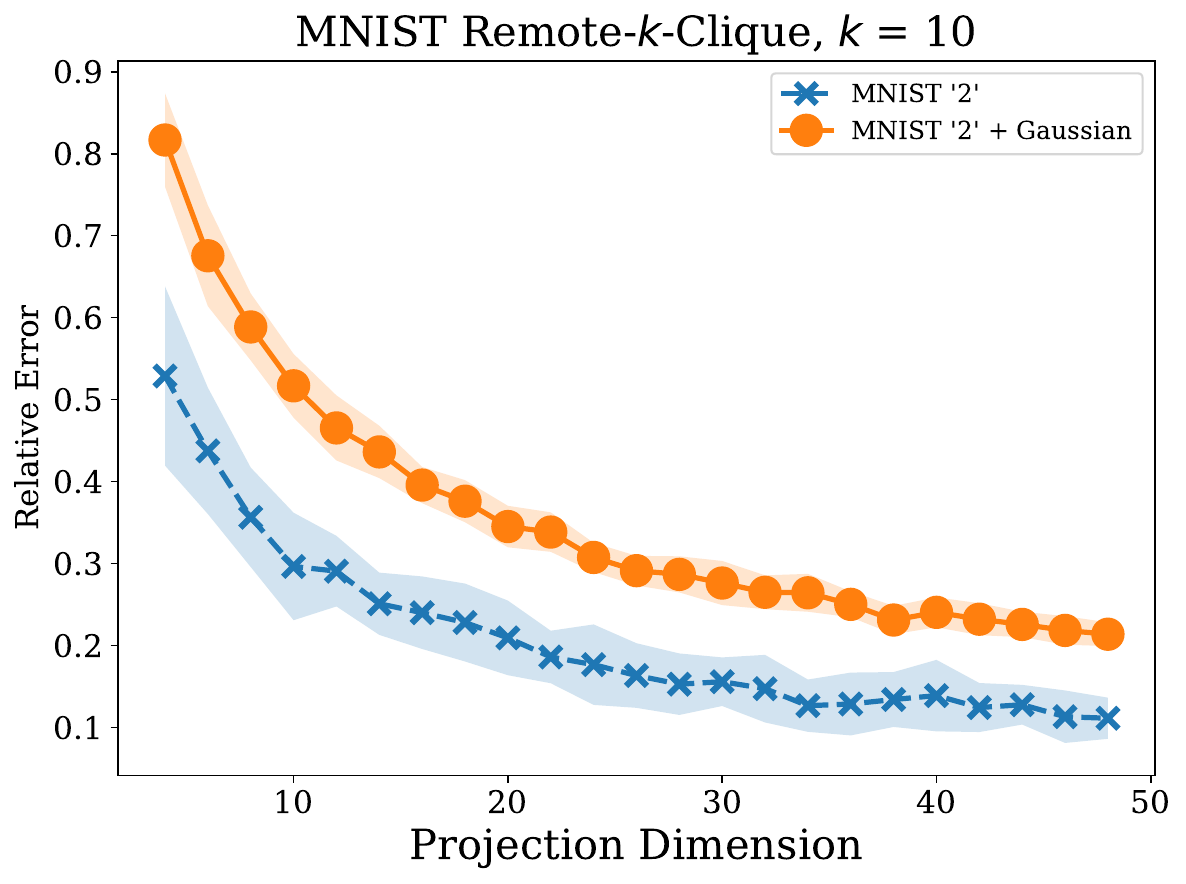} 
\includegraphics[width=5.5cm]{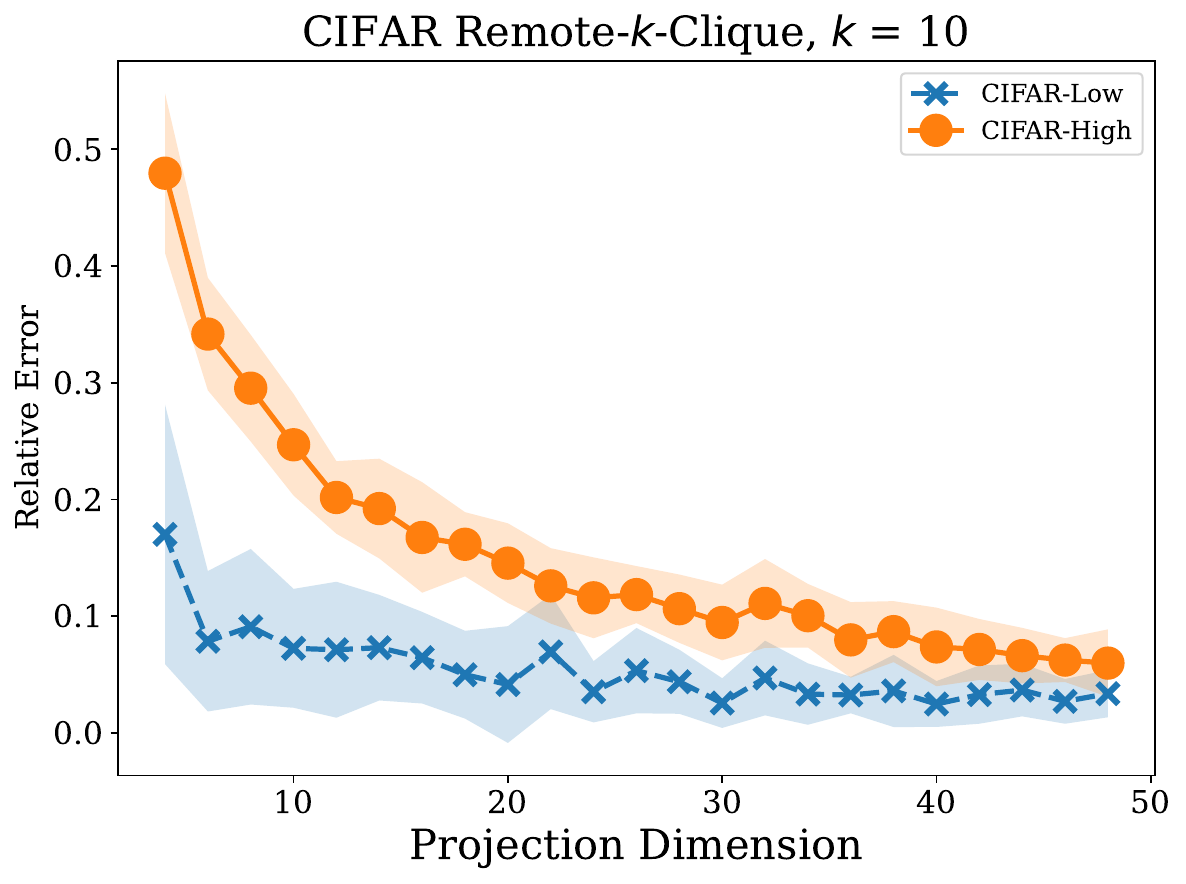}
\includegraphics[width=5.5cm]{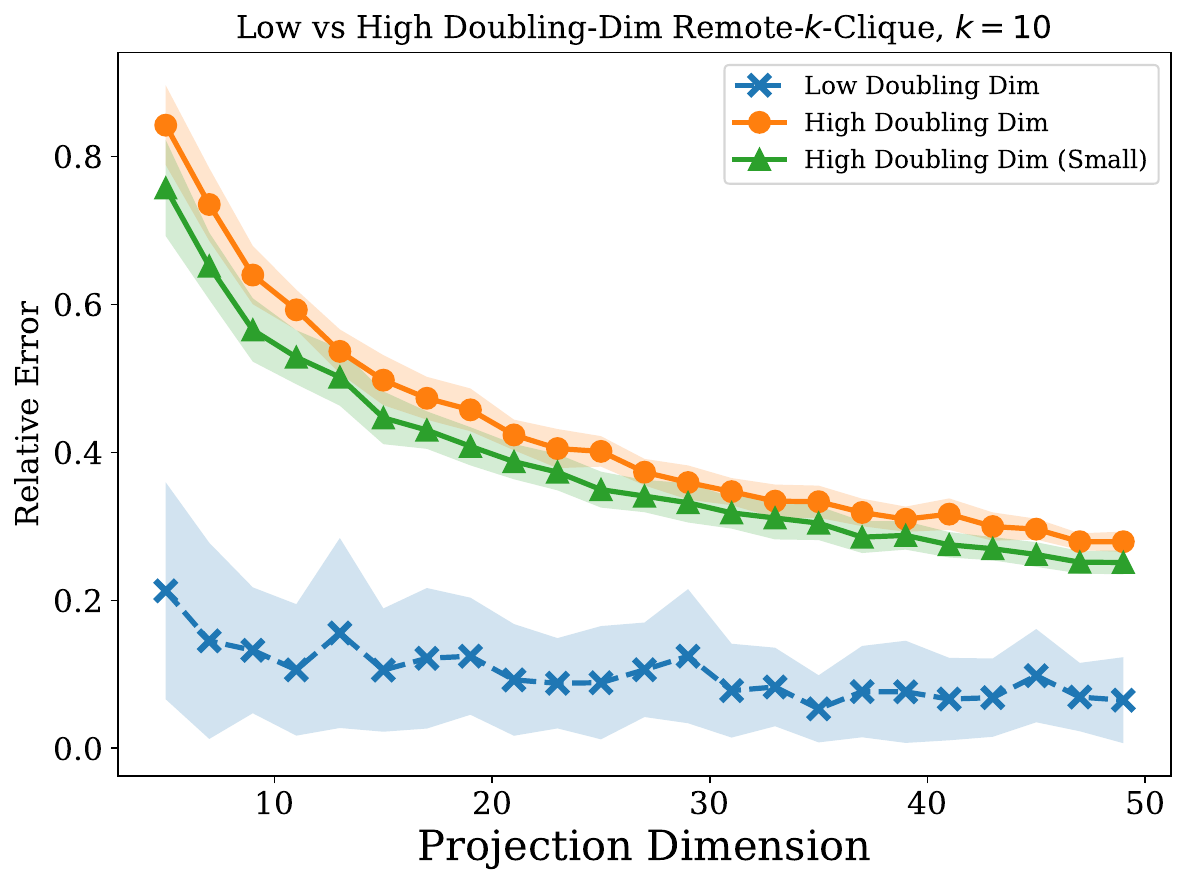}
\caption{Relative error versus projection dimension for remote-$k$-clique. We set $k$ = 10 here (see Figure \ref{fig:kclique_20} for $k = 20$). \label{fig:kclique_10}}
\end{figure*}

\begin{figure*}
\centering
\includegraphics[width=5.5cm]{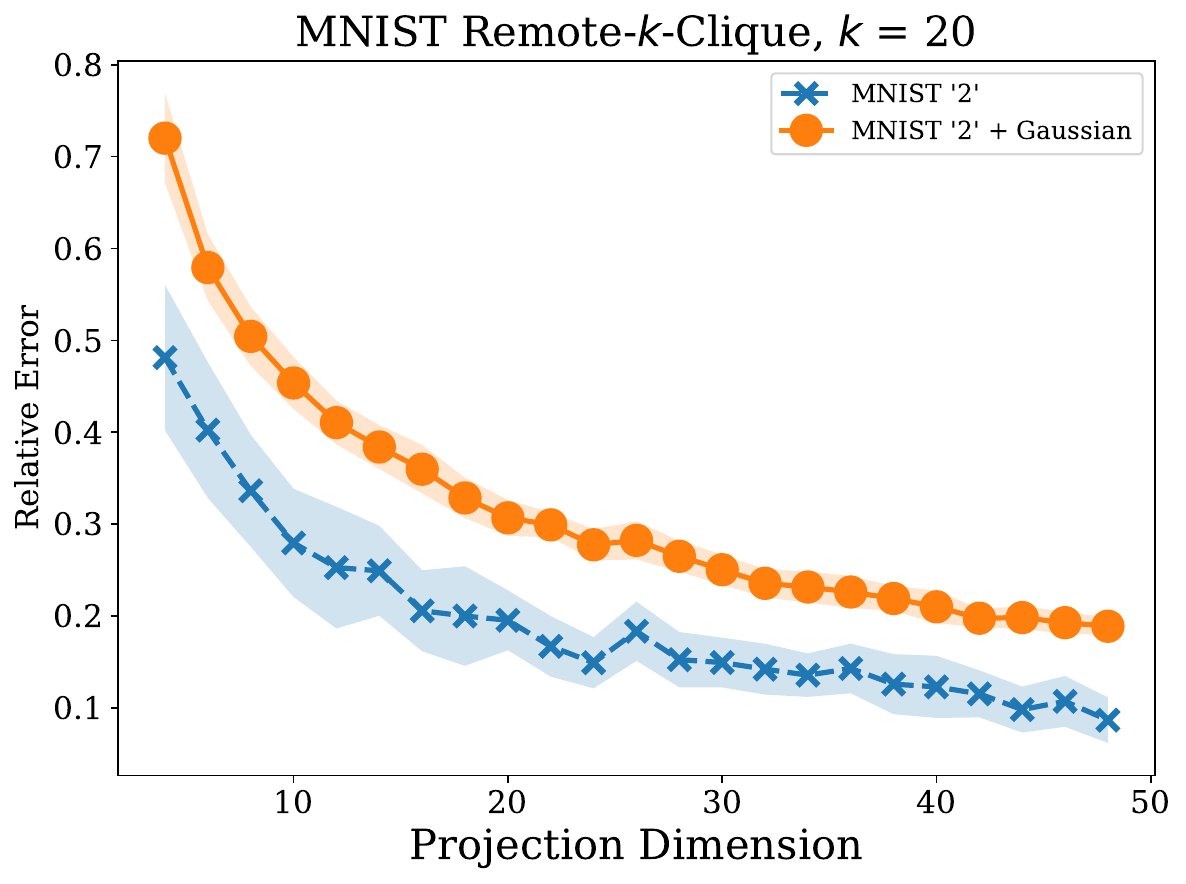} 
\includegraphics[width=5.5cm]{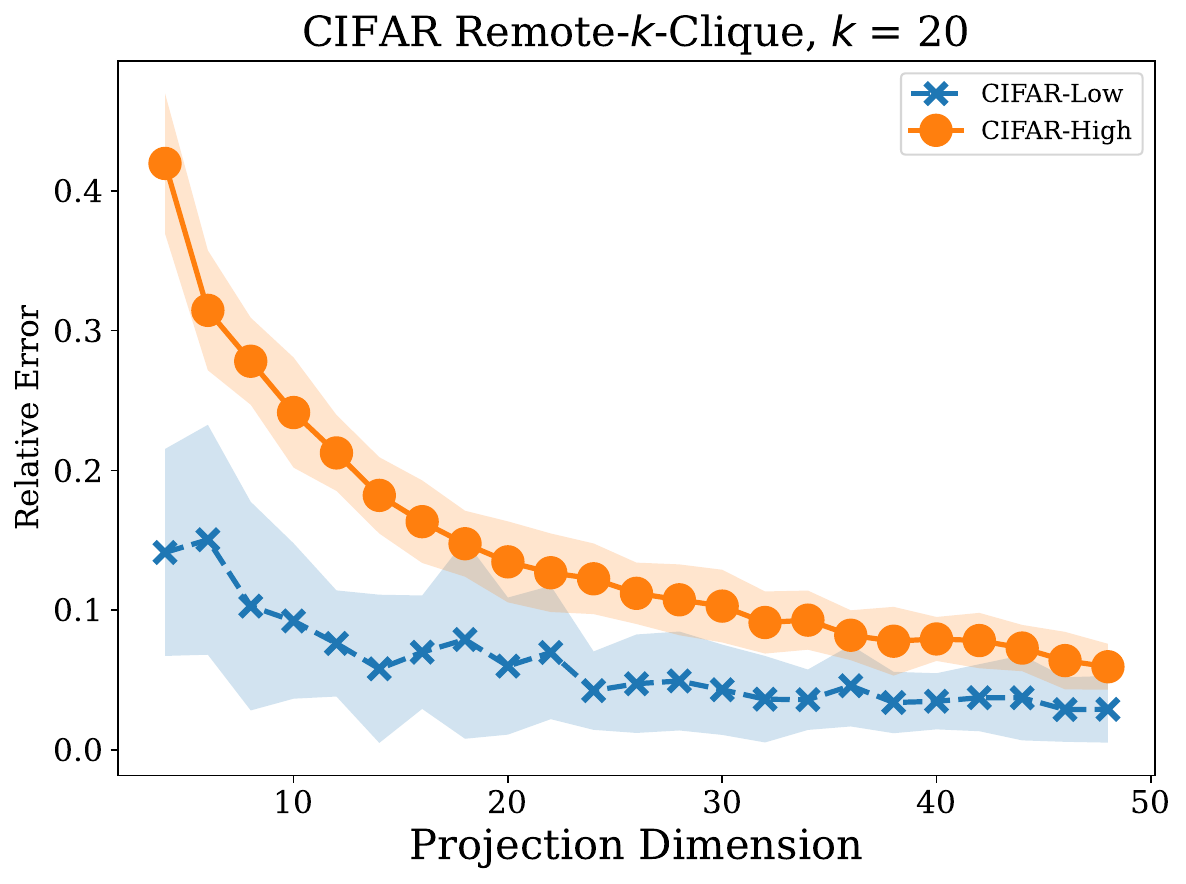}
\includegraphics[width=5.5cm]{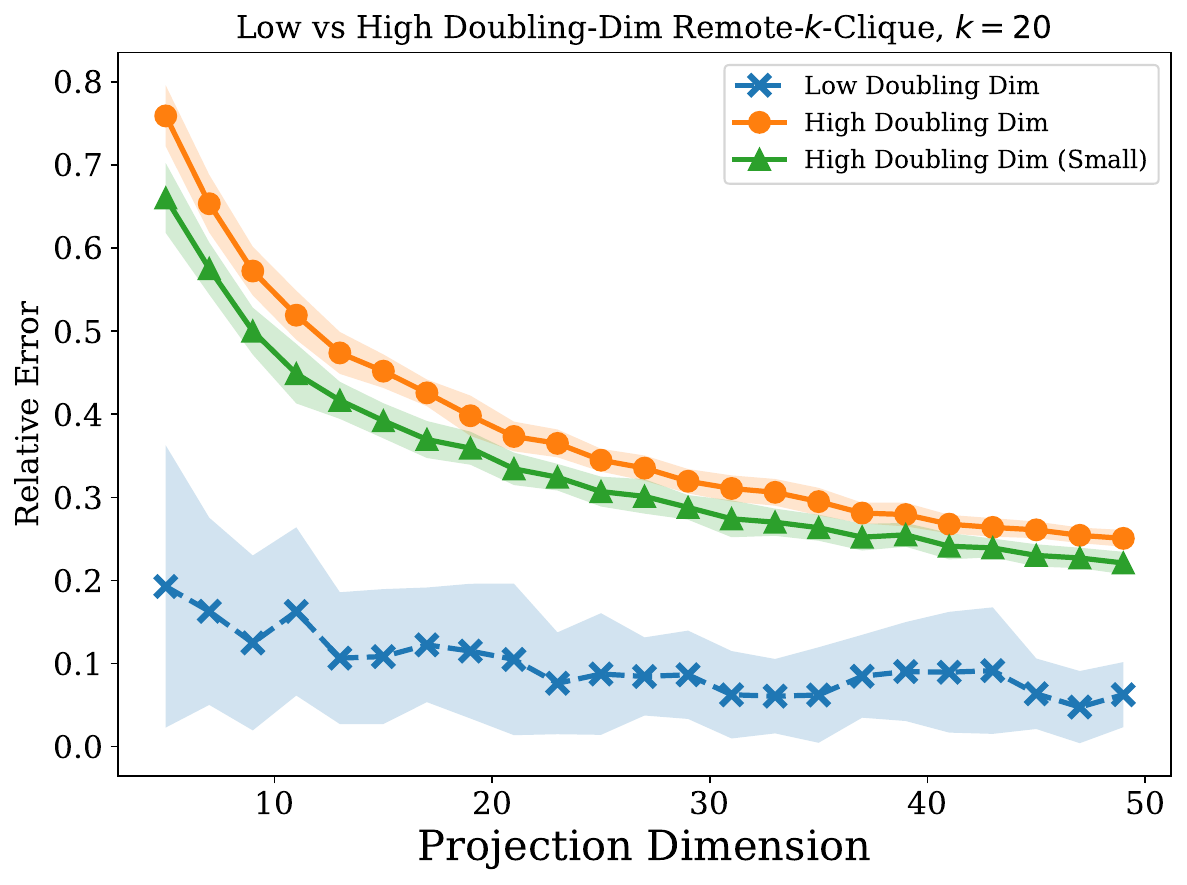}
\caption{Relative error versus projection dimension for remote-$k$-clique. We set $k$ = 20.\label{fig:kclique_20}}
\vspace{4mm}
\includegraphics[width=5.5cm]{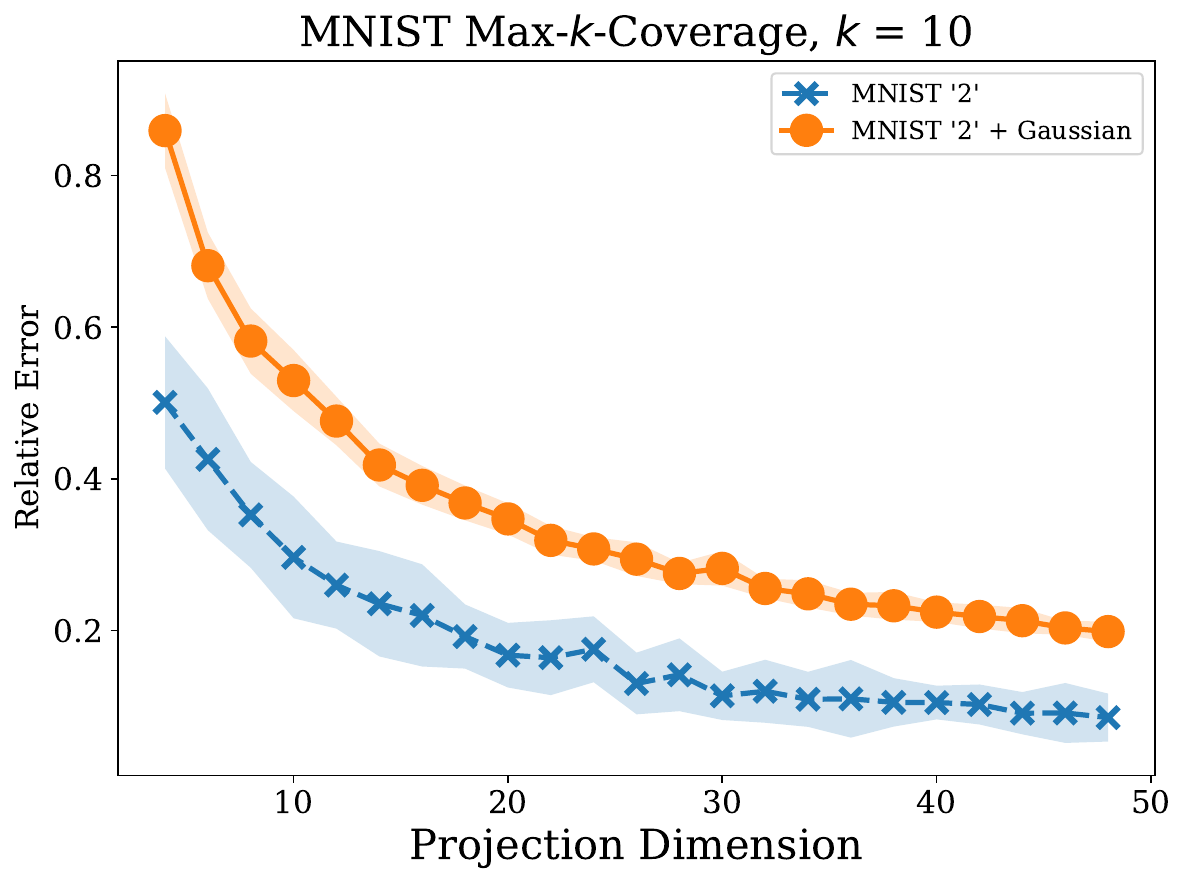} 
\includegraphics[width=5.5cm]{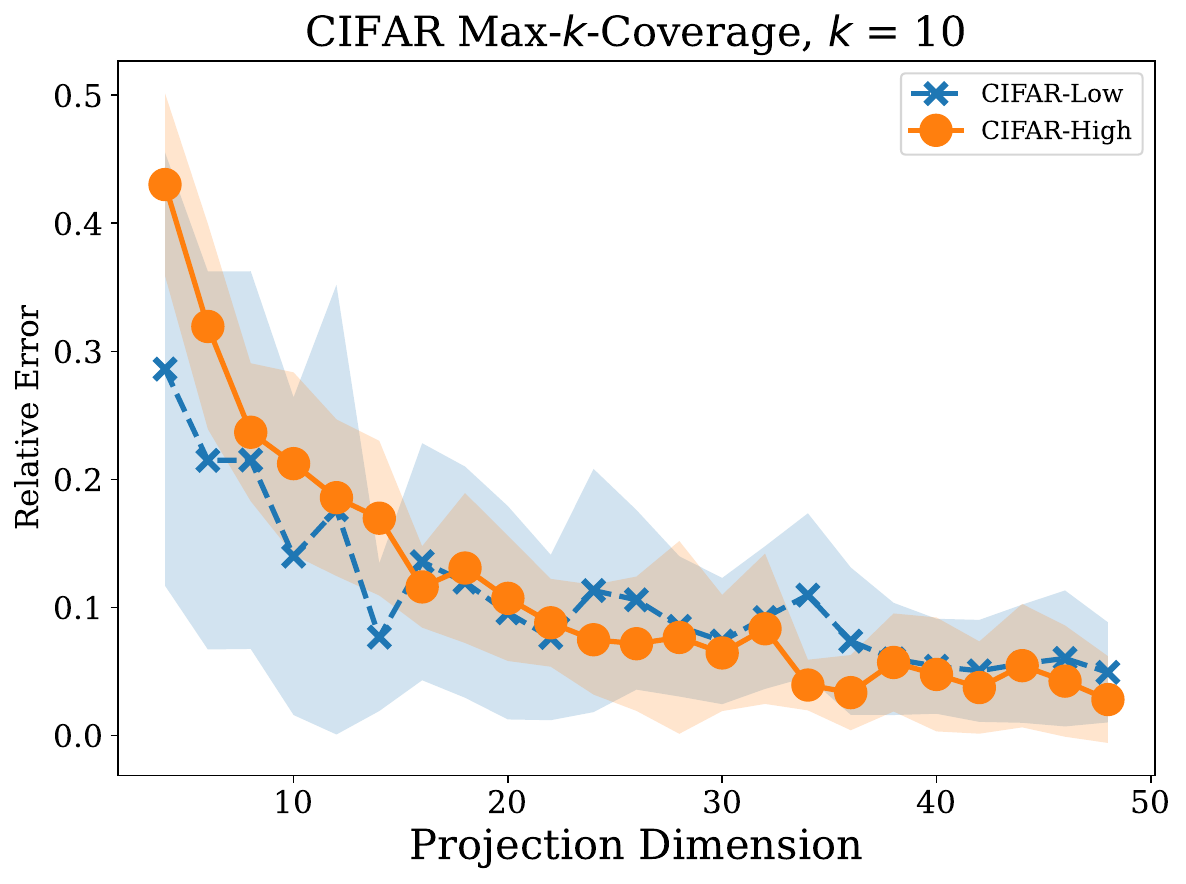}
\includegraphics[width=5.5cm]{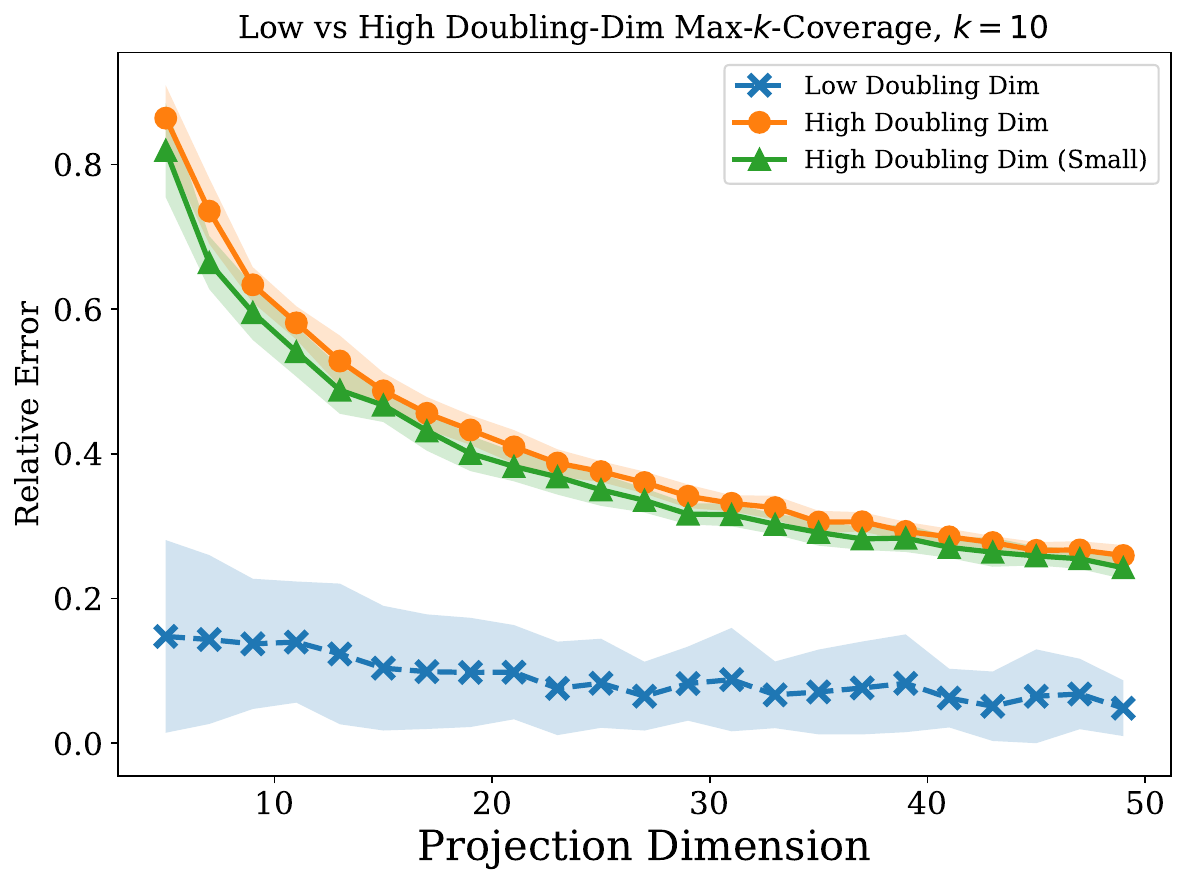}
\caption{Relative error versus projection dimension for max-coverage. We set $k$ = 10. \label{fig:coverage_10}}
\vspace{4mm}
\includegraphics[width=5.5cm]{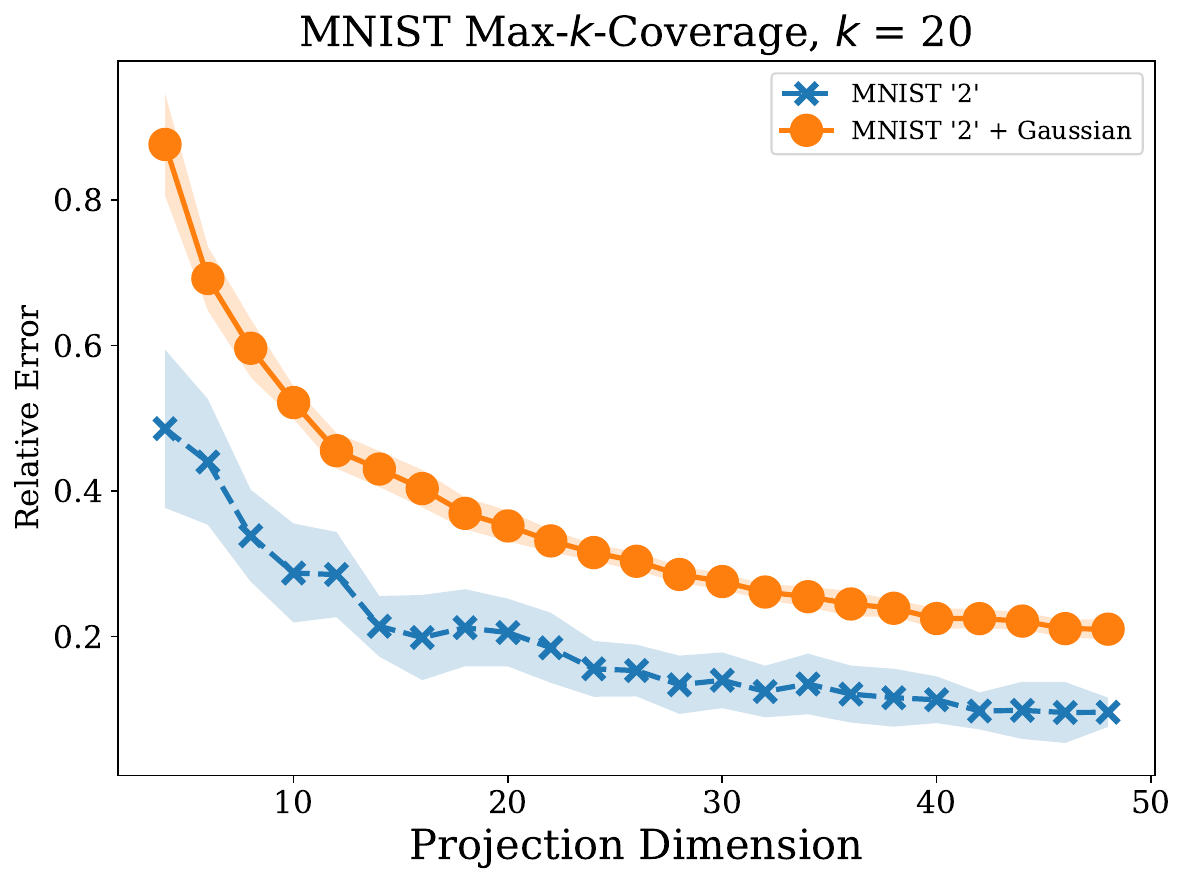} 
\includegraphics[width=5.5cm]{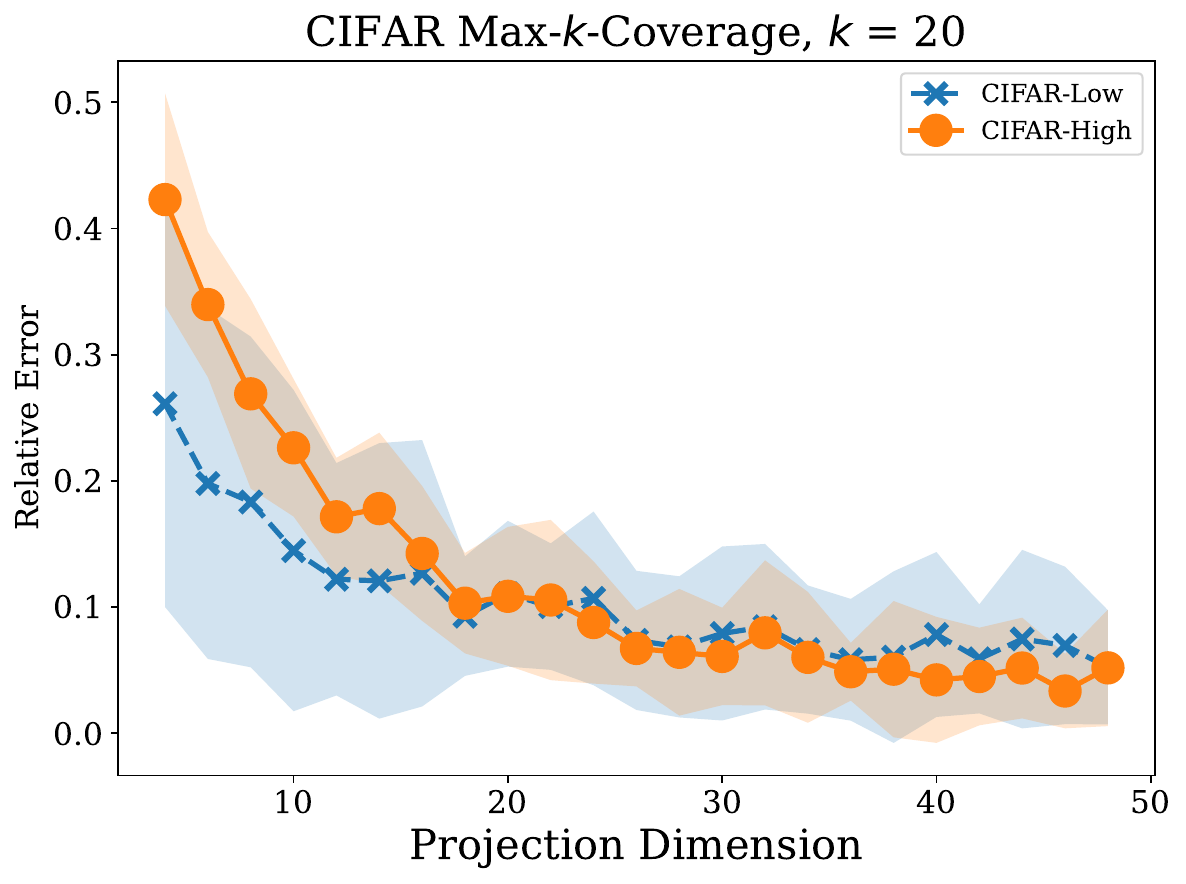}
\includegraphics[width=5.5cm]{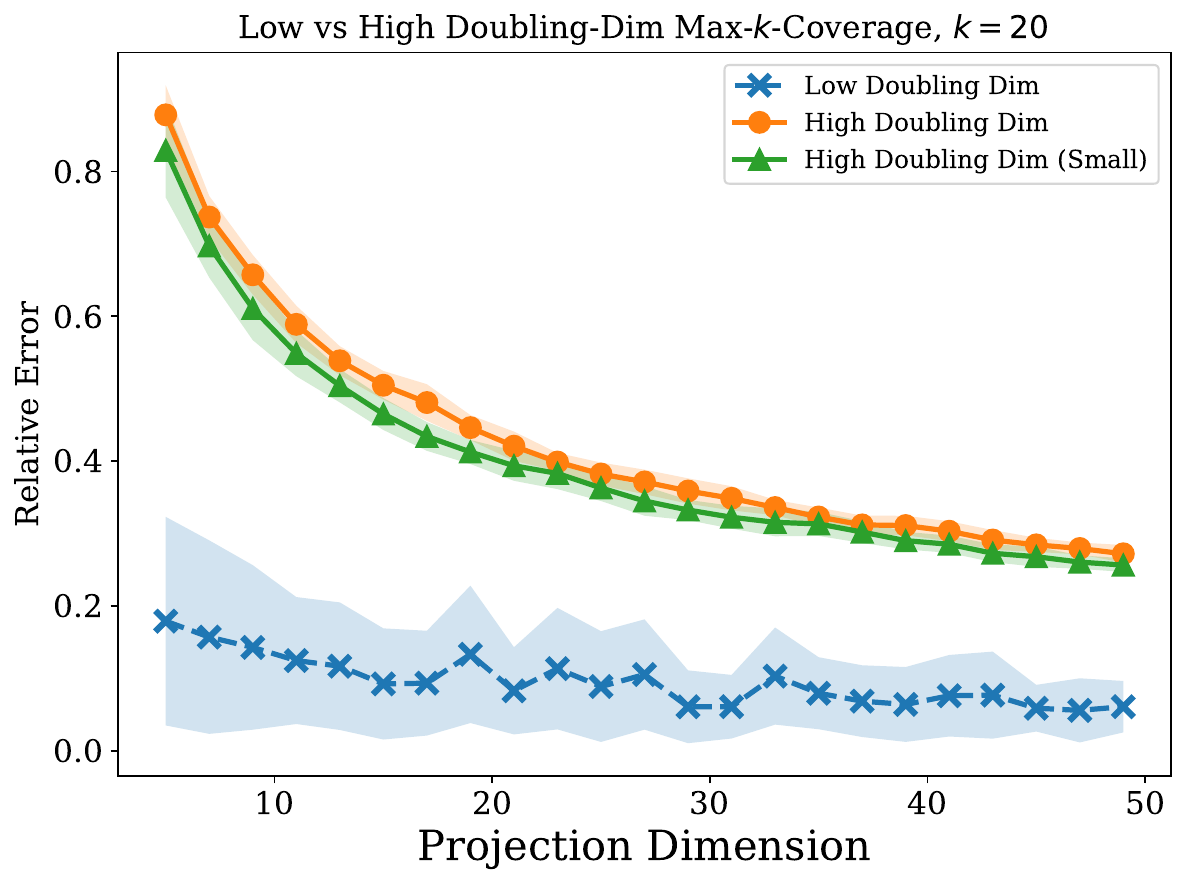}
\caption{Relative error versus projection dimension for max-coverage. We set $k$ = 20. \label{fig:coverage_20}}
\end{figure*}

\end{document}